\newif\if@restonecol
\newcommand{\gconst}[1]{{\ensuremath{\mathbf{#1}}}}
\newcommand{\gqed}{}
\newtheorem{assumption}{Assumption}
\newtheorem{definition}{Definition}
\newtheorem{theorem}{Theorem}
\newtheorem{lemma}{Lemma}
\newtheorem{corollary}{Corollary}
\newtheorem{remark}{Remark}
\begin{document}

\begin{titlepage}
  \title{Self-Stabilizing Paxos\thanks{Regular Paper, Eligible for
      Best Student Paper Award. The paper should also be considered
      for the Brief Announcement.}}
  \author{ Peva
    Blanchard~\footnote{PhD student, LRI, University of Paris-Sud XI,
      Orsay, France.  Email: \texttt{blanchard@lri.fr}}\and Shlomi
    Dolev~\footnote{Dept. of Computer Science, Ben-Gurion Univ. of the
      Negev, Beer-Sheva, 84105, Israel.  Email:
      \texttt{dolev@cs.bgu.ac.il}.  Partially supported by Deutsche
      Telekom, Rita Altura Trust Chair in Computer Sciences, Intel,
      MFAT, MAGNET and Lynne and William Frankel Center for Computer
      Sciences.}\and Joffroy Beauquier~\footnote{LRI, University of
      Paris-Sud XI, Orsay, France.  Email: \texttt{jb@lri.fr}}\and
    Sylvie Delaët~\footnote{LRI, University of Paris-Sud XI, Orsay,
      France.  Email: \texttt{delaet@lri.fr}} }
\end{titlepage}

\date{}
\maketitle

\begin{abstract}
  We present the first self-stabilizing consensus and replicated state
  machine for asynchronous message passing systems.  The scheme does
  not require that all participants make a certain number of steps
  prior to reaching a practically infinite execution where the
  replicated state machine exhibits the desired behavior.  In other
  words, the system reaches a configuration from which it operates
  according to the specified requirements of the replicated
  state-machine, for a long enough execution regarding all practical
  considerations.
\end{abstract}


\thispagestyle{empty}

\newpage
\setcounter{page}{1}
\vspace*{-.2cm}
\section{Introduction}
\label{sec:introduction}
\vspace*{-.2cm}

One of the most influential results in distributed computing is Paxos \cite{lamportPartTimeParliament,lamport01paxos},
where repeated asynchronous consensus is used to replicate a state
machine using several physical machines.  The task is to use
asynchronous consensus, where safety is guaranteed, and liveness is
almost always achieved by using an unreliable failure detector\footnote{Note though that Paxos does not rely explicitly on a failure detector.}, to
implement an abstraction of very reliable state machine on top
of physical machines that can crash, though usually at least several
machines stay alive at any particular moment.  The extreme usefulness
of such an approach is proven daily by the usage of this technique, by
the very leading companies, to ensure their availability and
functionality.

Unfortunately Paxos is not self-stabilizing and therefore a single
transient fault may lead the system to stop functioning even when all
the cluster machines operate. One example is the corruption of the
time-stamp used to order operations in Paxos, where a single
corruption of the value of this counter to the maximal value will
cause the system to be blocked. In another scope, the occurrence of
transient fault with the same nature caused the Internet to be blocked
for a while \cite{Ros81}.  
Self-stabilization is a property that every
on-going system should have, as self-stabilizing systems automatically
recover from unanticipated states, i.e., states that have been reached due to
insufficient error detection in messages, changes of bit values in
memory \cite{DHSoftErrors}, and in fact any temporary violation in the
assumptions made for the system to operate correctly. The approach is
comprehensive, rather than addressing specific fault scenarios
(risking to miss a scenario that will later appear), the designer
considers every possible configuration of the system, where
configuration is a cartesian product of the possible values of the
variables. Then the designer has to prove that from every such a
configuration, the system converges to exhibit the desired behavior.

Self-stabilizing systems do not rely on the consistency of a
predefined initial configuration and the application of correct steps
thereafter. In contrast, self-stabilizing systems assume that the
consistency can be broken along the execution and need to
automatically recover thereafter. The designers assume an arbitrary
configuration and prove convergence, not because they would like the
system to be started in an arbitrary configuration, but because they
are aware that the specified initial configuration and the defined
steps consistency maybe temporarily broken, and would like the system
to regain consistency. Of course, such an approach does not guarantee
the convergence under a infinite stream of transient faults, which is
clearly impossible, but guarantees the system recovery after the last
transient fault. Therefore, although the system may lose safety
properties, the safety is automatically regained, leading to a safer
behavior than of non-stabilizing systems, namely, initially safe and
eventually safe \cite{DDPSafeEventuallySafe}. 

Self-stabilizing consensus and replicated state machine for shared
memory system appeared in \cite{dolev2010SelfStabConsensusSharedMem},
the case of message passing being left to future investigation.  One
approach to gain a self-stabilizing consensus and replicated state
machine in message passing is to implement the read-write registers
used in \cite{dolev2010SelfStabConsensusSharedMem}, using message
passing\footnote{A suggestion made by Eli Gafni.}. A self-stabilizing
implementation of such a single-writer multiple-reader register
appeared in \cite{DBLP:conf/sss/AlonADDPT11}. Unfortunately, the
implementation had to assume that the writer is active forever. Thus,
the implementation of self-stabilizing Paxos under the original
assumptions was left open.  In this paper we present the first
self-stabilizing Paxos in message passing systems.  One ingredient of
the self-stabilizing Paxos algorithm is a recent construction of a
self-stabilizing bounded time-stamp \cite{DBLP:conf/sss/AlonADDPT11}.
Note that the classical bounded time-stamp systems
\cite{Dolev:1997:BCT:249364.249372,Israeli:1993:BT:1080512.1080514}
{\em cannot} be started with an arbitrary set of label values as the
ordering is defined only for certain combination of labels (and
missing labels).  Such restricted combinations can be preserved by
non-stabilizing algorithms as long as transient faults do not occur.
Another bounded weak time-stamp \cite{Abraham:2003:ST:964895.964909}
was designed for particular shared memory self-stabilizing systems,
where participants have access to shared memory of others (while we
deal with message passing where many label values can be in messages
in transient). This bounded time-stamp allows a limited new-old
version inversion, and therefore does not guarantee the eventual
strict ordering of events, ordering that a replica state machine
should (eventually) promise.  Obtaining a self-stabilizing Paxos
requires to cope with many aspects including a way to  compose
self-stabilizing bounded time stamps, such that each time stamp is
governed by a distinct participant. In particular, the algorithm also needs
to cope with an arbitrary set of messages that are stored in the
system, as we demonstrate in the sequel. The paper starts with a
background and description of techniques and correctness in a
nutshell. Then we turn to a more formal and detailed description.

\vspace*{-.2cm}
\section{Self-Stabilizing Paxos Overview}
\label{sec:towards_practss}
\vspace*{-.2cm}

In this section, we define the Repeated Consensus Problem, show how it
can be used to implement a self-stabilizing replicated state machine
and give an overview of the Paxos Algorithm.  In addition, we give
arguments for the need of a self-stabilizing algorithm that would
solve the Repeated Consensus Problem.  Doing so, we investigate a new
kind of self-stabilizing behaviour, namely the \emph{practically
  self-stabilizing} behaviour. Here and everywhere, the semantical synonym for
  practically self-stabilizing is essentially self-stabilizing.

\paragraph{Repeated Consensus.}
\label{subsec:repeated_consensus}

The processors have to perform successive instances of consensus on
values proposed by some of them.  Every processor is assumed to have
an integer variable $s$, namely the step variable, that denotes the
current consensus instance it is involved in.  In each consensus
instance, processors decide on a value.  For example, in the context
of replicated state machines, the step variable denotes the current
step of the state machine, and at each step, a processor may decide to
apply a command to its copy of the state machine.  Processors may have
different views on what is the current step since some of them may
have progressed faster than others.  The  Repeated Consensus
Problem is defined by the following conditions: \emph{(Safety)} for
every step $s$, if two processors decide on values in step $s$, then
the two decided values must be equal, \emph{(Integrity)} for every
$s$, if a processor decides on some value, then this must have been
proposed in step $s$, \emph{(Liveness)} every non-crashed processor
decides in infinitely many steps.

\paragraph{Original Paxos.}
\label{subsec:original_paxos}

The original Paxos algorithm guarantees the safety and the integrity
property in an asynchronous complete network of processors
communicating by message-passing such that less than half of the
processors are prone to crash failures.  The algorithm uses unbounded
integers and also assumes that the system starts in a consistent
initial configuration.  To guarantee the liveness property, additional
assumptions must be made as discussed below.  The Paxos algorithm
defines three roles: \emph{proposer}, \emph{acceptor} and
\emph{learner}.  The proposer basically tries to impose a consensus
value for its current step\footnote{Note that there might be more than
  one proposer in each step.}. The acceptor accepts consensus values
according to some specific rules.  A value can be decided on for step
$s$ when a majority of acceptors have accepted it in step
$s$. Finally, the learner learns when some value has been accepted by
a majority of acceptors for some step and decides accordingly.  Here,
we assume that every processor is a learner and an acceptor, while
some processors can also be proposers.  Every proposer has its own
idea of what should the value be for step $s$.  For each step $s$, a
proposer executes one trial, or more, to impose some consensus
value. Thus, each processor maintains a Paxos tag, namely a couple
$(s~t)$ where $s$ denotes a step, i.e., a consensus instance, and $t$
a trial within this step. The Paxos algorithm assumes that all the
step variables and the trial variables are natural integers, hence
unbounded, and initially set to zero.  The Paxos tags are used to
timestamp the proposals emitted by the proposers or accepted by the
acceptors. To impose a proposal with tag $(s~t)$, a proposer must
first (phase 1) reads the most recent accepted proposal from a
majority of acceptors and try to impose its tag $(s~t)$ as the
greatest tag on this majority of acceptors; knowing that an acceptor
adopts the tag $(s~t)$ if it is strictly greater than its
own. Secondly (phase 2), the proposer tries to make a majority of
acceptors accept the previous read consensus value if it is not null,
or its own value otherwise; knowing that an acceptor accepts the
proposal if the tag $(s~t)$ is greater than or equal to its own. If
the proposer suceeds in these two phases, it decides on the proposal
and notifies the other processors.

The integrity property is guaranteed by the fact that a decided value
always comes from a proposer in the system. The difficulty lies in
proving that the safety property is ensured.  Roughly speaking, the
safety correctness is yielded by the claim that once a proposer has
succeeded to complete the second phase, the consensus value is not
changed afterwards for the corresponding step.  Ordering of events in
a common processor that answers two proposers yields the detailed
argument, and the existence of such a common processor stems from the
fact that any two majorities of acceptors always have non-empty
intersection.
The liveness property, however, is not guaranteed \cite{FLP85}. However, a close look at the behaviour of Paxos shows
that only the liveness property cannot be guaranteed and why it is so.
Indeed, since every proposer tries to produce a tag that is
 greater than the tags of a majority of  acceptor, two
such proposers may execute many trials for the same step without ever
succeeding to complete a phase two.  
Intuitively though,
it is clear that if, for any step, there is a single proposer in the system during a
long enough period of time, then the processors eventually decide in that step.

\paragraph{Self-Stabilizing Paxos.}
\label{subsec:practss}

As we pointed out in the previous section, the Paxos algorithm uses
unbounded integers to tag data.  In practice, however, every integer
handled by the processors is bounded by some constant
$2^{\mathfrak{b}}$ where $\mathfrak{b}$ is the integer memory size.
Yet, if every integer variable is initialized to a very low value, the
time needed for any such variable to reach the maximum value
$2^{\mathfrak{b}}$ is actually way larger than any reasonable system's
timescale.  For instance, counting from $0$ to $2^{64}$ by
incrementing every nanosecond takes roughly $500$ years to complete.
Such a long sequence is said to be practically infinite.  
This leads to the following important remark from which the current work stems.
\begin{remark}[Paxos and Bounded Integers]
  \label{rem:paxos_bint}
  Assuming that the integers are theoretically unbounded is reasonable
  only when it is ensured, in practice, that every step and trial
  variables are initially set to low values, compared to the maximum
  value. In particular, any initialized execution of the Paxos
  algorithm with bounded integers is valid as long as the counters are
  not exhausted.
\end{remark}

In the context of self-stabilization, however, a transient fault may
produce fake decision messages in the communication channels, or make
an acceptor accepting a consensus value that was not proposed. Such
transient faults only break the  Repeated Consensus conditions
punctually and nothing can be done except waiting.  However, a
transient fault may also corrupt the Paxos step and trial variables in
the processors memory or in the communication channels, and set them
to a value close to the maximum value $2^{\mathfrak{b}}$.  This leads
to an infinite suffix of execution in which the  Repeated Consensus
conditions are never jointly satisfied. This issue is much more worrying
than punctual breakings of the  Repeated Consensus specifications.
Intuitively though, if one can manage to get every integer variable
(step and trials) to be reset to low values at some point in
time, then there is consequently a finite execution (ending with step
or trial variables reaching the maximum value $2^{\mathfrak{b}}$)
during which the system behaves like an initialized original Paxos
execution that satisfies the  Repeated Consensus Problem
conditions\footnote{Modulo the unavoidable punctual breakings due to,
  e.g., fake decision messages.}.  Since we use bounded integers, we
cannot prove the safe execution to be infinite, but we can prove that this
safe execution is as long as counting from $0$ to $2^{\mathfrak{b}}$, which
is as long as the length of an initialized and safe execution assumed in the
original Paxos prior to exhausting the counters (cd Remark~\ref{rem:paxos_bint}).  This is what we call
a \emph{practically self-stabilizing} behaviour.


Replicated state machines have to perform steps that are commonly
decided. In the original Paxos, decisions on steps at a processor may
be learned out of order, but eventually every decision arrives, and
therefore the processor can also perform locally the agreed upon steps in a
sequence.  To avoid gaps in the sequence of agreed upon steps, it is
possible to use the Generalized Paxos approach
\cite{Lamport05generalizedconsensus}, where decisions are made on the
entire known sequence of steps, together with the new proposed step.
In the case of self-stabilization and when there is a need for
(eventual) identical steps execution by each participant, rather than
merely only a simulation of a global robust virtual state machine, the
decision subject is histories rather than the last state and next
step.  We mainly focus on the repeated
consensus version that can decide on the last state of the replicated
state machine and the next step, and then detail in Appendix
\ref{app:generalized_paxos} the very few modifications needed to
obtain the Generalized Self-Stabilizing Paxos.

The repeated consensus on both the current state and the step
requires, on the one hand, more communication, but on the other hand, addresses
a long standing technicality of memory garbage collection from the
array used to accumulate decided steps, as the decided
last current step encapsulates all step history prior to its
execution. The proposers always proposes a step that immediately
follows the last decided state it knows, and does not propose a new
step before deciding, or learning about a decision on this or a
subsequent state and step.

\vspace*{-.2cm}
\section{System Settings}
\label{sec:model}
\vspace*{-.2cm}

All the basic notions we use (state, configuration, execution,
asynchrony, \dots) can be found in, e.g.,
\cite{dolev2000self,Lynch:1996:DA:525656}.  Here, the model we work with is
given by a system of $\gconst{n}$ asynchronous processors in a
complete communication network. Each communication channel between two
processors is a bidirectional asynchronous communication channel of
finite capacity $\gconst{C}$ \cite{DBLP:conf/sss/DolevHSS12}.  Every
processor has a unique identifier and the set $\Pi$ of identifiers is
totally ordered. If $\alpha$ and $\beta$ are two processor
identifiers, the couple $(\alpha,\beta)$ denotes the communication
channel between $\alpha$ and $\beta$.  A configuration is the vector
of states of every processor and communication channel. If $\gamma$ is
a configuration of the system, we note $\gamma(\alpha)$
(resp. $\gamma(\alpha,\beta)$) for the state of the processor $\alpha$
(resp. the communication channel $(\alpha,\beta)$) in the
configuration $\gamma$.  We informally\footnote{For a formal
  definition, refer to, e.g., \cite{dolev2000self,Lynch:1996:DA:525656}.} define
an event as the sending or reception of a message at a processor or as
a local state transition at a processor. Given a configuration, an
event induces a transition to a new configuration. An execution is
denoted by a sequence of configurations $(\gamma_k)_{0 \leq k < T}$,
$T \in \mathbb{N}\cup\{+\infty\}$ related by such
transitions\footnote{For sake of simplicity, the events and the
  transitions are omitted.}. A local execution at processor $\lambda$
is the sequence of states obtained as the projection of an execution
on $\lambda$.  The initial configuration of every execution is
arbitrary and at most $\gconst{f}$ processors are prone to crash
failures.  A quorum is any set of at least $\gconst{n}-\gconst{f}$
processors.  For any execution $E$, we note $Live(E)$ the set of
processors that do not crash during $E$, and we note $Crashed(E)$ the
complement of $Live(E)$. We make the following resilience assumption.
\begin{assumption}[Resilience]
  \label{hyp:resilience}
  The maximum number of crash failures $\gconst{f}$ satisfies
  $\gconst{n} \geq 2\cdot\gconst{f}+1$. Thus, there always exists a responding majority quorum
  and any two quorums have a non-empty intersection.
\end{assumption}
We also use the ``happened-before'' strict partial order introduced by
Lamport \cite{DBLP:journals/cacm/Lamport78}.  In our case, we note $e
\leadsto f$ and we say that $e$ happens before $f$, or $f$ happens
after\footnote{Note that the sentences ``$f$ happens after $e$'' and
  ``$e$ does not happen before $f$'' are not equivalent.}  $e$.
In addition, every processor has access to a read-only boolean
variable $\Theta_{\alpha}$, e.g., from an unreliable failure detector
(Section~\ref{sec:liveness}) that satisfies the following condition.
\begin{assumption}[Module $\Theta$]
  \label{hyp:module_theta}
  For every infinite execution $E_{\infty} = (\gamma_k)_{k \in
    \mathbb{N}}$, there is a non-empty set $\mathcal{P}(E_{\infty})$,
  namely the \emph{proposers} in $E_{\infty}$, of processors in
  $Live(E_{\infty})$, such that, for every processor $\lambda$ in $\mathcal{P}(E_{\infty})$, the value of $\Theta_{\lambda}$ is always \gconst{true}, 
  and for every live processor $\mu$ not in $\mathcal{P}(E_{\infty})$, the value of $\Theta_{\mu}$ is eventually always \gconst{false}.
\end{assumption}
Note that this module is extremely weak in the sense that it simply
guarantees that at least one proposer is active. This proposer is not
required to be unique in order for our algorithm to stabilize. A
unique proposer is required only for the liveness of Paxos.


\vspace*{-.2cm}
\section{Tag System Overview}
\label{sec:tag_system}
\vspace*{-.2cm}

This section presents the tag system used in our algorithm. For
didactic reasons, we first describe a simpler tag system that works
when there is a single proposer, before adapting it to the case of
multiple proposers. Formal definitions of bounded integers, labels and
tags are given in Appendix~\ref{app:tag_system_definitions}.

\paragraph{Single Proposer.}
\label{subsec:tag_system_single_proposer}

We start by looking at Paxos tags $(s~t)$ where the step $s$ and trial
$t$ variables are integers bounded by a large constant
$2^{\mathfrak{b}}$. Assume, for now, that there is a single proposer in
the system, and let's focus on its tag.  The goal of this proposer is
to succeed in imposing a consensus value for every step ranging from
$0$ to $2^{\mathfrak{b}}$, or at least from a low step value to a very
high step value.  The proposer can do a step increment, $(s~t)
\leftarrow (s+1~0)$, or a trial increment within the same step, $(s~t)
\leftarrow (s~t+1)$.  To impose some value in step $s$, it must reach
a trial $t$ such that the tag $(s~t)$ is lexicographically greater
than every other processor tags in a majority of acceptors.

With an arbitrary initial configuration, some processors may have tags
with step or trial value set to the maximum $2^{\mathfrak{b}}$, thus
the proposer will not be able to produce a greater tag. We thus define
a tag as a triple $(l~s~t)$ where $s$ and $t$ are the step and trial
fields, and $l$ a label, which is not an integer but whose type is
explicited below.  We simply assume that it is possible to increment a
label, and that two labels are comparable.  The proposer can increment
its trial variable, or increment its step variable and reset the trial
variable, or increment the label and reset both the step and the trial
variable. Now, if the proposer manages to produce a label that is
greater than every label of the acceptors, then it will succeed in a
practically infinite number of steps that mimicks the behaviour of the
original Paxos tags. To do so, whenever the proposer notices an
acceptor label which is not less than or equal to the proposer current
label (such an acceptor label is said to cancel the proposer label),
it records it in a history of canceling labels and produces a label greater than
every label in its history.

Obviously, the label type cannot be an integer. Actually, it is
sufficient to have some finite set of labels along with a comparison
operator and a function that takes any finite (bounded by some
constant) subset of labels and produces a label that is greater than
every label in this subset.  Such a device is called a finite labeling
scheme.  An implementation of such a finite labeling scheme was
suggested in \cite{DBLP:conf/sss/AlonADDPT11}, and is formally
presented in the Appendix~\ref{app:construction_labeling_scheme}.
Roughly saying, a label is a fixed length vector of integers from a
bounded domain in which the first integer is called sting and the
others are called antistings. A label $l_1$ is greater than a label
$l_2$, noted $l_1 \prec l_2$, if the sting of $l_1$ does not appear in
the antistings of $l_2$ but not vice versa. Given a finite set of
labels $l_1$,\dots,$l_r$, we can build a greater label $l$ by choosing
a sting not present in the antistings of the $l_i$, and choosing the
stings of the $l_i$ as antistings in $l$.  It is important to note
that the comparison relation between labels \emph{cannot be an order} since transitivity
does not hold.

\paragraph{Multiple Proposers.}
\label{subsec:tag_system_multiple_proposers}

In the case of multiple proposers, the situation is a bit more
complicated.  Indeed, in the previous case, the single proposer is the
only processor to produce labels, and thus it manages to produce a
label greater than every acceptor label once it has collected enough
information in its canceling label history.  If multiple proposers were also
producing labels, none of them would be ensured to produce a label that
every other proposer will use. Indeed, the first proposer can produce
a label $l_1$, and then a second proposer produces a label $l_2$ such
that $l_1 \prec l_2$.  The first proposer then sees that the label
$l_2$ cancels its label and it produces a label $l_3$ such that $l_2 \prec
l_3$, and so on.

To avoid such interferences between the proposers, we assume that the
set of proposer identifiers is totally ordered and we define a tag to
be a vector, say $a$, whose entries are indexed by the proposer identifiers.
Each entry $a[\mu]$ of the tag $a$ contains a tuple $(l~s~t~id~cl)$
where $l$ is a label, $s$ and $t$ are step and trial bounded integers,
$id$ is the identifier of the proposer that owns the tag, and $cl$ is
either a label that cancels $l$ or the null value\footnote{Which means
  that the label $l$ is not canceled.} denoted by $\bot$.  The
identifier of the proposer that owns the tag is included, so that two
proposers never share the same content in any entry of their
respective tags.  The canceling field tells the proposer whether the
corresponding label has been canceled by some label.  

Therefore, a proposer, say $\lambda$, has the possibility to use one
of the entries of its tag, say $a$, to specify the step and trial it is involved
in.  However, the entry used must be valid, i.e., the entry must
contain a null canceling field value along with step and trial values
strictly less than the maximum value $2^{\mathfrak{b}}$.  The entry
actually used by the proposer is determined by the lowest proposer
identifier, noted $\chi(a)$, such that the entry corresponding to $\chi(a)$ is valid.
The entry $a[\chi(a)]$ is referred to as the first valid entry in the tag.
If the first valid entry is located at the left of the entry indexed
by the proposer identifier, i.e., the identifier $\chi(a)$ is less than
the proposer identifier $\lambda$, then the proposer can increment the step and
trial values stored in the entry $a[\chi(a)]$, but it cannot increment the
label in the entry $a[\chi(a)]$.  The proposer can only increment the
label, and thus reset the corresponding step and trial variables,
stored in the entry indexed by its own identifier.
In addition, whenever the entry indexed by the proposer identifier
$\lambda$ becomes invalid, the proposer $\lambda$  produces a new
label in the entry $a[\lambda]$ and resets the integer variables to
zero and the canceling field to the null value $\bot$; this makes
$a[\lambda]$ a valid entry in the proposer tag.  The
important point is that, from a global point of view, the proposer
identified by $\lambda$ is the only proposer to introduce new labels in
the entries indexed by $\lambda$ in tags of the system.  Besides, this
also shows that any proposer $\lambda$ has to record in its canceling label history
only the canceling labels that are stored in the entry $\lambda$ of tags.

\begin{wrapfigure}{r}{0.5\textwidth}
  \centering
  \includegraphics[scale=0.45]{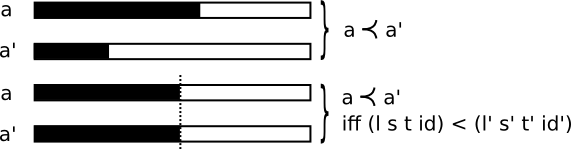}
  \caption{Comparison of tags - Invalid entries are darkened.}
  \label{fig:comparison_tags}
\end{wrapfigure}

A comparison relation is defined on tags so that every processor
(proposer or acceptor) always try to use the valid entry with the
lowest identifier. A tag $b_1$ is less than $b_2$, noted $b_1 \prec
b_2$, when either the first valid entry of $b_1$ is located at the
right of the first valid entry of $b_2$, or both first valid entries
are indexed by the same identifier $\mu$ and the tuple
$b_1[\mu].(l~s~t~id)$ is lexicographically less than the tuple
$b_2[\mu].(l~s~t~id)$.  We note $b_1 \simeq b_2$ when both tags share
the same first valid entry, and the corresponding contents are equal.
We note $b_1 \preccurlyeq b_2$ when $b_1 \prec b_2$ or $b_1 \simeq
b_2$.  If there is no valid entry in both tags, or if the labels are
not comparable, then the tags are not comparable.

\vspace*{-.2cm}
\section{The Algorithm}
\label{sec:algorithm}
\vspace*{-.2cm}

In this section, we describe the self-stabilizing Paxos algorithm.  We
first present the variables before giving an overview of the
algorithm.  The details of the algorithm and
the pseudo-code is given in Appendix~\ref{app:algorithm_details} and~\ref{app:algorithm_pseudo_code}.
In the sequel, we refer to the following datastructure.
\begin{definition}[Fifo History]
  A fifo history $H$ of size $d$ on a set $V$, is a vector of size $d$ of elements of $V$
  along with an operator $+$ defined as follows.
  Let $H = (v_1, \dots, v_d)$ and $v$ an element in $V$.
  If $v$ does not appear in $H$, then $H+v = (v,v_1,\dots, v_{d-1})$,
  otherwise $H+v = H$.
\end{definition}
We define the \emph{tag storage limit} $\gconst{K}$ and the
\emph{canceling label storage limit} $\gconst{K}^{cl}$ by
$\gconst{K} = \gconst{n} + \gconst{C}\frac{\gconst{n}(\gconst{n}-1)}{2}$
and $\gconst{K}^{cl} = (\gconst{n}+1)\gconst{K}$.


\paragraph{\bf Variables.}
The state of a processor $\alpha$ is defined by the following
variables: the \emph{processor tag} $a_{\alpha}$, the \emph{processor
  proposal} $p_{\alpha}$ (a consensus value), the \emph{canceling
  label history} $H^{cl}_{\alpha}$ (fifo label history of size
$\gconst{M} = (\gconst{K}+1)\gconst{K}^{cl}$), the \emph{accepted
  proposal record} $r_{\alpha}$ and the \emph{label history}
$H_{\alpha}$ described as follows.  The accepted proposal record
$r_{\alpha}$ is a vector indexed by the processor identifiers.  For
each identifier $\mu$, the field $r_{\alpha}[\mu]$ contains either the
null value $\bot$ or a couple composed of a tag and consensus value.
The variable $H_{\alpha}$ is a vector indexed by the processor
identifiers.  For each identifier $\mu$, the field $H_{\alpha}[\mu]$
is a fifo label history of size $\gconst{K}$. Note that all the label
histories, and canceling label history are bounded by recent activity,
since they accumulate only a polynomial number of the latest labels.

\paragraph{\bf Tag Increment Functions.}
We define the step increment function, $\nu^s$, and the trial
increment function, $\nu^t$.  Both functions arguments are a processor
identifier $\lambda$, a tag $x$, and the canceling label history
$H_{\lambda}^{cl}$, and they both return a tag (the incremented tag).
First, a copy $y$ of the tag $x$ is created. The step increment
function then increments the step in the first valid entry of $y$ and
resets the corresponding trial field to zero.  The trial increment
function only increments the trial field in the first valid entry of
$y$.  Then, in both functions, it is checked whether the entry
$y[\lambda]$ is valid or not. If it is not, the label value
$x[\lambda].l$ is stored in the canceling label history, a new label
is produced\footnote{With the label increment function from the finite
  labeling scheme (cf. Definition~\ref{def:finite_labeling_scheme}).}
in $x[\lambda]$ with the labels in the canceling label history, and
the corresponding step and trial fields are reset to zero.

\paragraph{\bf Protocol.}
Each processor can play two roles, namely, the acceptor role and the
proposer role. A processor $\alpha$ plays both\footnote{One can think
  of having two threads on the same processor.} the acceptor role and
the proposer role as long as $\Theta_{\alpha}$ is equal to
$\gconst{true}$.  When $\Theta_{\alpha}$ is equal to $\gconst{false}$,
the processor $\alpha$ only plays the acceptor role.
The current step and trial of a processor are determined by the step
and trial values in the first valid entry of its tag.  A proposer
tries to impose some proposal for its current step.  To do so, it
executes the following two phases (cf.~Algorithm~\ref{alg:proposer}).

{\bf (Phase 1).} The proposer, say $\lambda$, reads a new proposal and
tries to recruit a quorum of acceptors by broadcasting a message
(phase 1, message $p1a$) with its tag $a_{\lambda}$
(Algorithm~\ref{alg:proposer}, line~\ref{algline:loop_bcast_p1a}).  It
waits for the replies from a majority of acceptors.  When an acceptor
$\alpha$ receives this $p1a$ message, it either adopts the proposer
tag if the proposer tag is greater than its own tag $a_{\alpha}$, or leaves
its tag unchanged otherwise.  The acceptor replies (phase 1,
message $p1b$) to the proposer with its tag (updated or not)
and the proposal, either null or a couple (tag, consensus value),
stored in its accepted proposal variable $r_{\alpha}[\chi(a_{\alpha})]$.

Upon receiving the acceptor replies, the proposer $\lambda$ knows if
it has managed to recruit a majority of acceptors. In that case, the
proposer $\lambda$ can move to the second phase.  Otherwise,
$\lambda$ has received at least one acceptor reply whose tag is not
less than or equal to the proposer tag of $\lambda$.  At each
reception of such an acceptor tag, the proposer $\lambda$ modifies
its tag in order for the proposer tag to be greater than
the acceptor tag received. When messages are received  from at
least half of the processors, the proposer begins a new
phase $1$ with its updated tag.

{\bf (Phase 2).} When the proposer $\lambda$ reaches this point, it
has managed to recruit a quorum of acceptors and it knows all the
latest proposals that they accepted for the entry $\chi(a_{\lambda})$.
Assume for instance that the proposer tag points to step $s$, i.e.,
the step value in the first valid entry $\mu$ of the proposer tag is
equal to $s$.  Then (Algorithm~\ref{alg:proposer},
line~\ref{algline:loop_p2_begin} to line~\ref{algline:loop_p2_process_replies_end}) the
proposer $\lambda$ first checks that the tags associated with the
received proposals all share the same first valid entry and the same
corresponding label as the tag of $\lambda$.  If it is not the case,
then $\lambda$ keeps its original proposal.  Otherwise, it looks for
non-null proposals for step $s$ and if there are some, it copies the
proposal with the maximum tag (among those that point to step $s$) in
its proposal variable.  If there are more than two different proposals
associated with this maximum tag, then $\lambda$ keeps its original proposal.

Next, the proposer $\lambda$ sends to all the acceptors a message
(phase 2, message $p2a$) containing its tag along with the
proposal it has computed (Algorithm~\ref{alg:proposer},
line~\ref{algline:loop_bcast_p2a}) and waits for the replies of a
majority of acceptors.  When an acceptor $\alpha$ receives this $p2a$
message, if the proposer tag is greater than or equal to its own
tag, then the acceptor adopts the proposer tag and stores the proposal
in the variable $r_{\alpha}$.  Otherwise, the acceptor leaves its
 tag and the accepted proposals record unchanged.  Next, it
replies (phase $2$, message $p2b$) to the proposer with its 
tag (updated or not).

After having received the replies from a majority of acceptors, the
proposer $\lambda$ knows if a majority have accepted its proposal.  In
that case, it broadcasts a decision message containing its proposer
tag and the successful consensus value (Algorithm~\ref{alg:proposer},
line~\ref{algline:loop_bcast_dec}). At the reception of this message,
any acceptor with a tag less than or equal to the proposer
tag decides on the given proposal. The proposer $\lambda$ can then
move to the next step.  Otherwise, the proposer $\lambda$ has received
tags that are not less than or equal to the proposer tag, and thus
$\lambda$ updates its proposer tag accordingly, and starts a new
phase $1$.

\paragraph{\bf Precisions.}
By ``$\alpha$ adopts the tag $b$'', we mean that $\alpha$ copies the
content of the first valid entry in $b$ to the same entry in
$\alpha$'s acceptor tag\footnote{Note that only the entry
  $a_{\alpha}[\chi(b)]$ is modified. In fact, we have $a_{\alpha}
  \simeq b$ and not $a_{\alpha} = b$.}, i.e., $a_{\alpha}[\chi(b)]
\leftarrow b[\chi(b)]$.  Furthermore, every time a processor
$\alpha$ modifies its tag, it also does the following.  If the label
$l$, in some entry $\mu$ of the tag, is replaced by a new label,
then the label $l$ is stored in the label history $H_{\alpha}[\mu]$
that corresponds to the identifier $\mu$ and a label that cancels
the new label is looked for in the (bounded) label history $H_{\alpha}[\mu]$,
updating the corresponding canceling field accordingly. If the label
in the entry $\alpha$, i.e., the only entry in which the proposer
$\alpha$ can create a label, gets canceled, then the associated
canceling label is stored in the (bounded) canceling label history
$H_{\alpha}^{cl}$.  Any new label produced in the entry $\alpha$ of
the tag at processor $\alpha$ is also stored in $H_{\alpha}^{cl}$.
In addition, for every $\mu$, the accepted proposal
$r_{\alpha}[\mu]$ is cleared, i.e., $r_{\alpha}[\mu] \leftarrow
\bot$, whenever there is a label change in the entry
$a_{\alpha}[\mu]$. A non-null field $r_{\alpha}[\mu] = (b,p)$ is
also cleared whenever the label in the entry $b[\mu]$ is different
than the label in the entry $a_{\alpha}[\mu]$, or the labels are
equal but the entry $b[\mu]$ is lexicographically greater than the
entry $a_{\alpha}[\mu]$.  Finally, any processor $\alpha$ always
checks that the entry $\alpha$ of its tag is valid.  If it is not,
the corresponding label is stored in the (bounded) canceling label history, a
new label is produced instead and the step and trial fields are
reset to zero.

\vspace*{-.2cm}
\section{Proof in a Nutshell}
\label{sec:results}
\vspace*{-.2cm}

In this section, we present a summary of the main results of this
work. Full details on the definitions, theorems and proofs are given
in Appendix~\ref{app:proofs}.
An \emph{epoch} at processor $\lambda$ is a maximal local subexecution
during which the first valid entry of its tag and the corresponding
label remains constant
(Appendix~\ref{app:tag_stabilization_definitions},
Definition~\ref{def:epoch}).  Given a bounded integer $h$, an
\emph{$h$-safe epoch} at processor $\lambda$ is an epoch at $\lambda$
that ends because the step or trial values in the first valid entry
$\mu$ of its tag have reached the maximum value $2^{\mathfrak{b}}$. In
addition, in the configuration of the system that precedes this epoch,
for any tag in the system, either the label in the entry $\mu$ is
different than the one used by $\lambda$, or the corresponding step
and trial values are less than $h$.  From the point of view of
$\lambda$, within an $h$-safe epoch, everything seems like an original
Paxos execution initialized with integer values less than $h$.  For
instance, we understand that for $\lambda$ to jump, e.g., from step
$10$ to $15$ within a $0$-safe epoch, there must be a chain of events
totally ordered by the happen-before relation that correspond to
decision for steps $10$ to $15$.  Thus, a $h$-safe epoch is actually
as long as counting from $h$ to $2^{\mathfrak{b}}$. The first main
result (Appendix~\ref{app:tag_stabilization_results},
Theorem~\ref{th:existence_safe_epoch}) states that there is some
proposer $\lambda$ at which there exists a $0$-safe epoch.  Note that
this safe epoch is not necessarily unique, and it is not necessary to
wait for it.  Indeed, this results simply states that one has not to
worry about having only very short epochs at processor $\lambda$.

The second part of this work highlights the link between such a safe
epoch at $\lambda$ and the safety property on the global system. The
idea is that $\lambda$ is talking to quorums whose members cannot
alter the first valid entry nor the corresponding label of $\lambda$
during $\sigma$, and must use the same first valid entry and
corresponding label. Roughly saying, within a globally defined set of
events related to the $h$-safe epoch at $\lambda$, we show that for
any two decision events for the same step $s \geq h$ the two decided
proposals are equal (Appendix~\ref{app:practical_safety_results},
Theorem~\ref{th:practical_safety}).

These two results rely on a proper management of the labels.  Indeed,
the comparison relation on labels is not transitive; there might be
cycles of labels. The algorithm uses histories of labels to detect
such cycles. Precisely, the entry $\lambda$ of the tag of a processor
$\alpha$ is associated with the label history
$H_{\alpha}[\lambda]$. Whenever, the corresponding label is replaced
by a new label, the old label is stored in the label history, and a
canceling label for the new label is looked for. This technique
prevents the label field to follow a cycle whose length is less than
the size of the label history. However, the size of the label history
is chosen to equal the total label capacity of the system which
implies that longer cycles are possible in the entry $\lambda$ if and
only if the processor $\lambda$ produces at least one label meanwhile
(Appendix~\ref{app:tag_stabilization_results},
Lemma~\ref{lem:cycle_labels}).  Thanks to this technique, it is
possible to order events relatively to epochs occurring at $\lambda$
since labels are produced by $\lambda$ only at the end of some epochs
occurring at $\lambda$ (Appendix~\ref{app:practical_safety_results},
Lemma~\ref{lem:epoch_cycle_labels}).  If an epoch is practically
infinite, it gives a way to discard events that happen after this
epoch.

Besides, to guarantee liveness, the original Paxos algorithm requires
a single proposer for each step; knowing that two steps may have
different attributed proposers. In our model, an external module
called $\Theta$ is responsible for selecting the processors that act
as proposers. For the tag system to stabilize, it is only needed that
at least one processor acts infinitely often as a proposer.
Nevertheless, this external module is generally implemented with a
failure detector.  For the sake of completeness we present a simple
implementation of a self-stabilizing failue detector.

\vspace*{-.2cm}
\section{Self-Stabilizing Failure Detector}
\label{sec:liveness}
\vspace*{-.2cm}

Liveness for some step $s$ in Paxos is not guaranteed unless there is
a unique proposer for this step $s$.  The original Paxos algorithm
assumes that the choice of a distinguished proposer for a given step
is done through an external module.  In the sequel, we present an
implementation of a self-stabilizing failure detector that works under
a partial synchronism assumption. Note that this assumption is strong
enough to implement a perfect failure detector, but a perfect failure
detector is not mandatory for our algorithm to converge (i.e., the tag
system). This brief section simply explain how a
\emph{self-stabilizing} implementation can be done; which is, although
not difficult, not obvious either.  Each processor $\alpha$ has a
vector $L_{\alpha}$ indexed by the processor identifiers; each entry
$L_{\alpha}[\mu]$ is an integer whose value is comprised between $0$
and some predefined maximum constant $W$.  Every processor $\alpha$
keeps broadcasting a hearbeat message $\langle hb, \alpha \rangle$
containing its identifier (e.g., by using
\cite{dolev2000self,DBLP:conf/sss/DolevHSS12}). When the processor
$\alpha$ receives a heartbeat from processor $\beta$, it sets the
entry $L_{\alpha}[\beta]$ to zero, and increments the value of every
entry $L_{\alpha}[\rho]$, $\rho \neq \beta$ that has value less than
$W$.  The detector output at processor $\alpha$ is the list
$F_{\alpha}$ of every identifier $\mu$ such that $L_{\alpha}[\mu] =
W$. In other words, the processor $\alpha$ assesses that the processor
$\beta$ has crashed if and only if $L_{\alpha}[\beta] = W$.

\emph{(Interleaving of Heartbeats). For any two live processors
  $\alpha$ and $\beta$, between two receptions of heartbeat $\langle
  hb, \beta \rangle$ at processor $\alpha$, there are strictly less
  than $W$ receptions of heartbeats from other processors.}  Under
this condition, for every processor $\alpha$, if the processor $\beta$
is alive, then eventually the identifier $\beta$ does not belong to
the list $F_{\alpha}$.  The connection with the external module
$\Theta$ in Section~\ref{sec:model} can be defined as follows: $
\Theta_{\alpha} = \gconst{true} \Leftrightarrow \alpha = \min(\mu;~
L_{\alpha}[\mu] < W )$.  Under this hypothesis,
we see that the module $\Theta$ eventually satisfies the conditions in
Assumption~\ref{hyp:module_theta}, Section~\ref{sec:model}.


\vspace*{-.3cm}
\section{Conclusion}
\label{sec:conclusion}
\vspace*{-.2cm}

The original Paxos algorithm provides a solution to the problem, for a
distributed system, to reach successively several consensus on
different requests to apply.  A proper tagging system using natural
integers is defined so that, although the liveness property, i.e., the
fact that, in every consensus instance, every processor eventually
decides, is not guaranteed, the safety property is ensured: no two
processors decide on different values in the same consensus instance.
The original formulation, however, does assume a consistent initial
state and assumes that consistency is preserved forever by applying
step transitions from a restricted predefined set of step transitions.
This line of consistency preserving argument is fragile and error
prone in any concrete system that should exhibit availability and
functionality during very long executions. Hence, there is an urgent
need for self-stabilizing on-going systems, and in particular for the
very heart of asynchronous replicated state machine systems used by
the leading companies to ensure robust services.  One particular
aspect of self-stabilizing systems is the need to re-examine the
assumption concerning the use of (practically) unbounded time-stamps.
While in practice it is reasonable for Paxos to assume that a bounded
value, represented by $64$ bits, is a natural (unbounded) number, for
all practical considerations, in the scope of self-stabilization the
$64$ bits value may be corrupted by a transient fault to its maximal
value at once, and still recovery following such a transient fault
must be guaranteed. More generally, the designer of self-stabilizing
systems, does not try to protect its system against specific ``bad''
scenarios. She assumes that some transient faults, whatever their
origin is, corrupt (a part of) the system and ensures that the system recovers
automatically after such fault occurrences.

Using a finite labeling scheme, we have defined a new kind of tag
system that copes with such transient faults.  The tag is defined as a
vector indexed by the processor identifiers, such that each entry
contains a label, a step and a trial value.  Incrementing the label
becomes a way to properly reset the step and trial values in a given
entry of a tag. Each processor is responsible for producing labels
only in the entry that corresponds to its identifier.  Therefore, once
it collects enough information about the labels present in its
attributed entry, a processor is able to produce a label that no other
processor can cancel. Hence, in a tag, there might be several entries
with ``winning'' labels, and the owner of the tag uses the entry with
the lowest identifier.  Our algorithm ensures that at some point in
time, almost all the processors uses the same entry, the same
corresponding label and integer (step and trial) fields with low
values. From this point, the system behaves like the original Paxos
until the maximum value $2^{\mathfrak{b}}$ is reached by some step or
trial variables. This is what we named a ``practically
self-stabilizing'' behaviour, since the length of the stable execution
is not infinite as in classical self-stabilization but long enough for
any concrete system's timescale\footnote{Recall that counting from 0
  to $2^{64}$ by incrementing every nanoseconds lasts about 500
  years.}, just as assumed in the original Paxos algorithm.

\newpage
\bibliographystyle{abbrv}
\bibliography{biblio}

\newpage
\appendix
\section{Tag System - Formal Definitions}
\label{app:tag_system_definitions}

\begin{definition}[Bounded Integer]
  \label{def:bounded_integer}
  Given a positive integer $\mathfrak{b}$, a $\mathfrak{b}$-bounded
  integer, or simply a bounded integer, is any non-negative integer
  less than or equal to $2^{\mathfrak{b}}$.
\end{definition}

\begin{definition}[Finite Labeling Scheme]
  \label{def:finite_labeling_scheme}
  A \emph{finite labeling scheme} is a $4$-tuple $\overline{\mathcal{L}} =
  (\mathcal{L}, \prec, d, \nu)$ where $\mathcal{L}$ is a finite set
  whose elements are called \emph{labels}, $\prec$ is a partial
  relation on $\mathcal{L}$ that is irreflexive ($l \not\prec l$) and
  antisymmetric ($\not\exists (l,l') ~ l \prec l' \land l' \prec l$),
  $d$ is an integer, namely the \emph{dimension} of the labeling
  scheme, and $\nu$ is the \emph{label increment function}, i.e., a
  function that maps any finite set of at most $d$ labels to a label
  such that for every subset $A$ in $\mathcal{L}$ of at most $d$ labels, 
  for every label $l$ in $A$, we have $l \prec \nu(A)$.
  We denote the reflexive closure of $\prec$ by $\preccurlyeq$.  
\end{definition}

\begin{remark}
  The definition of a finite labeling scheme imposes that the relation
  $\prec$ is not transitive. Hence, it is not an order relation.
\end{remark}

\begin{definition}[Canceling Label]
  Given a label $l$, a \emph{canceling label} for $l$ is a label $cl$
  such that $cl \not\preccurlyeq l$.
\end{definition}

\begin{definition}[Tag System]
  \label{def:tag_system}
  A tag system is given by a $4$-tuple $(\mathfrak{b}, \Pi, \omega,
  \overline{\mathcal{L}})$ where $\mathfrak{b}$ is positive integer,
  $\Pi$ is the totally ordered finite set of processor identifiers,
  $\omega$ is a special symbol such that $\omega \not\in \Pi$ and
  $\overline{\mathcal{L}}$ is a finite labeling scheme. In addition
  the order on $\Pi$ is extended as follows: for every $\mu \in \Pi$,
  $\mu < \omega$.
\end{definition}

\begin{definition}[Tag]
  Given a tag system $(\mathfrak{b}, \Pi, \omega,
  \overline{\mathcal{L}})$, a \emph{tag} is a vector $a[\mu] =
  (l~s~t~id~cl)$ where $\mu$ and $id$ are processor identifiers, $l$ is a label,
  $cl$ is either the null value noted $\bot$ or a canceling label for
  $l$, and $s$ and $t$ are $\mathfrak{b}$-bounded integers
  respectively called the \emph{step} and \emph{trial} fields.  The
  \emph{entry indexed by} $\mu$ in the tag $a$, or simply the entry
  $\mu$ in $a$, refers to the entry $a[\mu]$.  The entry $\mu$ is said
  to be \emph{valid} when the corresponding canceling field is null,
  $a[\mu].cl = \bot$, and both the corresponding step and trial values
  are strictly less than the maximum value, i.e., $a[\mu].s <
  2^{\mathfrak{b}}$ and $a[\mu].t < 2^{\mathfrak{b}}$.
\end{definition}

\begin{definition}[First Valid Entry]
  \label{def:first_valid_entry}
  Given a tag $a$, the first valid entry in the tag is defined by
  \begin{equation*}
    \chi(a) = \min \left( \{ \mu \in \Pi ~|~ a[\mu] \textrm{ is valid}\} \cup \{\omega\} \right)
  \end{equation*}
\end{definition}

\begin{definition}[Comparison of Tags]
  \label{def:tag_comparison}
  Given two tags $a$ and $a'$, we note $a \prec a'$ when either
  $\chi(a) > \chi(a')$ or $\chi(a) = \chi(a') = \mu < \omega$
  and\footnote{Lexicographical comparison using
    the corresponding relation on labels, integers and processor
    identifiers.} $a[\mu].(l~s~t~id) < a'[\mu].(l~s~t~id)$.
  We note $a \simeq a'$ when $\chi(a) = \chi(a')$ and $a[\chi(a)] = a'[\chi(a)]$.
  We note $a \preccurlyeq a'$ when either $a \prec a'$ or $a \simeq a'$.
\end{definition}

\section{Construction of a Finite Labeling Scheme}
\label{app:construction_labeling_scheme}

We show how to construct a finite labeling scheme $(\mathcal{L}, \prec, d, \nu)$.
First, consider the set of integers $X = \{1,2,...,K\}$ with $K = d^2+1$.
We define the set $\mathcal{L}$ to be the set of every tuple $(z, A)$ where $z \in X$ is the \emph{sting}, and
$A \subset X$ with $|A| \leq d$ is called the \emph{antistings}.
The relation $\prec$ is defined as follows
\begin{equation}
  l = (z,A) \prec l' = (z',A') \Leftrightarrow (z \in A') \land (z' \not\in A)
\end{equation}
The function $\nu$ is defined as follows.
Given $r$ labels $(s_1,A_1)$, \dots, $(s_r,A_r)$ with $r \leq d$, the label $\nu(l_1,\dots,l_r) = (s,A)$
is given by
\begin{align}
  s &= \min\left\{ X - (A_1 \cup \dots \cup A_r) \right\} \\
  A &= \{s_1,\dots,s_r\}
\end{align}
The function is well-defined since $r \leq d$ and $| A_1 \cup \dots \cup A_r | \leq d^2 < |X|$.
In addition, for every $i$, we have $s \not\in A_i$ and $s_i \in A$, thus $(s_i,A_i) \prec (s,A)$.

\section{Algorithm Details}
\label{app:algorithm_details}

\SetEndCharOfAlgoLine{}

We give more details about the algorithms.
We consider a tag system $(\mathfrak{b}, \Pi, \omega, \mathcal{L},
\prec, d, \nu)$ such that $\Pi$ is the set of processor
identifiers and the labeling scheme dimension is equal to $(\gconst{K}
+ 1)\gconst{K}^{cl}$.

\subsection{Tag Procedures}
\label{subsec:tag_procedures}

Algorithm~\ref{alg:tagprocs2} defines a procedure \texttt{clean} that
cleans the canceling fields of a given tag as follows.  The procedure
takes as input a processor identifier $\lambda$ and a tag $a$.  After
the completion of the procedure, for every entry $\mu$ in the tag $a$,
if the canceling field $a[\mu].cl$ is not null, then its value is
a canceling label for the label in $a[\mu].l$. In addition, every
identifier value in $a[\mu].id$ is equal to $\lambda$.  The second
procedure \texttt{fill\_cl} updates the canceling fields of two given
tags $x$ and $y$ as follows. After the completion of the
procedure, for any $\mu \in \Pi$, if the label $x[\mu].l$ or
$x[\mu].cl$ (not equal to $\bot$) cancels $y[\mu].l$, then
$y[\mu].cl$ is not null.  And if $x[\mu].l = y[\mu].l$ with one of the
integer fields in $x[\mu]$ being equal to the maximum value
$2^{\mathfrak{b}}$, then both step and trial fields $y[\mu].cl$ are
equal to $2^{\mathfrak{b}}$.  The previous remarks also hold when
exchanging $x$ and $y$.

Algorithm~\ref{alg:tagprocs3} defines the function
\texttt{check\_entry} whose arguments are a processor identifier
$\lambda$, a tag $x$, and an history of labels $L$. This function
checks whether the entry $x[\lambda]$ is valid or not.  If this entry
is invalid, it stores the label value $x[\lambda].l$ in the history
$L$, produces\footnote{With the label increment function from the
  finite labeling scheme
  (cf. Definition~\ref{def:finite_labeling_scheme}).} a new label in
$x[\lambda]$ with the labels in the history $L$ and resets the step
and trial fields to zero.  Algorithm~\ref{alg:tagprocs3} also defines
the step increment function, $\nu^s$, and the trial increment
function, $\nu^t$.  Both functions arguments are a processor
identifier $\lambda$, a tag $x$, and a fifo history of labels $L$, and
they both return a tag (the incremented tag).  First, a copy $y$ of
the tag $x$ is created.  Then the tag $y$ is cleaned
with the procedure \texttt{clean}.  The step increment function then
increments the step in the first valid entry of $y$ and resets the
corresponding trial field to zero.  The trial increment function only
increments the trial field in the first valid entry of $y$.  Then, in
both functions, it is checked whether the entry $y[\lambda]$ is valid
or not, and updated accordingly thanks to the function
\texttt{check\_entry}. Both functions return the tag $y$.

\subsection{Protocol}

We focus on the reception of a proposer message by an acceptor
(Algorithm~\ref{alg:acceptor}).  Say an acceptor $\alpha$ receives a
message $\langle p1a, \lambda, b \rangle$ from proposer $\lambda$. The
acceptor $\alpha$ first records in the canceling label history
$H^{cl}_{\alpha}$ any label in the entry $b[\alpha]$ that cancels the
label $a_{\alpha}[\alpha].l$ in the acceptor tag
(line~\ref{algline:acceptor_p1a_update_hcl}).  Using the procedure
\texttt{fill\_cl} presented in Algorithm~\ref{alg:tagprocs2}, the
acceptor $\alpha$ updates the canceling fields of both tags
$a_{\alpha}$ and $b$. Then, it checks the validity of the entry
$a_{\alpha}[\alpha]$ with the procedure \texttt{check\_entry} and
updates it accordingly (line~\ref{algline:acceptor_p1a_fill_cl}). If
the updated tags satisfy $a_{\alpha} \prec b$, then $\alpha$ adopts
the tag $b$, i.e., it copies the content of the first valid entry
$b[\chi(b)]$ to the entry $a_{\alpha}[\chi(b)]$ in $a$
(line~\ref{algline:acceptor_p1a_adopt}). If there has been a change of
label in the entry $a_{\alpha}[\chi(b)]$, then the accepted proposal
variable $r_{\alpha}[\chi(b)]$ is cleared, the old label is stored in
the history $H_{\alpha}[\chi(b)]$, and $\alpha$ looks in this history
for labels that cancel the new label $a_{\alpha}[\chi(b)].l$, updating
the corresponding canceling field accordingly
(lines~\ref{algline:acceptor_p1a_label_change1}
to~\ref{algline:acceptor_p1a_label_change2}).  Next, the acceptor
checks for every identifier $\mu$ if either the tag $b$ in the
accepted proposal $r_{\alpha}[\mu]$ uses a label different than the
label in the entry $a_{\alpha}[\mu]$, or if the tuple
$a_{\alpha}[\mu].(l~s~t~id)$ is less than the tuple
$b[\mu].(l~s~t~id)$; in such a case, the entry $r_{\alpha}[\mu]$ is
cleared. In any case, whether it
adopts the tag $b$ or not, the acceptor $\alpha$ replies to the
proposer $\lambda$ with a message $\langle p1b, \alpha, a_{\alpha},
r_{\alpha}[\chi(a_{\alpha})] \rangle$ where $a_{\alpha}$ is its
updated (or not) acceptor tag and $r_{\alpha}[\chi(a_{\alpha})]$ is
the lastly accepted proposal for the entry $\chi(a_{\alpha})$
(line~\ref{algline:acceptor_p1a_reply}).

When an acceptor $\alpha$ receives a $p2a$ message or a decision
message containing a proposal $(b,p)$, the procedure is similar. It
first updates the canceling label history $H^{cl}_{\alpha}$, the
canceling fields of $a_{\alpha}$ and $b$, and checks the validity of
the entry $a_{\alpha}[\alpha]$
(lines~\ref{algline:acceptor_p2a_update_hcl}
and~\ref{algline:acceptor_p2a_fill_cl}). The difference with the
previous case is that the condition to accept the proposal $(b,p)$ is
$a_{\alpha} \preccurlyeq b$.  In this case, the acceptor $\alpha$
adopts the tag $b$, updating the canceling field and the label history
as in the case of a $p1a$ message, stores the couple $(b,p)$ in its
accepted proposal variable $r_{\alpha}[\chi(b)]$ and, in case of a
decision message, decides on the couple $(b,p)$
(lines~\ref{algline:acceptor_p2a_accept}
and~\ref{algline:acceptor_decide}).  In addition, if there has been a
change of label in the entry $a_{\alpha}[\chi(b)]$, then\footnote{In
  this case, the variable $r_{\alpha}[\chi(b)]$ is not cleared.} the
old label is stored in the history $H_{\alpha}[\chi(b)]$, and $\alpha$
looks in this history for labels that cancel the new label
$a_{\alpha}[\chi(b)].l$, updating the corresponding canceling field
accordingly (lines~\ref{algline:acceptor_p2a_label_change1}
and~\ref{algline:acceptor_p2a_label_change2}).  We say that the
acceptor $\alpha$ has accepted the proposal $(b,p)$.  Next, the
acceptor checks for every identifier $\mu$ if either the tag $b$ in
the accepted proposal $r_{\alpha}[\mu]$ uses a label different than
the label in the entry $a_{\alpha}[\mu]$, or if the tuple
$a_{\alpha}[\mu].(l~s~t~id)$ is less than the tuple
$b[\mu].(l~s~t~id)$; in such a case, the entry $r_{\alpha}[\mu]$ is
cleared. In case of a
$p2a$ message, whether it accepts the proposal or not, the acceptor
$\alpha$ replies to the proposer $\lambda$ with a message $\langle
p2b, \alpha, a_{\alpha} \rangle$ containing its updated (or not)
acceptor tag (line~\ref{algline:acceptor_p2a_reply}).  In case of a
decision message, the acceptor does not reply.

At the end of any phase, a proposer executes a procedure named the
preempting routine (Algorithm~\ref{alg:preemptingroutine}) that mainly
consists in waiting for the replies from a majority of acceptors and
suitably incrementing the proposer tag.  The phase is considered
successful if the routine returns $ok$ and failed otherwise.  In this
routine, the processor $\lambda$ waits for $\gconst{n}-\gconst{f}$
replies from the acceptors.  Note that, although the pseudo-code
suggests $\lambda$ receives only acceptor replies
(Algorithm~\ref{alg:preemptingroutine},
line~\ref{algline:pr_receive}), the processor $\lambda$, as an
acceptor, also processes messages ($p1a$ or $p2a$) from other
proposers. The variable $a_{sent}$ stores the value of $a_{\lambda}$
that $\lambda$ has sent at the beginning of the phase.  The variable
$b$ is an auxiliary variable that helps filtering messages and is
reset to $a_{sent}$ at the beginning of each new loop
(line~\ref{algline:pr_begin_loop}).  For each message with tag
$a_{\alpha}$ and proposal $r_{\alpha}$ received from a processor
$\alpha$, the procedure updates the canceling fields of both $b$ and
$a_{\alpha}$ (line~\ref{algline:pr_alpha_b_fill_cl}).

If the current phase is a phase $1$, then a reply is considered
positive when the acceptor $\alpha$ has adopted the tag $\lambda$
sent, i.e., when\footnote{Recall that $a_{\alpha} \simeq b$ means
  $\chi(a_{\alpha}) = \chi(b)$ and $a_{\alpha}[\chi(b)] = b[\chi(b)]$.}
$a_{\alpha} \simeq b$. If the current phase is a phase $2$, the reply
is considered positive when the acceptor $\alpha$ has adopted the tag
$\lambda$ has sent and has accepted the corresponding proposal, i.e,
$a_{\alpha} \simeq b$ and $p_{\lambda} = r_{\alpha}[\chi(b)].p$.  The condition
$C^+$ (line~\ref{algline:pr_cplus}) summarizes these two cases.  A
reply is considered negative when the received acceptor tag is not
less than or equal to the tag the proposer $\lambda$ has sent, i.e., an
 acceptor tag $a_{\alpha}$ such that $a_{\alpha}
\not\preccurlyeq b$ (condition $C^-$, line~\ref{algline:pr_cminus}).
The procedure discards any acceptor reply that does not satisfy the
conditions $C^+$ nor $C^-$.  The variable $M$ counts the number of
positive replies.  The routine returns $ok$ if all the replies are
positive, i.e., $M = \gconst{n} - \gconst{f}$, and $nok$ otherwise
(lines~\ref{algline:pr_return_nok} and~\ref{algline:pr_return_ok}).

At each negative reply received, the routine updates the variable
$a_{\lambda}$ so that it is always greater than the tag received.
Precisely, it updates the canceling label history $H_{\lambda}^{cl}$
(line~\ref{algline:pr_update_hcl}), the canceling fields of
$a_{\alpha}$ and $a_{\lambda}$
(line~\ref{algline:pr_alpha_lambda_fill_cl}) and checks the validity
of the entry $a_{\lambda}[\lambda]$
(line~\ref{algline:pr_alpha_lambda_check_entry}). Recall that this implies $\chi(a_{\lambda}) \leq \lambda$. Then, the routine
checks if $a_{\alpha}$ is less than or equal to $a_{\lambda}$.  If it
is so, then the routine does not modify $a_{\lambda}$.  Otherwise
(lines~\ref{algline:pr_negreply_1} to~\ref{algline:pr_negreply_2}), it
checks if $a_{\alpha}$ has its first valid entry located at the left
of $a_{\lambda}$'s first valid entry, i.e., $\chi(a_{\alpha}) <
\chi(a_{\lambda})$.  In that case, the content of the entry
$a_{\alpha}[\chi(a_{\alpha})]$ is copied\footnote{Note that since the
  canceling fields have been updated with the procedure
  \texttt{fill\_cl}, necessarily the labels
  $a_{\alpha}[\chi(\alpha)].l$ and $a_{\lambda}[\chi(\alpha)].l$ are
  different.} to the entry $a_{\lambda}[\chi(a_{\alpha})]$ and the
trial value is incremented. In addition, the previous label in
$a_{\lambda}[\chi(\alpha)].l$ is stored in the label history
$H_{\lambda}[\chi(\alpha)]$ and possible canceling labels for the new
label in $a_{\lambda}[\chi(\alpha)].l$ are searched for in
$H_{\lambda}[\chi(\alpha)]$ (lines~\ref{algline:pr_mu_less_lambda1}
to~\ref{algline:pr_mu_less_lambda2}).  If the first valid entry
$\chi(a_{\alpha})$ in $a_{\alpha}$ is not located at the left of
$a_{\lambda}$'s first valid entry, then necessarily $\chi(a_{\alpha})
= \chi(a_{\lambda}) = \mu$, since $a_{\alpha} \not\preccurlyeq
a_{\lambda}$.  In that case, the routine compares the content of the
entries indexed by $\mu$ in $a_{\alpha}$ and $a_{\lambda}$
(lines~\ref{algline:pr_mu_eq_chi} to~\ref{algline:pr_negreply_2}).
Note that, since the routine has updated the canceling fields, the
corresponding labels are equal\footnote{Otherwise, one would cancel
  the other and contradict the definition of the first valid
  counter.}.  If both entries $a_{\alpha}[\mu]$ and $a_{\lambda}[\mu]$
share the same step value, then $a_{\lambda}$ is updated with the time
increment function $\nu^t$ (line~\ref{algline:pr_trial_inc}).
Otherwise, the step increment function is used
(line~\ref{algline:pr_negreply_2}).




\section{Proofs}
\label{app:proofs}

\subsection{Basics}
\label{app:basics}

\begin{lemma}[Pigeon-hole Principle]
  Consider a sequence $u = (u^i)_{1 \leq i \leq N}$ such that $\forall 1 \leq i \leq N, u^i \in \{0,1\}$, and $N = (n+1)m$
  for some $n, m \in \mathbb{N}-\{0\}$.
  Assume that the cardinal of $\{ i ~|~ u^i = 1\}$ is less than or equal to $n$.
  Then there exists $1 \leq i_0 \leq N$ such that for every $i_0 \leq i \leq i_0 + m-1$, $u^i = 0$.
\end{lemma}

\begin{proof}
  Divide the sequence $u$ in successive subsequences $\sigma^j$, $1 \leq j \leq n+1$ such that each $\sigma^j$
  length is $m$. If for every $1 \leq j \leq n+1$, the sequence $\sigma^j$ contains at least one $1$, 
  then the number of $1$ appearing in $u$ is at least $n + 1$, which leads to a contradiction.
  Hence, there is some $j_0$ such that the sequence $\sigma^j$ only contains $0$.\gqed
\end{proof}

\begin{lemma}
  \label{lem:finitephase}
  Any phase of the proposer algorithm eventually  ends.
\end{lemma}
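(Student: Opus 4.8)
The plan is to show that the only potentially unbounded part of a phase is the preempting routine, and that this routine collects its required $\gconst{n}-\gconst{f}$ relevant replies in finite time. Recall that a phase consists of a single broadcast (a $p1a$ or a $p2a$ message carrying the tag $a_{sent}$) followed by the preempting routine, whose loop terminates once $\gconst{n}-\gconst{f}$ replies satisfying $C^+$ or $C^-$ have been processed. Hence it suffices to prove that the proposer $\lambda$ eventually receives $\gconst{n}-\gconst{f}$ such relevant replies.

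First I would invoke the resilience assumption (Assumption~\ref{hyp:resilience}): at most $\gconst{f}$ processors crash, so at least $\gconst{n}-\gconst{f}$ acceptors are live. By fairness of the channels, the broadcast message reaches each live acceptor, each of them runs its acceptor handler and replies ($p1b$ or $p2b$) with its (possibly updated) tag $a_{\alpha}$, and by fairness again each such reply is eventually delivered to $\lambda$. This produces $\gconst{n}-\gconst{f}$ fresh replies at $\lambda$.

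The key step is to argue that such a fresh reply is never discarded, i.e.\ that it always satisfies $C^+$ or $C^-$. When an acceptor processes the broadcast tag $b$, it adopts $b$ precisely when $a_{\alpha} \prec b$; after adoption the first valid entry of $a_{\alpha}$ coincides with that of $b$, so $a_{\alpha} \simeq b$ and the reply is positive ($C^+$). If the acceptor does not adopt, then $a_{\alpha} \not\prec b$, hence either $a_{\alpha} \simeq b$ (again $C^+$) or $a_{\alpha}$ is strictly greater than or incomparable to $b$, i.e.\ $a_{\alpha} \not\preccurlyeq b$, which is exactly $C^-$. In every case the reply is relevant. I expect the delicate point to be checking that the canceling-field updates performed by \texttt{fill\_cl}, both at the acceptor and later at the proposer (where the auxiliary variable $b$ is reset to $a_{sent}$), do not silently alter the first valid entry $\chi$ and thereby turn a relevant reply into a discarded one; this is where the properties established for \texttt{fill\_cl} must be used carefully to guarantee that the outcome of the comparison is preserved.

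Finally I would rule out an infinite sequence of discarded replies blocking progress. A discarded reply is a $p1b$/$p2b$ message whose tag is strictly $\prec a_{sent}$; such messages can only be stale ones left in the channels or replies to earlier broadcasts. Since $\lambda$ issues a single broadcast during the phase and each channel has finite capacity $\gconst{C}$, only finitely many such messages exist, so they are consumed after finitely many loop iterations. (The $p1a$/$p2a$ messages that $\lambda$ handles in its acceptor role during the routine are processed separately and do not compete for the reply counter.) Combining these observations, $\lambda$ accumulates $\gconst{n}-\gconst{f}$ relevant replies after finitely many iterations, the preempting routine returns $ok$ or $nok$, and the phase ends.
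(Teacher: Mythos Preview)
Your proposal is correct and follows essentially the same approach as the paper: invoke Assumption~\ref{hyp:resilience} to guarantee $\gconst{n}-\gconst{f}$ live acceptors reply, then argue that each such genuine reply necessarily satisfies $C^+$ or $C^-$ (adoption yields $a_{\alpha}\simeq b$, non-adoption yields $a_{\alpha}\not\preccurlyeq b$), so none is discarded and the preempting routine terminates. Your treatment is in fact slightly more thorough than the paper's---you make explicit the channel-fairness step, the phase-2 proposal check, and the bounded number of stale messages, and you flag the \texttt{fill\_cl} subtlety that the paper handles only implicitly---but the core argument is identical.
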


\begin{proof}
  Let $\phi$ be a phase executed by some proposer $\lambda$.  At the
  beginning of $\phi$, the proposer $\lambda$ has broadcast a message
  with its proposer tag $a_{\lambda}$, along with a consensus value
  $p$ in case of a $p2a$ message.  Assumption~\ref{hyp:resilience}
  (Section~\ref{sec:model}) ensures that at least $\gconst{n}-\gconst{f}$ acceptors eventually
  reply.  The only reason why $\phi$ would be endless is $\lambda$
  discarding real replies from these acceptors in the preempting
  routine.  For each such acceptor $\alpha$, when it receives the
  message sent by $\lambda$, it first updates the canceling fields in
  $a_{\alpha}$ and $a_{\lambda}$.  Let $a,b$ respectively be the
  updated versions of $a_{\alpha}, a_{\lambda}$, and $\mu = \chi(a)$.
  According to the acceptor Algorithm~\ref{alg:acceptor}, if the
  acceptor $\alpha$ adopts the tag $b$ then we have $a \simeq b$, and
  in case of a $p2a$ message, it also accepts the consensus value,
  i.e., $r_{\alpha}[\chi(a)] = (b,p)$; otherwise, we must have $a \not\preccurlyeq
  b$.  These two cases correspond exactly to the conditions $C^+$ and
  $C^-$ in the Algorithm~\ref{alg:preemptingroutine}.  In other words,
  real replies are not discarded by $\lambda$, and since there are at
  least $\gconst{n}-\gconst{f}$ such replies, phase $\phi$ eventually
  ends.\gqed
\end{proof}

Given any configuration $\gamma$ of the system and any processor
idenditifer $\mu$, let $S(\gamma)$ and
$S^{cl}(\mu,\gamma)$ be two sets as follows. The set $S(\gamma)$ is the set of
every tag present either in a processor memory or in some message in a communication channel, in the configuration
$\gamma$. The set $S^{cl}(\mu,\gamma)$ denotes the collection of
labels $l$ such that either $l$ is the value of the label field
$x[\mu].l$ for some tag $x$ in $S(\gamma)$, or $l$ appears in the
label history $H_{\alpha}[\mu]$ of some processor $\alpha$, in the
configuration $\gamma$.

\begin{lemma}[Storage Limits]
  \label{lem:label_storage_limits}
  For every configuration $\gamma$ and every identifier $\mu$, we have
  $|S(\gamma)| \leq \gconst{K}$ and $|S^{cl}(\mu,\gamma)| \leq
  \gconst{K}^{cl}$.  In particular, the number of label values
  $x[\mu].l$ with $x$ in $S(\gamma)$ is less than or equal to
  $\gconst{K}$.
\end{lemma}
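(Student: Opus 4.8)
The plan is to treat this as a pure counting argument over the finitely many physical locations where a tag can be stored, and then to bootstrap the label bound from the tag bound. The two inequalities are essentially independent in spirit: the first bounds how many distinct tags the whole system can hold in a single configuration, and the second reuses that bound together with the fixed sizes of the label histories. No dynamic or transient-fault reasoning is needed, since both bounds are forced by the static capacities of memory and channels in every configuration $\gamma$.

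First I would prove $|S(\gamma)| \leq \gconst{K}$. A tag contributes to $S(\gamma)$ only by residing in a processor's memory or inside a message in transit. On the processor side, each of the $\gconst{n}$ processors holds one processor tag $a_{\alpha}$, contributing at most $\gconst{n}$ tags. On the channel side, the network is complete with bidirectional channels, so there are $\frac{\gconst{n}(\gconst{n}-1)}{2}$ channels, each of capacity $\gconst{C}$, and each message carries at most one tag; this contributes at most $\gconst{C}\frac{\gconst{n}(\gconst{n}-1)}{2}$ tags. Summing yields $|S(\gamma)| \leq \gconst{n} + \gconst{C}\frac{\gconst{n}(\gconst{n}-1)}{2} = \gconst{K}$, and the ``in particular'' clause is then immediate, since the number of label values $x[\mu].l$ cannot exceed the number of tags. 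The main obstacle here is bookkeeping rather than depth: one must pin down exactly which memory fields the definition of $S(\gamma)$ ranges over, and in particular confirm that the processor tag $a_{\alpha}$ is the only per-processor tag that the accounting must charge, so that the total closes at precisely $\gconst{K}$ rather than overshooting.

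Then I would derive $|S^{cl}(\mu,\gamma)| \leq \gconst{K}^{cl}$ directly from the first bound. By definition a label in $S^{cl}(\mu,\gamma)$ is of one of two kinds: a label field $x[\mu].l$ of some tag $x \in S(\gamma)$, or a label stored in some history $H_{\alpha}[\mu]$. The first kind contributes at most $|S(\gamma)| \leq \gconst{K}$ distinct labels by the previous step. For the second kind, each of the $\gconst{n}$ processors $\alpha$ maintains a single fifo history $H_{\alpha}[\mu]$ of size $\gconst{K}$, so the histories jointly hold at most $\gconst{n}\gconst{K}$ labels. Adding the two contributions gives $|S^{cl}(\mu,\gamma)| \leq \gconst{K} + \gconst{n}\gconst{K} = (\gconst{n}+1)\gconst{K} = \gconst{K}^{cl}$, as required.
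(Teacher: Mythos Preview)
Your proposal is correct and follows essentially the same counting argument as the paper: bound the tags by the $\gconst{n}$ processor tags plus the at most $\gconst{C}\frac{\gconst{n}(\gconst{n}-1)}{2}$ tags in channels, and then bound $S^{cl}(\mu,\gamma)$ by the $\gconst{K}$ label fields plus the $\gconst{n}$ histories of size $\gconst{K}$. The paper's proof is identical in structure and arithmetic; your extra remark about checking that $a_{\alpha}$ is the only per-processor tag charged is a reasonable caveat that the paper simply asserts without discussion.
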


\begin{proof}
  Consider a configuration $\gamma$.  For each processor $\alpha$,
  there is one tag value (tag $a_{\alpha}$) in the processor state
  $\gamma(\alpha)$ of $\alpha$.  For each communication channel
  $(\alpha,\beta)$, there are at most $\gconst{C}$ different messages
  in the channel state $\gamma(\alpha,\beta)$; all these messages have
  one tag each.  Hence, the maximum number of tags present in the
  configuration $\gamma$ is $\gconst{n}$ plus $\gconst{C}$ times the
  number of communication channels. The network being complete, the
  number of communication channels is
  $\gconst{C}\frac{\gconst{n}(\gconst{n}-1)}{2}$, thus we have
  $\gconst{K} \geq | S(\gamma) |$.  For every $\alpha$, the maximum
  size of the history $H_{\alpha}[\mu]$ is $\gconst{K}$.  Hence, the
  size of $S^{cl}(\mu,\gamma)$ is bounded above by $\gconst{K}$
  (labels $x[\mu].l$ for $x$ in $S(\gamma)$) plus $\gconst{K}$ times
  the number of processors (labels from $H_{\alpha}[\mu]$ for every
  processor $\alpha$), i.e., $(\gconst{n}+1)\cdot\gconst{K} =
  \gconst{K}^{cl}$.\gqed
\end{proof}

\subsection{Tag Stabilization - Definitions}
\label{app:tag_stabilization_definitions}

\begin{definition}[Interrupt]
  \label{def:interrupt}
  Let $\lambda$ be any processor and consider a local subexecution
  $\sigma = (\gamma_k(\lambda))_{k_0 \leq k \leq k_1}$ at $\lambda$.
  We note $a_{\lambda}^k$ for the value of $\lambda$'s tag in
  $\gamma_k(\lambda)$.  We say that an interrupt has occurred at
  position $k$ in the local subsexecution $\sigma$ when one of the
  following happens
  \begin{itemize}
  \item $\mu < \lambda$, type $[\mu, \leftarrow]$ : the first valid
    entry moves to $\mu$ such that $\mu = \chi(a_{\lambda}^{k+1}) <
    \chi(a_{\lambda}^k)$, or the first valid entry does not change but
    the label does, i.e., $\mu = \chi(a_{\lambda}^{k+1}) =
    \chi(a_{\lambda}^k)$ and $a_{\lambda}^k[\mu].l \neq
    a_{\lambda}^{k+1}[\mu].l$.

  \item $\mu < \lambda$, type $[\mu, \rightarrow]$ : the first valid
    entry moves to $\mu$ such that $\mu = \chi(a_{\lambda}^{k+1}) >
    \chi(a_{\lambda}^k)$.

  \item type $[\lambda,\max]$ : the first valid entry is the same but
    there is a change of label in the entry $\lambda$ due to the step
    or trial value having reached the maximum value
    $2^{\mathfrak{b}}$; we then have $\chi(a_{\lambda}^{k+1}) =
    \chi(a_{\lambda}^k) = \lambda$ and $a_{\lambda}^k[\lambda].l \neq
    a_{\lambda}^{k+1}[\lambda].l$.

  \item $[\lambda,cl]$ : the first valid entry is the same but there
    is a change of label in the entry $\lambda$ due to the canceling
    of the corresponding label; we then have $\chi(a_{\lambda}^{k+1})
    = \chi(a_{\lambda}^k) = \lambda$ and $a_{\lambda}^k[\lambda].l
    \neq a_{\lambda}^{k+1}[\lambda].l$.
  \end{itemize}
  For each type $[\mu,*]$ ($\mu \leq \lambda$) of interrupt, we note
  $|[\mu,*]|$ the total number (possibly infinite) of interrupts of
  type $[\mu,*]$ that occur during the local subexecution $\sigma$.
\end{definition}

\begin{remark}
  \label{remark:interrupt}
  If there is an interrupt like $[\mu,\leftarrow]$, $\mu < \lambda$,
  occurs at position $k$, then necessarily there is a change of label
  in the field $a_{\lambda}[\mu].l$.  In addition, the new label $l'$
  is greater than the previous label $l$, i.e., $l \prec l'$.  Also
  note that, if $\chi(a_{\lambda}^k) = \lambda$, the proposer
  $\lambda$ never copies the content of the entry $\lambda$ of a
  received tag, say $a$, to the entry $\lambda$ of its proposer tag,
  even if $a_{\lambda}^k[\lambda].l \prec a[\lambda].l$. New labels in
  the entry $\lambda$ are only produced with the label increment
  function applied to the union of the current label and the canceling
  label history $H_{\lambda}^{cl}$.
\end{remark}

\begin{definition}[Epoch]
  \label{def:epoch}
  Let $\lambda$ be a processor.  An epoch $\sigma$ at $\lambda$ is a
  maximal (for the inclusion of local subexecutions) local
  subexecution at $\lambda$ such that no interrupts occur at any
  position in $\sigma$ except for the last position.  By the
  definition of an interrupt, every tag values within a given
  epoch $\sigma$ at $\lambda$ have the same first valid entry, say
  $\mu$, and the same corresponding label, i.e., for any two processor
  states that appear in $\sigma$, the corresponding tag values $a$ and
  $a'$ satisfies $\chi(a) = \chi(a') = \mu$ and $b[\mu].l =
  b'[\mu].l$. We note $\mu_{\sigma}$ and $l_{\sigma}$ for the first
  valid entry and associated label common to all the tag values in
  $\sigma$.
\end{definition}

\begin{definition}[$h$-Safe Epoch]
  \label{def:safe_epoch}
  Consider an execution $E$ and a processor $\lambda$.  Let $\Sigma$
  be a subexecution in $E$ such that the local subexecution $\sigma =
  \Sigma(\lambda)$ is an epoch at $\lambda$.  Let $\gamma^*$ be the
  configuration of the system right before the subexecution $\Sigma$,
  and $h$ be a bounded integer.  The epoch $\sigma$ is said to be
  \emph{$h$-safe} when the interrupt at the end of $\sigma$ is due to
  one of the integer fields in $a_{\lambda}[\mu_{\sigma}]$ having
  reached the maximum value $2^{\mathfrak{b}}$. In addition, for every
  processor $\alpha$ (resp. communication channel $(\alpha,\beta)$),
  for every tag $x$ in $\gamma^*(\alpha)$
  (resp. $\gamma^*(\alpha,\beta)$), if $x[\mu_{\sigma}].l =
  l_{\sigma}$ then the step and trial values in $x[\mu_{\sigma}].l$
  have values less than or equal to $h$.
\end{definition}

\begin{remark}
  If there is an epoch $\sigma$ at processor $\lambda$ such that
  $\mu_{\sigma} = \lambda$ and $\lambda$ has produced the label $l_{\sigma}$, then necessarily, at the beginning of
  $\sigma$, the step and trial value in $b_{\lambda}[\lambda]$ are
  equal to zero. However, other processors may already be using the
  label $l_{\sigma}$ with arbitrary corresponding step and trial
  values. The definition of a $h$-safe epoch ensures that the epoch is
  truly as long as counting from $h$ to $2^{\mathfrak{b}}$.
\end{remark}

\subsection{Tag Stabilization - Results}
\label{app:tag_stabilization_results}

\begin{lemma}
  \label{lem:chi_leq_lambda}
  Let $\lambda$ be any processor.  Then the first valid entry of its
  proposer tag is eventually always located at the left of the entry
  indexed by $\lambda$, i.e., $\chi(a_{\lambda}) \leq \lambda$.
\end{lemma}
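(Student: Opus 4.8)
The plan is to exploit the protocol's rule that every processor continually re-validates the entry of its own tag indexed by its own identifier; this keeps the entry $a_\lambda[\lambda]$ valid after each step, which by the definition of the first valid entry immediately forces $\chi(a_\lambda)\le\lambda$. The starting observation is that $a_\lambda$ is a local variable of $\lambda$, so it changes only when $\lambda$ executes a step; between two consecutive steps of $\lambda$ the tag $a_\lambda$, and hence $\chi(a_\lambda)$, is constant. It therefore suffices to control $\chi(a_\lambda)$ immediately after each step of $\lambda$.

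The core step is to invoke the ``always check'' rule stated in the Precisions paragraph (realized by \texttt{check\_entry}, which is applied inside $\nu^s$, $\nu^t$ and the preempting routine, and in the acceptor handling of $p1a$/$p2a$ messages): at the end of every step, $\lambda$ tests whether $a_\lambda[\lambda]$ is valid and, if not, replaces its label by a freshly produced one (applying $\nu$ to the canceling label history), resets the step and trial fields to $0$, and sets the canceling field to $\bot$. By construction this makes $a_\lambda[\lambda]$ valid, since its step and trial become $0<2^{\mathfrak b}$ and its canceling field is $\bot$. The label increment $\nu$ is always applicable here because the canceling label history has size $\gconst{M}=(\gconst{K}+1)\gconst{K}^{cl}$, which equals the chosen dimension $d$ of the labeling scheme, so $\nu$ receives at most $d$ labels. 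Hence, right after any step of $\lambda$, the entry $a_\lambda[\lambda]$ is valid.

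To finish, note that once $a_\lambda[\lambda]$ is valid we have $\lambda\in\{\mu\in\Pi\mid a_\lambda[\mu]\text{ valid}\}$, so by Definition~\ref{def:first_valid_entry}, $\chi(a_\lambda)=\min(\{\mu\mid a_\lambda[\mu]\text{ valid}\}\cup\{\omega\})\le\lambda$. Since $\lambda$ is live and keeps taking steps (it keeps running its loop and/or processing the messages that a proposer keeps broadcasting under Assumption~\ref{hyp:module_theta}), it eventually executes a step, after which $a_\lambda[\lambda]$ is valid; the validity persists because it is re-established at every subsequent step and $a_\lambda$ is unchanged in between. Thus from that point on $\chi(a_\lambda)\le\lambda$ holds in every configuration. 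The only gap that is not covered is the arbitrary initial prefix (before that first step), where $a_\lambda$ may have $\chi(a_\lambda)>\lambda$; this is precisely why the statement is \emph{eventually} always rather than always.

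The main obstacle is the case analysis verifying that \emph{no} code path modifying $a_\lambda$ can leave $a_\lambda[\lambda]$ invalid without the validity check being applied afterwards within the same atomic step. Concretely, one must inspect every place where $a_\lambda$ is altered — the acceptor adoption of a received tag on $p1a$/$p2a$, the increments $\nu^s$ and $\nu^t$, and the copy-and-increment in the preempting routine — and confirm each is followed by \texttt{check\_entry} on entry $\lambda$. Remark~\ref{remark:interrupt} is useful here, as it guarantees that $\lambda$ never copies a received label into its own entry $\lambda$, so adoption cannot silently corrupt $a_\lambda[\lambda]$ in a way that evades the check.
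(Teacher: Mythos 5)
Your proof is correct and follows essentially the same route as the paper's: both arguments rest on the fact that \texttt{check\_entry} (the ``always check'' rule of the Precisions paragraph) re-validates the entry $a_{\lambda}[\lambda]$ whenever it is found invalid, which by Definition~\ref{def:first_valid_entry} forces $\chi(a_{\lambda}) \leq \lambda$ from the first step of $\lambda$ onward. The only cosmetic difference is that you maintain the stronger invariant ``$a_{\lambda}[\lambda]$ is valid after every step,'' whereas the paper establishes $\chi(a_{\lambda}) \leq \lambda$ once and then argues it is preserved because subsequent updates are either increments ($\nu^s$, $\nu^t$) or copies into entries $\mu < \lambda$; both variants reduce to the same inspection of the code paths that modify $a_{\lambda}$.
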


\begin{proof}
  This comes from the fact that whenever the entry
  $a_{\lambda}[\lambda]$ is invalid, the processor $\lambda$ produces
  a new label in $a_{\lambda}[\lambda]$ and resets the step, trial and
  canceling field (cf. procedure \texttt{check\_entry},
  Algorithm~\ref{alg:tagprocs3}). Once $\chi(a_{\lambda}) \leq
  \lambda$, every consequent tag values is obtained either with the
  step or trial increment functions ($\nu^s$ or $\nu^t$), or by
  copying the content of a valid entry $\mu < \lambda$ of some tag to
  the entry $a_{\alpha}[\mu]$.  Hence the first valid entry remains
  located before the entry $\lambda$.\gqed
\end{proof}

\begin{remark}
  Thanks to this lemma, for every processor $\lambda$, it is now
  assumed, unless stated explicitly, that the entry
  $\chi(a_{\lambda})$ is always located before the entry $\lambda$,
  i.e., $\chi(a_{\lambda}) \leq \lambda$.
\end{remark}

\begin{lemma}[Cycle of Labels]
  \label{lem:cycle_labels}
  Consider a subexecution $E$, a processor $\lambda$ and an entry $\mu
  < \lambda$ in the tag variable $a_{\lambda}$. The label
  value in $a_{\lambda}[\mu].l$ can change during the subexecution $E$ and we
  note $(l^i)_{1 \leq i \leq T+1}$ for the sequence of successive
  distinct label values that are taken by the label $a_{\lambda}[\mu].l$ in the
  entry $\mu$ during the subexecution $E$. We assume that the first
  $T$ labels $l^1, \dots, l^T$ are different from each other, i.e.,
  for every $1 \leq i < j \leq T$, $l^i \neq l^j$.
  \begin{itemize}
  \item If $T > \gconst{K}$, then at least one of the label $l^i$ has
    been produced\footnote{Precisely, it has invoked the label
      increment function to update the entry $\mu$ of its tag
      $a_{\mu}$.} by the processor $\mu$ during $E$.
  \item If $T \leq \gconst{K}$ and $l^{T+1} = l^1$, then when the
    processor $\lambda$ adopts the label $l^{T+1}$ in the entry $\mu$ of
    its tag $a_{\lambda}$, the entry $\mu$ becomes invalid.
  \end{itemize}
\end{lemma}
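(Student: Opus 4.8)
The plan is to treat the two bullets separately, resting on two facts established earlier. First, the structural invariant of the tag system that \emph{only} processor $\mu$ ever creates a fresh label in the entry indexed by $\mu$: every other processor, including $\lambda$, can only overwrite $a_{\lambda}[\mu].l$ by copying an already-present entry-$\mu$ label when it adopts a received tag (recall that adoption only rewrites the entry $\chi(b)$ of the adopted tag, so $a_{\lambda}[\mu].l$ changes exactly when $\lambda$ adopts some tag $b$ with $\chi(b)=\mu$). Second, since $\mu<\lambda$, each such change is an interrupt of type $[\mu,\leftarrow]$, so Remark~\ref{remark:interrupt} guarantees that the new label is strictly greater than the old, i.e. $l\prec l'$.

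For the first bullet I would argue by counting the labels that can ever occupy the entry $\mu$ of any tag during $E$. Let $P$ be the set of labels occurring as $x[\mu].l$ for some tag $x$ present in the system (processor memory or channel) at the first configuration of $E$, and let $Q$ be the set of labels produced by $\mu$ in its own entry during $E$. By Lemma~\ref{lem:label_storage_limits}, $|P|\leq\gconst{K}$. A straightforward induction on the transitions of $E$ shows that every value appearing as $x[\mu].l$ for any tag $x$ in the system during $E$ lies in $P\cup Q$: at each step an entry-$\mu$ label field is either unchanged, overwritten by a copy of an already-present entry-$\mu$ label (inductively in $P\cup Q$), or freshly produced, which by the invariant only $\mu$ can do (hence in $Q$). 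Each $l^i$ is, at the instant $\lambda$ adopts it, the value $b[\mu].l$ of a received tag $b$ sitting in a channel, so $l^i\in P\cup Q$. Since $l^1,\dots,l^T$ are pairwise distinct and $T>\gconst{K}\geq|P|$, they cannot all lie in $P$, so at least one $l^i$ lies in $Q$, i.e. was produced by $\mu$ during $E$.

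For the second bullet I would track the fifo history $H_{\lambda}[\mu]$, which has capacity $\gconst{K}$. Its only stores come from changes of $a_{\lambda}[\mu].l$, so over $E$ they are exactly $l^1,\dots,l^T$ in this order (the old label is pushed at each transition $l^i\to l^{i+1}$). As these $T\leq\gconst{K}$ values are pairwise distinct and the capacity is $\gconst{K}$, none is evicted; thus just before $\lambda$ adopts $l^{T+1}$, and after pushing the old label $l^T$, the history contains $\{l^1,\dots,l^T\}$, in particular $l^2$. Now $l^1\prec l^2$ by Remark~\ref{remark:interrupt}, so by antisymmetry of $\prec$ together with $l^1\neq l^2$ we get $l^2\not\preccurlyeq l^1$, i.e. $l^2$ is a canceling label for $l^1=l^{T+1}$. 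Since, on adopting $l^{T+1}$, the algorithm searches $H_{\lambda}[\mu]$ for a label canceling the new label and sets the canceling field accordingly, the field $a_{\lambda}[\mu].cl$ becomes non-null and the entry $\mu$ becomes invalid.

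I expect the main obstacle to be the rigorous justification of the provenance invariant used in the first bullet, namely that no processor other than $\mu$ can introduce a new entry-$\mu$ label. This forces one to inspect every site that may write entry $\mu$ of a tag --- adoption of a received tag, the preempting routine's copy of $a_{\alpha}[\chi(a_{\alpha})]$, and the \texttt{check\_entry}, $\nu^s$ and $\nu^t$ procedures --- and to verify that the only fresh labels are those produced by \texttt{check\_entry} on the owner's own entry. By contrast the second bullet is essentially fifo bookkeeping plus the single antisymmetry step.
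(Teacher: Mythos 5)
Your proof is correct and follows essentially the same route as the paper's: the first bullet rests on the storage bound $|S(\gamma)|\leq\gconst{K}$ together with the invariant that only $\mu$ introduces fresh labels in entry $\mu$, and the second bullet rests on the fifo history $H_{\lambda}[\mu]$ of capacity $\gconst{K}$ retaining $l^1,\dots,l^T$ so that $l^2$, which satisfies $l^2\not\preccurlyeq l^1=l^{T+1}$ by Remark~\ref{remark:interrupt} and antisymmetry, is found as a canceling label. Your version merely makes explicit the provenance induction ($P\cup Q$) and the no-eviction bookkeeping that the paper's terser proof leaves implicit.
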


\begin{proof}
  First note that a processor adopts a new label in the entry $\mu$ of
  one of its tag, only when the old label is less than the new label.
  Hence, we have for every $1 \leq i \leq T$, $l^i \prec l^{i+1}$ and,
  in particular, if $l^1 = l^{T+1}$,  $l^2 \not\preccurlyeq
  l^{T+1}$.
  Assume $T > \gconst{K}$.  Since in every configuration there is at
  most $\gconst{K}$ tags in the system, and $\mu$ is the only source
  of labels in the entry $\mu$, the fact that $\lambda$ has seen more
  than $\gconst{K}$ different label values in the entry $\mu$ is
  possible only if $\mu$ has produced at least one label during $E$.
  If $T \leq \gconst{K}$ and $l^1 = l^{T+1}$, i.e., there is a cycle of
  length $T$, then when $\lambda$ adopts the label $l^{T+1} = l^1$,
  the label history $H_{\lambda}[\mu]$ contains the whole sequence
  $l^1, \dots, l^T$ since its size is
  $\gconst{K}$. Hence, $\lambda$ sees the label $l^2$ that cancels the label
  $l^{T+1}$, and the entry $\mu$ becomes invalid.\gqed
\end{proof}

\begin{lemma}[Counting the Interrupts]
  \label{lem:countints}
  Consider an infinite execution $E_{\infty}$ and let $\lambda$ be a processor
  identifier such that every processor $\mu < \lambda$ produces
  labels finitely many times.  Consider an identifier $\mu < \lambda$
  and any processor $\rho \geq \lambda$.  Then, the local execution
  $E_{\infty}(\rho)$ at $\rho$ induces a sequence of interrupts such that
  \begin{equation}
    |[\mu,\leftarrow]| \leq R_{\mu} = (J_{\mu} + 1)\cdot(\gconst{K}+1) - 1
  \end{equation}
  where $J_{\mu}$ is the number of times the processor $\mu$ has
  produced a label since the beginning of the execution.
\end{lemma}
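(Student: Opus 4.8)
The plan is to exploit the two regimes of Lemma~\ref{lem:cycle_labels} by cutting the infinite execution along the finitely many label productions of $\mu$. Since $\mu < \lambda$, the hypothesis guarantees that $\mu$ produces labels only $J_{\mu}$ times during $E_{\infty}$; these $J_{\mu}$ production events split $E_{\infty}$ into $J_{\mu}+1$ consecutive time intervals, during each of which $\mu$ produces no label. The whole argument then reduces to bounding, inside a single such interval, the number of $[\mu,\leftarrow]$ interrupts occurring in $\rho$'s local execution, and summing over the $J_{\mu}+1$ intervals.

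First I would record the two structural facts I rely on. By Remark~\ref{remark:interrupt}, every $[\mu,\leftarrow]$ interrupt at $\rho$ is accompanied by a label change in $a_{\rho}[\mu].l$ to a strictly greater label, with $\mu$ being the first valid entry immediately afterwards; and since $\rho \geq \lambda > \mu$, the processor $\rho$ never itself produces a label in entry $\mu$, so every such label is adopted from an incoming tag. Consequently, within one interval the unique source of labels in entry $\mu$ is silent, so no fresh label can appear there: by Lemma~\ref{lem:label_storage_limits} the system holds at most $\gconst{K}$ label values in entry $\mu$ in any configuration, whence at most $\gconst{K}$ distinct labels can ever occupy $a_{\rho}[\mu].l$ during the interval. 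This is exactly the content of the first bullet of Lemma~\ref{lem:cycle_labels}.

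The core step is to show that, within one interval, no label is adopted twice as a $[\mu,\leftarrow]$ interrupt, so the at most $\gconst{K}$ available labels cap the number of such interrupts at $\gconst{K}$. Suppose some label $m$ were re-adopted while keeping $\mu$ valid. Between its two adoptions $\rho$ must have adopted some strictly greater label $m'$, with $m \prec m'$ (consecutively adopted labels increase), which was in turn stored in the fifo history $H_{\rho}[\mu]$ when it was later replaced. Because at most $\gconst{K}$ distinct labels are inserted into $H_{\rho}[\mu]$ during the interval and $H_{\rho}[\mu]$ has size $\gconst{K}$, the witness $m'$ is still present when $m$ is re-adopted; since $m \prec m'$ yields $m' \not\preccurlyeq m$, the algorithm finds $m'$ as a canceling label for $m$ and sets the canceling field, so entry $\mu$ becomes invalid immediately after the adoption, contradicting $\mu$ being the first valid entry after the interrupt. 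This is precisely the mechanism of the second bullet of Lemma~\ref{lem:cycle_labels}. Hence the $[\mu,\leftarrow]$ interrupts of one interval are injectively tagged by distinct elements of a pool of size at most $\gconst{K}$, giving at most $\gconst{K}$ per interval and at most $(J_{\mu}+1)\gconst{K} \leq (J_{\mu}+1)(\gconst{K}+1)-1 = R_{\mu}$ in total.

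The main obstacle will be the bookkeeping around $H_{\rho}[\mu]$: one must verify that the canceling witness $m'$ is genuinely still in the history at the moment of the forbidden re-adoption, which hinges on the interval containing at most $\gconst{K}$ distinct labels so that nothing inserted during the interval is evicted before it is needed. The sub-case of a $[\mu,\leftarrow]$ interrupt in which the first valid entry jumps to $\mu$ from a larger index (rather than remaining at $\mu$) also deserves a line: Remark~\ref{remark:interrupt} guarantees that it too comes with an increasing label change in entry $\mu$, so it fits the same ``adopt a greater, still-valid label'' pattern and is counted identically.
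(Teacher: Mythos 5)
Your proof is correct, but it organizes the counting differently from the paper's. The paper works on the sequence of $[\mu,\leftarrow]$ interrupts itself: assuming more than $R_{\mu}$ of them, it cuts the induced sequence of adopted labels into $J_{\mu}+1$ consecutive blocks of $\gconst{K}+1$ labels each and applies Lemma~\ref{lem:cycle_labels} blockwise --- a block of pairwise-distinct labels forces a label production by $\mu$ (first bullet), which can happen at most $J_{\mu}$ times, so by pigeonhole some block contains a repetition, and the second bullet then makes entry $\mu$ invalid right after a $[\mu,\leftarrow]$ interrupt, a contradiction. You instead cut \emph{time} at the $J_{\mu}$ production events of $\mu$ and bound the interrupts inside each of the $J_{\mu}+1$ silent intervals directly: the pool of entry-$\mu$ labels is frozen at size at most $\gconst{K}$, and your fifo-survival argument for the witness $m'$ rules out any valid re-adoption, so each interval contributes at most $\gconst{K}$ interrupts. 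The two decompositions are dual and lean on exactly the same two bullets of Lemma~\ref{lem:cycle_labels}; yours is direct rather than by contradiction and gives the marginally sharper bound $(J_{\mu}+1)\cdot\gconst{K}$. It is also slightly more careful on a point the paper elides: invoking the second bullet on a block in which ``some label appears twice'' tacitly requires restarting the sequence at the first occurrence of the repeated label so that the intermediate labels are pairwise distinct and all fit in $H_{\rho}[\mu]$, whereas your explicit bookkeeping of the immediate successor $m'$ and of the number of distinct insertions into the fifo during the interval supplies that justification. What the paper's block decomposition buys in exchange is locality: it never needs the global claim that only $\gconst{K}$ labels are ``available'' throughout a whole time interval, only a statement about $\gconst{K}+1$ consecutive adopted labels at a time.
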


\begin{proof}
  We note $(a_{\rho}^k)_{k \in \mathbb{N}}$ the sequence of $\rho$'s
  tag values appearing in the local execution $E_{\infty}(\rho)$.
  Assume on the contrary that $|[\mu,\leftarrow]|$ is greater than
  $R_{\mu}$.  Note that after an interrupt like $[\mu,\leftarrow]$,
  the first valid entry $\chi(a_{\rho})$ is equal to $\mu$.  In
  particular, the entry $\mu$ is valid after such interrupts.  Also,
  the label value in the entry $a_{\lambda}[\mu].l$ does not change
  after an interrupt like $[\mu,\rightarrow]$. We define an increasing
  sequence of integers $(f(i))_{1 \leq i \leq R_{\mu}+1}$ such that
  the $i$-th interrupt like $[\mu,\leftarrow]$ occurs at $f(i)$ in the
  sequence $(a_{\rho}^k)_{k \in \mathbb{N}}$.  The sequence $l^i =
  a_{\rho}^{f(i)+1}[\mu].l$ is the sequence of distinct labels
  successively taken by $a_{\rho}[\mu].l$.  We have $l^i \prec
  l^{i+1}$ for every $1 \leq i \leq R_{\mu}$.
  
  Divide the sequence $(l^i)_{1 \leq i \leq R_{\mu}+1}$ in successive
  segments $u^j$, $1 \leq j \leq J_{\mu}+1$, of size
  $\gconst{K}+1$ each.  For any $j$, if all the $\gconst{K} + 1$
  labels in $u^j$ are different, then, by
  Lemma~\ref{lem:cycle_labels}, the processor $\mu$ has produced at
  least one label.  Since the processor $\mu$ produces labels at most
  $J_{\mu}$ many times, there is some sequence $u^{j_0}$ within which some
  label appears twice.  In other words, in $u^{j_0}$ there is a cycle
  of length less than or equal to $\gconst{K}$.  By
  Lemma~\ref{lem:cycle_labels}, this implies that the entry $\mu$
  becomes invalid after an interrupt like $[\mu,\leftarrow]$; this is a
  contradiction.\gqed
\end{proof}

\begin{theorem}[Existence of a $0$-Safe Epoch]
  \label{th:existence_safe_epoch}
  Consider an infinite execution $E_{\infty}$ and let $\lambda$ be a
  processor such that every processor $\mu < \lambda$ produces labels
  finitely many times.  We note $|\lambda|$ for the number of
  identifiers $\mu \leq \lambda$, $J_{\mu}$ for the number of times a
  proposer $\mu < \lambda$ produces a label and we define
  \begin{equation}
    T_{\lambda} = (\sum_{\mu < \lambda}R_{\mu} + 1) \cdot (|\lambda|+1) \cdot (\gconst{K^{cl}}+1) \cdot (\gconst{K} + 1)
  \end{equation}
  where $R_{\mu} = (J_{\mu} + 1)\cdot(\gconst{K}+1) - 1$.  Assume that
  there are more than $T_{\lambda}$ interrupts at processor $\lambda$
  during $E_{\infty}$ and consider the concatenation $E_c(\lambda)$ of
  the first $T_{\lambda}$ epochs, $E_c(\lambda) = \sigma^1 \dots
  \sigma^{T_{\lambda}}$.  Then $E_c(\lambda)$ contains a $0$-safe epoch.
\end{theorem}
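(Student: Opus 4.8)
The plan is to view the first $T_\lambda$ epochs $\sigma^1,\dots,\sigma^{T_\lambda}$ as a sequence of length $T_\lambda$, each terminated by a single interrupt, and to apply the Pigeon-hole Principle once per factor of $T_\lambda$, peeling off the interrupt types in the order $[\mu,\leftarrow]$, $[\mu,\rightarrow]$, $[\lambda,cl]$, $[\lambda,\max]$. By Lemma~\ref{lem:countints} the number of $[\mu,\leftarrow]$ interrupts is at most $R_\mu$ for each $\mu<\lambda$, so there are at most $\sum_{\mu<\lambda}R_\mu$ left interrupts in total; writing $T_\lambda=(\sum_{\mu<\lambda}R_\mu+1)\cdot M_1$ with $M_1=(|\lambda|+1)(\gconst{K}^{cl}+1)(\gconst{K}+1)$, pigeon-holing yields a run of $M_1$ consecutive epochs none of which ends with a left interrupt. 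Along such a run $\chi(a_\lambda)$ can only increase or stay equal to $\lambda$, because a label change at a fixed entry $\mu<\lambda$ is itself a left interrupt; hence every epoch with $\chi<\lambda$ ends with a right interrupt, and since $\chi$ is non-decreasing and confined to the $|\lambda|$ identifiers $\le\lambda$, at most $|\lambda|$ right interrupts occur. A second pigeon-hole, using $M_1=(|\lambda|+1)\cdot M_2$ with $M_2=(\gconst{K}^{cl}+1)(\gconst{K}+1)$, then produces a run of $M_2$ consecutive epochs ending neither with a left nor a right interrupt, so their terminating interrupts are all $[\lambda,\max]$ or $[\lambda,cl]$ and each such epoch has $\mu_\sigma=\lambda$.

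Inside this run I would bound the $[\lambda,cl]$ interrupts. Each such interrupt cancels the current entry-$\lambda$ label $l$ by some $cl\not\preccurlyeq l$ which is stored in $H^{cl}_\lambda$, and the next label $\lambda$ produces is $\nu$ of a set containing $H^{cl}_\lambda$, hence strictly greater than $cl$; therefore $cl$ can never cancel a later label and distinct cl-interrupts consume distinct canceling labels. Moreover a canceling label cannot be one of $\lambda$'s own recent productions (those are $\prec$ the current label), so it is an ``old'' label, and by Lemma~\ref{lem:label_storage_limits} the pool of such labels has size at most $|S^{cl}(\lambda,\cdot)|\le\gconst{K}^{cl}$. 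Thus at most $\gconst{K}^{cl}$ epochs of the run end with $[\lambda,cl]$, and a third pigeon-hole on $M_2=(\gconst{K}^{cl}+1)(\gconst{K}+1)$ isolates a run $P$ of $\gconst{K}+1$ consecutive epochs all terminating with $[\lambda,\max]$. Writing $l^1\prec\dots\prec l^{\gconst{K}+1}$ for their successive labels, these are pairwise distinct: each new label is $\nu$ of a set containing all the preceding ones, which are still kept in the size-$\gconst{M}$ history, hence strictly greater than all of them.

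Finally I would use the last factor to extract a $0$-safe epoch from $P$. Let $\gamma_1^\ast$ be the configuration opening $P$ and let $\mathcal{B}$ be the set of labels appearing in entry $\lambda$ of some tag of $S(\gamma_1^\ast)$; by Lemma~\ref{lem:label_storage_limits}, $|\mathcal{B}|\le\gconst{K}$, so at least one of the $\gconst{K}+1$ distinct labels, say $l^{i_0}$, lies outside $\mathcal{B}$. The label carried during the first epoch of $P$ sits in $\lambda$'s own tag at $\gamma_1^\ast$, hence belongs to $\mathcal{B}$; therefore $i_0\ge2$, so $l^{i_0}$ is freshly produced by $\lambda$ at the opening boundary of epoch $i_0$ with step and trial reset to $0$. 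Between $\gamma_1^\ast$ and the configuration $\gamma^\ast$ opening epoch $i_0$, the only entry-$\lambda$ labels that can be introduced are $\lambda$'s productions $l^1,\dots,l^{i_0}$, since $\lambda$ is the sole producer for entry $\lambda$ and every pre-existing entry-$\lambda$ label lies in $\mathcal{B}$. As $l^{i_0}\notin\mathcal{B}$ and differs from $l^1,\dots,l^{i_0-1}$, the only tag of $\gamma^\ast$ carrying $l^{i_0}$ in entry $\lambda$ is $\lambda$'s own, with step and trial equal to $0$. Hence every tag of $\gamma^\ast$ whose entry-$\lambda$ label equals $l_\sigma=l^{i_0}$ has step and trial $\le0$, and since epoch $i_0$ ends through $[\lambda,\max]$ it is $0$-safe by Definition~\ref{def:safe_epoch}.

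I expect the main difficulty to lie not in the pigeon-holing but in the control of the label dynamics, namely the $\gconst{K}^{cl}$ bound on cl-interrupts and the distinctness of the $l^i$. Both hinge on the non-transitive labeling scheme interacting correctly with the bounded histories, via Lemmas~\ref{lem:cycle_labels} and~\ref{lem:label_storage_limits}: one must argue carefully that once $\lambda$ has absorbed all relevant old labels into $H^{cl}_\lambda$, every label it subsequently produces is greater than all of them and thus uncancellable by any label already present in the system.
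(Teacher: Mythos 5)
Your two outer pigeon-hole reductions and your final step (extracting, from $\gconst{K}+1$ consecutive $[\lambda,\max]$-epochs, one whose label is absent from the configuration preceding it, hence with step and trial reset to $0$) coincide with the paper's argument. Where you genuinely diverge is in between. The paper performs a case split: either a run of $\gconst{K}+1$ consecutive $[\lambda,\max]$-epochs exists (your last step), or there are at least $\gconst{K}^{cl}+1$ interrupts of type $[\lambda,cl]$, and in that second case the epoch following the $\gconst{K}^{cl}$-th such interrupt is shown \emph{directly} to be $0$-safe, because the window between the first and the $\gconst{K}^{cl}$-th canceling interrupt spans at most $\gconst{M}$ epochs, so $H^{cl}_{\lambda}$ has by then absorbed every candidate canceling label and the freshly produced label is both new and uncancellable. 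You instead assert an absolute bound of $\gconst{K}^{cl}$ on the number of $[\lambda,cl]$ interrupts over the whole run of $(\gconst{K}^{cl}+1)(\gconst{K}+1)$ epochs, which would make the paper's second case vacuous.

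That bound is the gap. Your argument rests on two claims: (i) once a canceling label $cl$ has fired it is stored in $H^{cl}_{\lambda}$, so every later label $\lambda$ produces satisfies $cl\prec l'$ and $cl$ never fires again; and (ii) the canceling labels all come from a pool of size at most $\gconst{K}^{cl}$ fixed at the opening of the run. Claim (i) holds only while $cl$ remains in the fifo $H^{cl}_{\lambda}$, whose size is $\gconst{M}=(\gconst{K}+1)\gconst{K}^{cl}$; but your run contains $(\gconst{K}^{cl}+1)(\gconst{K}+1)=\gconst{M}+\gconst{K}+1$ epochs, each of whose terminating interrupts pushes at least one label into $H^{cl}_{\lambda}$, so early entries can be evicted before the run ends, after which $cl$ may cancel again. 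Claim (ii) fails because the bound of Lemma~\ref{lem:label_storage_limits} is per-configuration, not a bound on a fixed pool: $\lambda$'s own productions are disseminated into tags, histories and in-transit messages, and since $\prec$ is not transitive, an old production $l'_{i}$, once evicted from $H^{cl}_{\lambda}$, can satisfy $l'_{i}\not\preccurlyeq l'_{j}$ for a much later production $l'_{j}$ and thus re-enter as a canceling label --- your parenthetical ``recent productions'' silently excludes exactly these. This is the difficulty the paper's case analysis is engineered around: it confines the decisive window to at most $\gconst{M}$ epochs so that, at the one moment that matters, the history still retains everything relevant. To repair your proof you would either have to reinstate that case split, or prove a retention property of $H^{cl}_{\lambda}$ across the full run, which the stated size $\gconst{M}$ does not provide.
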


\begin{proof}
  By Lemma~\ref{lem:countints}, we have $\sum_{\mu <
    \lambda}|[\mu,\leftarrow]| \leq \sum_{\mu < \lambda}R_{\mu}$ in
  the local execution $E_{\infty}(\lambda)$, a fortiori in the
  execution $E_c(\lambda)$.  By the pigeon-hole principle, there must
  be a local subexecution $E_1(\lambda) = \sigma^i\dots\sigma^{i+X-1}$
  in $E_c(\lambda)$, where $X = (|\lambda|+1) \cdot (\gconst{K}^{cl}+1)
  \cdot (\gconst{K} + 1)$, that contains only interrupts like
  $[\mu,\rightarrow]$, $[\lambda,\max]$ or $[\lambda,cl]$.  Naturally,
  the number of interrupts like $[\mu,\rightarrow]$ in $E_1(\lambda)$
  is less than or equal to $|\lambda|$.  Hence, another application of
  the pigeon-hole principle gives a local subexecution $E_2(\lambda) = \sigma^j\dots\sigma^{j+Y-1}$
  in $E_1(\lambda)$ where $Y = (\gconst{K}^{cl}+1)\cdot (\gconst{K}+1)$
  that contains only interrupts like $[\lambda,\max]$ or $[\lambda,cl]$.

  Assume first that within $E_2(\lambda)$, there is a subexecution
  $E_3(\lambda) = \sigma^k\dots\sigma^{k+Z-1}$ where $Z =
  \gconst{K}+1$ in which there are only interrupts like
  $[\lambda,\max]$. Since $\gconst{K}+1 \leq \gconst{M}$ the size of
  the canceling label history\footnote{Recall that the canceling label
    history also records the label produced in the entry $\lambda$.},
  we have $l_{\sigma^k}, \dots, l_{\sigma^{h-1}} \prec l_{\sigma^h}$, for
  every $k < h < k+Z$.  In particular, all the labels
  $l_{\sigma^k},\dots,l_{\sigma^{k+Z-1}}$ are different. Since $Z =
  \gconst{K}+1$ and since there is at most $\gconst{K}$ tags in a
  given configuration, there is necessarily some $k \leq h < k + Z$
  such that the label $l_{\sigma^h}$ does not appear\footnote{Note that
    $\lambda$ is the only processor to produce labels in entry
    $\lambda$, so during the subexecution that correspond to an epoch
    $\sigma^h$ at $\lambda$, the set of labels in the entry $\lambda$
    of every tag in the system is non-increasing.}  in the
  configuration $\gamma^*$ that corresponds to the last position in
  $\sigma^{h-1}$. Also, by construction, we have $\mu_{\sigma^h} =
  \lambda$ and $\sigma^h$ ends with an interrupt like
  $[\lambda,\max]$.  Hence, $\sigma^h$ is $0$-safe.

  Now, assume that there is no subexecution $E_3$ in $E_2$ as in the
  previous paragraph.  This means that if we look at the successive
  interrupts that occur during $E_2(\lambda)$, between any two
  successive interrupts like $[\lambda,cl]$, there is at most
  $\gconst{K}$ interrupts like $[\lambda,\max]$. Since the length of
  $E_2(\lambda)$ is $(\gconst{K^{cl}}+1) \cdot (\gconst{K} + 1)$,
  there must be at least $\gconst{K^{cl}}+1$ interrupts like
  $[\lambda,cl]$.  Let $E_4(\lambda)$ be the local subexecution that
  starts with the epoch associated with the first interrupt like $[\lambda,cl]$ and ends
  with the epoch associated with the interrupt $[\lambda,cl]$ numbered $\gconst{K}^{cl}$. Let
  $\sigma$ in $E_2(\lambda)$ be the epoch right after $E_4(\lambda)$.
  By construction, there is at most
  $\gconst{K}^{cl}\cdot(\gconst{K}+1)$ epochs in $E_4(\lambda)$
  which is the size $\gconst{M}$ of the history
  $H_{\lambda}^{cl}$.  Hence, at the beginning of $\sigma$, the
  history $H_{\lambda}^{cl}$ contains all the labels the processor
  $\lambda$ has produced during $E_4$ as well as all the
  $\gconst{K}^{cl}$ (exactly) labels it has received during $E_4$.
  Since there is at most $\gconst{K^{cl}}$ candidates label for
  canceling in the system, necessarily, in the first configuration of
  $\sigma$, the history $H_{\lambda}^{cl}$ contains every candidates
  label for canceling present in the whole system. And since
  $l_{\sigma}$ is greater, by construction, than every label in the
  history $H_{\lambda}^{cl}$, $l_{\sigma}$ was not present in the entry
  $\lambda$ of some tag in the configuration that precedes $\sigma$
  and it cannot be canceled by any other label present in the
  the system. In addition, by construction, $E_2$
  only contains interrupts like $[\lambda,\max]$ or
  $[\lambda,cl]$. From what we said about $l_{\sigma}$, the interrupt
  at the end of $\sigma$ is necessarily $[\lambda,\max]$. In other
  words, the epoch $\sigma$ is a $0$-safe epoch.\gqed
\end{proof}

\begin{remark}
  Note that the epoch found in the proof is not necessarily the unique
  $0$-safe epoch in $E_c(\lambda)$.  The idea is only to prove that
  there exists a practically infinite epoch.  If the first epoch
  $\sigma$ at $\lambda$ ends because the corresponding label
  $l_{\sigma}$ in the entry $\mu_{\sigma}$ gets canceled, but lasts a
  practically infinite long time, then this epoch can be considered,
  from an informal point of view, safe.  One could worry about having
  only very ``short'' epochs at $\lambda$ due to some inconsistencies
  (canceling labels, or entries with high values in the step and trial
  fields) in the system.  Theorem~\ref{th:existence_safe_epoch} shows
  that every time a ``short'' epoch ends,  the system somehow
  loses one of its inconsistencies, and, eventually, the proposer
  $\lambda$ reaches a practically infinite epoch.  Note also that a
  $0$-safe epoch and a $1$-safe or a $2$-safe epoch are, in practice,
  as long as each other. Indeed, any $h$-safe epoch with $h$ very
  small compared to $2^{\mathfrak{b}}$ can be considered practically
  infinite. Whether $h$ can be considered very small depends on the
  concrete timescale of the system.
\end{remark}

\begin{remark}
  Besides, every processor $\alpha$ always checks that the entry
  $\alpha$ is valid, and, if not, it produces a new label in the entry
  $a_{\alpha}[\alpha]$ and resets the step, trial and canceling label
  field. Doing so, even if $\alpha$'s first valid entry $\mu$ is located
  before the entry $\alpha$, the processor $\alpha$ still works to
  find a ``winning'' label for its entry $\alpha$. In that case, if
  the entry $\mu$ becomes invalid, then the entry $\alpha$ is ready to
  be used, and a safe epoch can start without waiting any longer.
\end{remark}


\subsection{Safety - Definitions}
\label{app:practical_safety_definitions}

To prove the safety property within a subexecution, we have to focus
on the events that correspond to deciding a proposal, e.g., $(b,p)$ at
processor $\alpha$.  Such an event may be due to corrupted messages in
the communication channels an any stage of the Paxos algorithm.
Indeed, a proposer selects the proposal it will send in its phase $2$
thanks to the replies it has received at the end of its phase
$1$. Hence, if one of these messages is corrupted, then the safety
might be violated. However, there is a finite number of corrupted
messages since the capacity of the communication channels is
finite. Hence, violations of the safety do not happen very often.  To
formally deal with these issues, we define the notion of scenario that
corresponds to specific chain of events involved in the Paxos
algorithm.

\begin{definition}[Scenario]
  \label{def:scenario}
  Consider a subexecution $E = (\gamma_k)_{k_0 \leq k \leq k_1}$.  A \emph{scenario
  in $E$} is a sequence $U = (U_i)_{0 \leq i < I}$ where each $U_i$ is a
  collection of events in $E$. In addition, every event in $U_i$
  happens before every event in $U_{i+1}$.
  We use the following notations
  \begin{itemize}
  \item $\rho \xrightarrow{p1a} (S,b)$ : The proposer $\rho$
    broadcasts a message $p1a$ containing the tag $b$. Every acceptor
    in the quorum $S$ receives this message and adopts\footnote{Recall
      that this means it copies the entry $b[\chi(b)]$ in the entry
      $a_{\beta}[\chi(b)]$.} the tag $b$.
  \item $(S,b) \xrightarrow{p1b} \rho$ : Every processor $\alpha$ in
    the quorum $S$ sends to the proposer $\rho$ a $p1b$ message
    telling they adopted the tag $b$, and containing the last proposal
    $r_{\alpha}[\chi(a_{\alpha})]$ they accepted. These messages are received
    by $\rho$.
  \item $\rho \xrightarrow{p2a} (Q,b,p)$ : The proposer $\rho$
    broadcasts a $p2a$ message containing a proposal $(b,p)$.  Every
    acceptor in the quorum $Q$ accepts the proposal $(b,p)$.
  \item $(Q,b,p) \xrightarrow{p2b} \rho$ : Every acceptor $\alpha$ in
    the quorum $Q$ sends to the proposer $\rho$ a $p2b$ message
    telling that it has accepted the proposal $(b,p)$. The proposer $\rho$
    receives these messages.
  \item $\rho \xrightarrow{dec} (\alpha,b,p)$ : the proposer $\rho$
    sends a decision message containing the proposal $(b,p)$.  The
    processor $\alpha$ receives this message, accepts and decides on
    the proposal $(b,p)$.
  \end{itemize}
\end{definition}

\begin{definition}[Simple Acceptation Scenario]
  \label{def:simple_acceptation_scenario}
  Given $S$ a quorum of
  acceptors, $b$ a tag, $p$ a consensus value, $\rho$ a proposer and
  $\alpha$ an acceptor, a \emph{simple acceptation scenario $U$ of the
    first kind} is defined as follows.
  \begin{itemize}
  \item[$(U_0)$] A proposer $\rho$ broadcasts a $p1a$ message with tag $b$.
  \item[$(U_1)$] Every processor $\beta$ from a quorum $S$ receives
    this $p1a$ message, adopts the tag $b$ and replies to
    $\rho$ a $p1b$ message containing its tag $a_{\beta} \simeq b$ and
    the lastly accepted proposal $r_{\beta}[\chi(a_{\beta})]$.
  \item[$(U_2)$] The proposer $\rho$ receives these messages at the
    end of its Paxos phase $1$, moves to the second phase of Paxos,
    and sends a $p2a$ message to a processor $\alpha$ telling it to
    accept the proposal $(b,p)$.
  \item[$(U_3)$] The processor $\alpha$ receives the $p2a$ message and
    accepts the proposal $(b,p)$.
  \end{itemize}

  Given quorums $S$ and $Q$,
  $b$ a tag, $p$ a consensus value, $\rho$ a proposer and $\alpha$ an
  acceptor, a \emph{simple acceptation scenario $V$ of the second
    kind} is defined as follows.
  \begin{itemize}
  \item[$(V_0)$] A proposer $\rho$ broadcasts a $p1a$ message with tag $b$.
  \item[$(V_1)$] Every processor $\beta$ from a quorum $S$ receives
    this $p1a$ message, adopts the tag $b$ and replies to
    $\rho$ a $p1b$ message containing its tag $a_{\beta} \simeq b$ and
    the lastly accepted proposal $r_{\beta}[\chi(a_{\beta})]$.
  \item[$(V_2)$] The proposer $\rho$ receives these messages at the
    end of its Paxos phase $1$, moves to the second phase of Paxos,
    and sends a $p2a$ message to every processor in $Q$ telling it to
    accept the proposal $(b,p)$.
  \item[$(V_3)$] Every processor in $Q$ receives the $p2a$ message,
    accepts the proposal and replies to the proposer $\rho$ with a
    $p2b$ message.
  \item[$(V_4)$] The proposer $\rho$ receives the replies from the
    acceptors in $Q$, and sends to the acceptor $\alpha$ a decision
    message containing a proposal $(b,p)$.
  \item[$(V_5)$] The acceptor $\alpha$ receives the decision message,
    accepts and decides on the proposal $(b,p)$.
  \end{itemize}

  With the notations introduced, we have 
  \begin{align}
    \textrm{(1-st kind) } &\rho \xrightarrow{p1a} (S,b) \xrightarrow{p1b} \rho \xrightarrow{p2a} (\alpha,b,p)\\
    \textrm{(2-nd kind) } &\rho \xrightarrow{p1a} (S,b) \xrightarrow{p1b} \rho \xrightarrow{p2a}
    (Q,b,p) \xrightarrow{p2b} \rho \xrightarrow{dec} (\alpha,b,p)
  \end{align}
  If the kind of scenario is not relevant, we note $S \leadsto (\alpha,b,p)$.
\end{definition}

\begin{remark}
  A simple acceptation scenario is simply a basic execution of the
  Paxos algorithm that leads a processor to either accept a proposal,
  or decide on a proposal (accepting it by the way).
\end{remark}

\begin{definition}[Fake Message]
  \label{def:fake_message}
  Given a subexecution $E = (\gamma_k)_{k_0 \leq k \leq k_1}$, a
  \emph{fake message relatively to the subexecution $E$}, or simply a
  fake message, is a message that is in the communication channels in
  the first configuration $\gamma_{k_0}$ of the subexecution $E$.
\end{definition}

\begin{remark}
  This definition of fake messages comprises the messages at the
  beginning of $E$ that were not sent by any processor, but also
  messages produced in the prefix of execution that precedes $E$.
\end{remark}

\begin{definition}[Simple Fake Acceptation Scenario]
  \label{def:simple_fake_acceptation_scenario}
  Given a subexecution $E$, we note $\bigcirc \rightarrow X$ if there
  exists an event $e$ in $X$ that corresponds to the reception of a
  fake message relatively to $E$.  With the previous notation, a
  \emph{simple fake acceptation scenario relatively to $E$} is one of the following
  scenario.
  \begin{align}
    \bigcirc& \xrightarrow{p2a} (\alpha,b,p) \\
    \bigcirc& \xrightarrow{p1b} \rho \xrightarrow{p2a} (\alpha,b,p) \\
    \bigcirc& \xrightarrow{dec} (\alpha,b,p) \\
    \bigcirc& \xrightarrow{p2b} \rho \xrightarrow{dec} (\alpha,b,p) \\
    \bigcirc& \xrightarrow{p2a} (Q,b,p) \xrightarrow{p2b} \rho \xrightarrow{dec} (\alpha,b,p) \\
    \bigcirc& \xrightarrow{p1b} \rho \xrightarrow{p2a} (Q,b,p) \xrightarrow{p2b} \rho \xrightarrow{dec} (\alpha,b,p)
  \end{align}
  If the exact type is not relevant, we note $\bigcirc \leadsto
  (\alpha,b,p)$.
\end{definition}

\begin{remark}
  A simple fake acceptation scenario is somehow similar to a simple
  acceptation scenario except the fact that at least one fake message
  (relatively to the given subexecution) is involved during the
  scenario.
\end{remark}

\begin{definition}[Composition]
  \label{def:composition}
  Consider two simple scenarios $U = X \leadsto (\alpha_1,b_1,p_1)$,
  where $X = \bigcirc$ or $X = (S_1,b_1)$, and $V = S_2 \leadsto
  (\alpha_2,b_2,p_2)$ such that the following conditions are satisfied.
  \begin{itemize}
  \item The processor $\alpha_1$ belongs to $S_2$
  \item Let $e_2$ be the event that corresponds to $\alpha_1$ sending
    a $p1b$ message in scenario $V$.  Then the event ``$\alpha_1$
    accepts the proposal $(b_1,p_1)$'' is the last acceptation event
    before $e_2$. In addition, the proposer involved in the scenario
    $V$ selects the proposal $(b_1,p_1)$ as the highest-numbered
    proposal at the beginning of the Paxos phase $2$. In particular,
    $p_1 = p_2$.
  \item All the tags involved share the same first valid entry, the
    same corresponding label and step value.
  \end{itemize}
  Then the \emph{composition} of the two simple scenarios is the
  concatenation the scenarios $U$ and $V$. This scenario is noted
  \begin{equation}
    X \leadsto (\alpha_1,b_1,p_1) \rightarrow S_2 \leadsto (\alpha_2,b_2,p_2)
  \end{equation}
  Note that the trial value is strictly increasing along the simple
  scenarios.
\end{definition}

\begin{figure}
  \centering
  \includegraphics[scale=0.4]{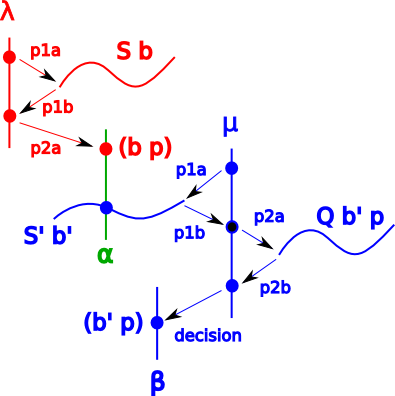}
  \caption{Composition of scenarios of the 1-st kind (red) and the
    2-nd kind (blue) - Time flows downward, straight lines are local
    executions, arrows represent messages.}
  \label{fig:composition_scenarios}
\end{figure}

\begin{definition}[Acceptation Scenario]
  \label{def:acceptation_scenario}
  Given a subexecution $E$, an \emph{acceptation scenario} is the
  composition $U$ of simple acceptation scenarios $U_1, \dots, U_r$
  where $U_1$ is either a simple acceptation scenario or a simple fake
  acceptation scenario relatively to $E$. We note
  \begin{equation} 
    X \leadsto (\alpha_1,b_1,p) \rightarrow S_2
    \leadsto (\alpha_2,b_2,p) \dots S_r \leadsto (\alpha_r,b_r,p)
  \end{equation} 
  An acceptation scenario whose first simple scenario is not fake relatively to $E$ is
  called \emph{real acceptation scenario relatively to $E$}.
  An acceptation scenario whose first simple scenario is fake relatively to $E$ is
  called \emph{fake acceptation scenario relatively to $E$}.
  Given an event $e$ that corresponds to some processor
  accepting a proposal, we note $Sc(e)$ the set of
  acceptation scenarios that ends with the event $e$.    
\end{definition}

\begin{remark}
  Given an acceptation event or a decision event, there is always at
  least one way to trace back the scenario that has lead to this event.
  If one of these scenarios involve a fake message, then we cannot
  control the safety property for the corresponding step.
  Besides, note that all the tags involved share the same first valid entry
  $\mu$, the same corresponding label $l$, step value $s$ and
  consensus value $p$.  Also, the trial value is increasing along the
  acceptation scenario.
\end{remark}

\begin{definition}[Scenario Characteristic]
  The \emph{characteristic} of an acceptation scenario $U$ in which all
  tags have first valid entry $\mu$, corresponding label $l$, step
  value $s$ and consensus value $p$, is the tuple $char(U) =
  (\mu,l,s,p)$.
\end{definition}

\begin{definition}[Fake Characteristics]
  Consider a subexecution $E = (\gamma_k)_{k_0 \leq k \leq k_1}$.
  Given a scenario characteristic $(\mu,l,s,p)$, we note $\mathcal{E}(E,\mu,l,s,p)$
  the set of events in $E$ that correspond to accepting a proposal $(b,p)$ with
  $\chi(b) = \mu$ and $b[\mu].(l~s) = (l~s)$.
  A characteristic $(\mu,l,s,p)$ is said to be \emph{fake relatively to $E$}
  if there exists an event $e$ in $\mathcal{E}(E,\mu,l,s,p)$ such that
  the set $Sc(e)$ contains a fake acceptation scenario relatively to $E$.
  We note $\mathcal{FC}(E)$ the set of fake characteristics relatively to $E$.  
\end{definition}

\begin{definition}[Unsafe Steps]
  If we fix the identifier $\mu$ and the label $l$, we define the set of
  \emph{unsafe step} values $\mathcal{US}(E,\mu,l)$ as the set of values $s$ such
  that there exists a consensus value $p$ with $(\mu,l,s,p) \in
  \mathcal{FC}(E)$. 
\end{definition}

\begin{remark}
  Given an identifier $\mu$ and the label $l$, an unsafe step $s$ is a
  step in which an accepted proposal might be induced by fake
  messages, and thus, we cannot control the safety for this step.
\end{remark}

\begin{definition}[Observed Zone]
  Consider an execution $E$.  Let $\lambda$ be a proposer and let
  $\Sigma$ be a subexecution such that the local execution $\sigma =
  \Sigma(\lambda)$ at $\lambda$ is a $h$-safe epoch. We note $F$ the
  suffix of the execution that starts with $\Sigma$.  Assume that
  $\lambda$ executes at least two trials during its epoch $\sigma$.
  Let $Q^0$, $Q^f$ be the first and last quorums respectively whose
  messages are processed by the proposer $\lambda$ during $\sigma$.
  For each processor $\alpha$ in $Q^0$ (resp. $Q^f$), we note
  $e^0(\alpha)$ (resp. $e^f(\alpha)$) the event that corresponds to
  $\alpha$ sending to $\lambda$ a message received in the trial that corresponds
  to $Q^0$ (resp. $Q^f$).

  The \emph{zone observed by $\lambda$ during the epoch $\sigma$},
  noted $Z(F,\lambda,\sigma)$, is the set of real acceptation scenarios
  relatively to $F$ described as follows. A real acceptation scenario
  relatively to $F$ belongs to $Z(F,\lambda,\sigma)$ if and only if it
  ends with an acceptation event that does not happen after the end of
  $\sigma$ and its first simple acceptation scenario $U = (S,b) \leadsto
  (\beta,b,p)$ is such that there exists an acceptor $\alpha$ in $S
  \cap Q^0 \cap Q^f$ at which the event $e^0(\alpha)$ happens before
  the event $e$ that corresponds to sending a $p1b$ message in $U$, and the
  event $e$ happens before the event $e^f(\alpha)$
  (cf. Figure~\ref{fig:practss_safezone}).
\end{definition}

\begin{figure}
  \centering
  \includegraphics[scale=0.4]{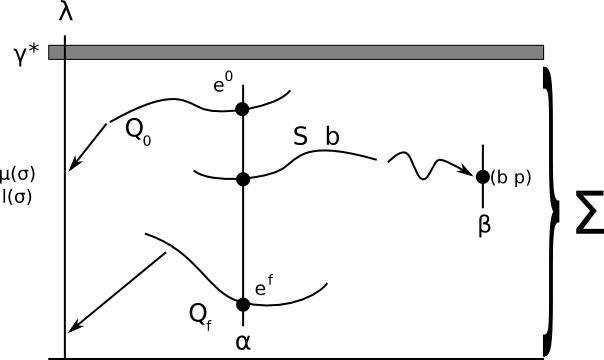}
  \caption{Scenario $S \leadsto (\beta,b,p)$ in $Z(F,\lambda,\sigma)$ - Time
    flows downward, straight lines are local executions, curves are
    send/receive events, arrows represent messages.}
  \label{fig:practss_safezone}
\end{figure}

\begin{remark}
  The observed zone models a globally defined time period during
  which we will prove, under specific assumptions, the safety property
  (cf. Theorem~\ref{th:practical_safety}).
\end{remark}

\subsection{Safety - Results}
\label{app:practical_safety_results}

\begin{lemma}[Fake Acceptation Scenarios]
  \label{lem:fake_acceptation_scenarios}
  Consider a fake message $m$, and two acceptation scenarios of
  charactestics $(\mu,l,s,p)$ and $(\mu',l',s',p')$ that begins with
  the reception of $m$.  Then both scenarios share the same
  characteristics, i.e., $(\mu,l,s,p) = (\mu',l',s',p')$.
\end{lemma}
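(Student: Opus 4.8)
The plan is to reduce the characteristic of a whole acceptation scenario to that of its first simple scenario, and then to observe that this first scenario is pinned down by the single reception event of $m$. First I would recall that, by Definition~\ref{def:composition}, composing two simple scenarios preserves the first valid entry, the corresponding label, the step value and the consensus value; consequently, the characteristic of any acceptation scenario (Definition~\ref{def:acceptation_scenario}) coincides with the characteristic of its first (fake) simple scenario. Thus it suffices to prove that the characteristic of a simple fake acceptation scenario is a function of $m$ alone.

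Next I would exploit the fact that $m$ is a single message sitting in some channel in $\gamma_{k_0}$ (Definition~\ref{def:fake_message}), hence it is consumed by exactly one reception event, at one processor, in the fixed execution. Both scenarios named in the statement must literally share this very event, so the state of the receiving processor immediately before consuming $m$ is the same for both. Because the proposer and acceptor code (Algorithms~\ref{alg:proposer} and~\ref{alg:acceptor}) are deterministic and the execution is a fixed configuration sequence, the consequences of this reception that are forced by the algorithm are identical in both scenarios.

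The core of the argument is then a case analysis on the type of $m$, matching the six shapes of Definition~\ref{def:simple_fake_acceptation_scenario}. If $m$ is a $p2a$ or a decision message, it carries a proposal $(b,p)$ explicitly, and the triggered acceptance has tag $b$ and value $p$; hence $\mu = \chi(b)$, $l = b[\chi(b)].l$, $s = b[\chi(b)].s$ and the consensus value is $p$, all read off directly from $m$. If instead $m$ is a $p1b$ or a $p2b$ reply, it is consumed by a proposer $\rho$: the tag $b$ appearing downstream is $\rho$'s own proposer tag, whose first valid entry, label and step value are fixed by $\rho$'s state at the reception event; and the value $p$ is either $\rho$'s current proposal (for $p2b$) or the output of the deterministic phase-2 selection rule applied to the fixed set of $p1b$ replies that $\rho$ collected in this phase (for $p1b$). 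In every case $(\mu,l,s,p)$ is a deterministic function of $m$ and of the shared state at the reception of $m$, so the two scenarios must agree.

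I expect the main obstacle to be the two cases in which $m$ is absorbed by a proposer, since there the characteristic is not literally inscribed in $m$ but produced by $\rho$'s phase-2 computation; the point to nail down is that this computation reads only quantities that are already frozen once the reception event of $m$ is fixed, namely $\rho$'s proposer tag and the exact collection of phase-1 replies, so determinism alone forces a unique $(b,p)$. The remaining cases are immediate because the proposal travels inside $m$ verbatim.
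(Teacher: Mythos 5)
Your proof is correct and follows essentially the same route as the paper's: both arguments rest on the observation that the fake message $m$ is consumed by a unique reception event, so the downstream deterministic computation (whether the proposal is carried verbatim in $m$ or produced by the receiving proposer's phase-2 selection) yields a single proposal $(b,p)$, which fixes the characteristic; the paper works out the $p1b$ case and declares the others analogous, exactly matching your case analysis. Your explicit preliminary reduction to the first simple scenario is a detail the paper leaves implicit via the remark following Definition~\ref{def:acceptation_scenario}, but it is not a different approach.
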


\begin{proof}
  We have two scenarios that begins with the reception of $m$. Focus
  on the first simple scenario of each acceptation scenario.  Assume,
  for instance, that the message $m$ is a $p1b$ message and both
  simple fake acceptation scenarios are as follows
  \begin{align}
    \bigcirc& \xrightarrow{p1b} \rho \xrightarrow{p2a} (\alpha,b,p) \\
    \bigcirc& \xrightarrow{p1b} \rho' \xrightarrow{p2a} (\alpha',b',p')
  \end{align}
  Since once a message is received, it is not in the communication
  channels anymore, the event ``reception of $m$ at $\rho$'' and
  ``reception of $m$ at $\rho'$'' must be the same. In particular
  $\rho = \rho'$. Thanks to the messages it has received, the
  processor $\rho$ computes a proposal $(b,p)$ and broadcasts
  it. Hence, the processors $\alpha$ and $\alpha'$ receives (and accepts) the same
  proposal $(b,p)$.  Hence, $(b,p) = (b',p')$. By definition, $\chi(b)
  = \mu$, $b[\mu].(l~s) = (l~s)$ and $\chi(b') = \mu'$, $b[\mu'].(l~s)
  = (l'~s')$.  Therefore, $(\mu,l,s,p) = (\mu',l',s',p')$. The other
  cases are analogous.\gqed
\end{proof}

\begin{theorem}[Fake Characteristics]
  \label{th:fake_scenario_characteristic}
  Consider an execution $E$. Let $\lambda$ be a proposer, and let
  $\Sigma$ be a subexecution such that the local execution $\sigma =
  \Sigma(\lambda)$ at $\lambda$ is an $h$-safe epoch, with first valid entry $\mu_{\sigma}$ and label $l_{\sigma}$.
  We note $F$ the suffix of execution that starts with $\Sigma$.
  Then, for every fake characteristic $(\mu_{\sigma},l_{\sigma},s,p) \in \mathcal{FC}(F)$, we have $s < h$.
  In other words, every step $s \in \mathcal{US}(F,\mu_{\sigma},l_{\sigma})$ satisfies $s < h$.
\end{theorem}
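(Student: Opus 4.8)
The plan is to trace any fake characteristic back to a single fake message and then bound, via the $h$-safe hypothesis, the step value that this message carries. So fix a characteristic $(\mu_{\sigma},l_{\sigma},s,p) \in \mathcal{FC}(F)$. By the definition of a fake characteristic there is an event $e \in \mathcal{E}(F,\mu_{\sigma},l_{\sigma},s,p)$ for which $Sc(e)$ contains a fake acceptation scenario relatively to $F$. I would fix one such scenario and inspect its first simple scenario, which by Definition~\ref{def:acceptation_scenario} is a simple fake acceptation scenario and hence, by Definition~\ref{def:simple_fake_acceptation_scenario}, begins with the reception of a fake message $m$ --- a message already sitting in the channels at the opening of $F$. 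Because $e$ lies in $\mathcal{E}(F,\mu_{\sigma},l_{\sigma},s,p)$, the terminal accepted proposal has first valid entry $\mu_{\sigma}$, label $l_{\sigma}$ and step value $s$; and since along any composition of simple scenarios every tag involved shares the same first valid entry, label and step value (Definition~\ref{def:composition}), the whole scenario, in particular the data transported by $m$, carries the characteristic $(\mu_{\sigma},l_{\sigma},s,p)$. Lemma~\ref{lem:fake_acceptation_scenarios} makes this unambiguous, since the characteristic is pinned down by $m$ alone, so the particular scenario chosen in $Sc(e)$ is irrelevant.

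The next step is to exhibit, physically inside the fake message $m$, a tag $x$ with $x[\mu_{\sigma}].l = l_{\sigma}$ and $x[\mu_{\sigma}].s = s$. This amounts to a short walk through the six shapes listed in Definition~\ref{def:simple_fake_acceptation_scenario}. When $m$ is a $p2a$ message or a decision message the required tag is the proposal tag transported by $m$; when $m$ is a $p2b$ message it is the acceptor tag advertised by $m$; in each of these cases the previous paragraph already certifies $\chi(x)=\mu_{\sigma}$, $x[\mu_{\sigma}].l=l_{\sigma}$ and $x[\mu_{\sigma}].s=s$. The only case needing care is when $m$ is a $p1b$ message: here the decisive tag is not the acceptor tag announced by $m$ but the tag $b'$ of the proposal $(b',p)$ stored in its proposal field, which the receiving proposer selects as the highest-numbered proposal pointing to step $s$, so that $b'[\mu_{\sigma}].(l~s)=(l_{\sigma}~s)$; crucially $b'$ too is part of $m$. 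In all cases we end with a tag $x$, genuinely present in the fake message $m$, whose entry $\mu_{\sigma}$ holds the label $l_{\sigma}$ and the step $s$.

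It then remains to apply the $h$-safe property. Being fake relatively to $F$ (Definition~\ref{def:fake_message}), the message $m$, together with the tag $x$ it contains, is inherited from the configuration $\gamma^{*}$ preceding the subexecution $\Sigma$ that opens $F$ --- exactly the configuration constrained by Definition~\ref{def:safe_epoch}. Since $x[\mu_{\sigma}].l = l_{\sigma}$, the $h$-safe clause bounds $x[\mu_{\sigma}].s$, hence $s$, by $h$; a careful reading of the opening configuration tightens this to the strict inequality $s < h$, after which unfolding the definition of $\mathcal{US}(F,\mu_{\sigma},l_{\sigma})$ gives the second formulation. The two points I expect to require the most care are this final strict-versus-nonstrict boundary at $\gamma^{*}$, and the bookkeeping of the second paragraph: for the $p1b$- and $p2b$-initiated shapes one must verify that the characteristic step $s$ is genuinely carried by a tag present in $m$ (the proposal field rather than the announced tag field) rather than produced afterwards by a live proposer, since only tags physically present in $m$ fall under the $h$-safe bound.
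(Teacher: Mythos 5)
Your argument is correct and is essentially the paper's own proof, only spelled out in more detail: the paper likewise traces the fake characteristic back to the fake message(s) that open the scenario, observes that they carry tags with first valid entry $\mu_{\sigma}$ and label $l_{\sigma}$ already present in the configuration $\gamma^{*}$ preceding $\Sigma$, and invokes the $h$-safe bound on the step field. The strict/non-strict boundary you flag ($s<h$ versus the ``less than or equal to $h$'' of Definition~\ref{def:safe_epoch}) is a discrepancy the paper's own three-line proof shares, so it is not a defect of your reconstruction.
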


\begin{proof}
  Let $\gamma^*$ denote the configuration right before $\Sigma$.
  Consider any fake scenario of characteristic
  $(\mu_{\sigma},l_{\sigma},s,p)$ relatively to $F$. The scenario
  begins by the reception of one or more fake messages. But, each of
  these fake messages carry tags with first valid entry $\mu_{\sigma}$
  and label $l_{\sigma}$ that were present in $\gamma^*$. Hence, the
  corresponding step fields must have values less than $h$.  \gqed
\end{proof}



\begin{lemma}[Epoch and Cycle of Labels]
  \label{lem:epoch_cycle_labels}
  Consider an execution $E$.  Let $\lambda$ be a processor and
  consider a subexecution $\Sigma$ such that the local execution
  $\sigma = \Sigma(\lambda)$ is an epoch at $\lambda$.  We note $F$
  the suffix of the execution $E$ that starts with $\Sigma$. Consider
  a processor $\rho$ and a finite subexecution $G$ in $F$ as follows:
  $G$ starts in $\Sigma$ and induces a local execution $G(\rho)$ at
  $\rho$ such that it starts and ends with the first valid entry of
  the tag $a_{\rho}$ being equal to $\mu_{\sigma}$ and containing the
  label $l_{\sigma}$, and the label field in the entry
  $a_{\rho}[\mu_{\sigma}]$ undergoes a cycle of labels during
  $G(\rho)$.  Assume that, if $\mu_{\sigma} < \lambda$, then the
  processor $\mu_{\sigma}$ does not produce any label during $G$.
  Then $\mu_{\sigma} = \lambda$ and the last event of $\sigma$ happens
  before the last event of $G(\rho)$.
\end{lemma}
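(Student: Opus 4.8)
The plan is to establish the two conclusions in turn, both resting on the cycle dichotomy of Lemma~\ref{lem:cycle_labels} applied to the field $a_\rho[\mu_\sigma].l$. I would first record the ambient fact that throughout $\sigma$ the first valid entry of $\lambda$ equals $\mu_\sigma$, so Lemma~\ref{lem:chi_leq_lambda} gives $\mu_\sigma \leq \lambda$; it then suffices to exclude the possibility $\mu_\sigma < \lambda$. Write the cycle realized during $G(\rho)$ as $l^1 = l_\sigma, l^2, \dots, l^T, l^{T+1} = l_\sigma$, with $l^1,\dots,l^T$ distinct.

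To prove $\mu_\sigma = \lambda$, I argue by contradiction and suppose $\mu_\sigma < \lambda$. By hypothesis $\mu_\sigma$ then produces no label during $G$. Since $\mu_\sigma$ is the unique source of labels in the entry $\mu_\sigma$ of any tag, and since at most $\gconst{K}$ tags coexist in any configuration (Lemma~\ref{lem:label_storage_limits}), the contrapositive of the first item of Lemma~\ref{lem:cycle_labels} forces $T \leq \gconst{K}$. Then the second item of Lemma~\ref{lem:cycle_labels} applies, because the whole cycle fits in the size-$\gconst{K}$ history $H_\rho[\mu_\sigma]$: at the moment $\rho$ re-adopts $l^{T+1} = l_\sigma$, the canceling label $l^2$ is found in the history and the entry $\mu_\sigma$ becomes invalid. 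This contradicts the hypothesis that $G(\rho)$ ends with $\mu_\sigma$ being the first valid entry holding $l_\sigma$ (the canceling field for $l_\sigma$ persists in the history, so $\mu_\sigma$ cannot be revalidated while still carrying $l_\sigma$). Hence $\mu_\sigma = \lambda$.

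For the second conclusion I keep $\mu_\sigma = \lambda$ and rerun the same dichotomy. The case $T \leq \gconst{K}$ is impossible for exactly the reason above, so $T > \gconst{K}$, and the first item of Lemma~\ref{lem:cycle_labels} yields a cycle label $l^{i_0}$ that $\lambda = \mu_\sigma$ genuinely \emph{produces} during $G$. The key observation is that $\lambda$ cannot produce any label before the end of $\sigma$: during $\sigma$ the first valid entry is $\lambda$ and carries the constant label $l_\sigma$, and since a proposer never copies a received label into its own entry (Remark~\ref{remark:interrupt}), the only way the entry-$\lambda$ label could change is by $\lambda$ producing a new one, which would terminate the epoch. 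As $G$ starts inside $\Sigma$, the production event $e_\lambda$ of $l^{i_0}$ therefore occurs at or after the last event of $\sigma$ in $\lambda$'s local order.

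Finally I would close the happened-before chain. The label $l^{i_0}$ reaches $\rho$ only through messages, so $e_\lambda \leadsto e'$, where $e'$ is the adoption event installing $l^{i_0}$ in $a_\rho[\lambda]$ during the cycle; and $e'$ lies inside $G(\rho)$, hence does not happen after its last event. Since the last event of $\sigma$ either coincides with or happens before $e_\lambda$, transitivity together with the strict middle link $e_\lambda \leadsto e'$ gives that the last event of $\sigma$ happens before the last event of $G(\rho)$. The main obstacle I anticipate is the careful invocation of Lemma~\ref{lem:cycle_labels} for the foreign field $a_\rho[\mu_\sigma]$ rather than for a proposer's own tag—one must check that $H_\rho[\mu_\sigma]$ indeed records the entire cycle and that the ``unique source'' and ``at most $\gconst{K}$ tags'' accounting is unaffected by $\rho$'s identity (in particular ruling out $\mu_\sigma=\rho$, where a cycle would itself require $\rho$ to produce a label)—together with pinning down precisely that no label production by $\lambda$ can predate the terminating interrupt of $\sigma$.
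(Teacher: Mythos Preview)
Your proposal is correct and follows essentially the same route as the paper's proof: both hinge on the dichotomy of Lemma~\ref{lem:cycle_labels} applied to the label field $a_\rho[\mu_\sigma].l$, concluding that validity of the entry at the end of $G(\rho)$ forces $\mu_\sigma$ to have produced a label during $G$ (whence $\mu_\sigma=\lambda$ by the hypothesis), and that this production must lie at or after the terminating interrupt of $\sigma$, yielding the happened-before chain to the last event of $G(\rho)$. The paper compresses all of this into three sentences; your version makes the $T\le\gconst{K}$ versus $T>\gconst{K}$ split explicit, spells out why label production by $\lambda$ is confined to the end of $\sigma$, and flags the edge cases (applicability of Lemma~\ref{lem:cycle_labels} when $\mu_\sigma\not<\rho$, and $\mu_\sigma=\rho$) that the paper's terse argument glosses over.
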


\begin{proof}
  By Lemma~\ref{lem:cycle_labels}, since the entry $a_{\rho}[\lambda]$
  remains valid after the readoption of the label $l$ at the end of
  $G(\rho)$, the proposer $\mu_{\sigma}$ must have produced some label
  $l'$ during $G$ (hence $\mu_{\sigma} = \lambda$) that was received
  by $\rho$ during $G$. Necessarily, the production of $l'$ happens
  after the last event of $\sigma$ at $\lambda$, thus the last event
  of $G(\rho)$ at $\rho$ also happens after the last event of $\sigma$
  at $\lambda$.\gqed
\end{proof}

\begin{theorem}[Weak Safety]
  \label{th:weak_practical_safety}
  Consider an execution $E$.  Let $\lambda$ be a processor and let
  $\Sigma$ be a subexecution such that the local execution $\sigma =
  \Sigma(\lambda)$ at $\lambda$ is an $h$-safe. We note $F$ the suffix
  of the execution that starts with $\Sigma$.  Consider a step value
  $s$ and the two following simple scenarios
  \begin{align}
    U_1 &= \rho_1 \xrightarrow{p1a} (S_1,b_1) \xrightarrow{p1b} \rho_1 \xrightarrow{p2a}
      (Q_1,b_1,p_1) \xrightarrow{p2b} \rho_1 \xrightarrow{dec} (\alpha_1,b_1,p_1)\\
    U_2 &= (S_2,b_2) \leadsto (\alpha_2,b_2,p_2)    
  \end{align}
  with characteristics $(\mu_{\sigma},l_{\sigma},s,p_1)$ and
  $(\mu_{\sigma},l_{\sigma},s,p_2)$ respectively.  In addition, we
  assume that $b_i[\mu_{\sigma}].t > h$ and $\tau_1 \leq \tau_2$ where
  $\tau_i = b_i[\lambda].(t~id)$.  We note $e_i$ for the acceptation
  event $(\alpha_i,b_i,p_i)$.  Assume that both events $e_1$ and $e_2$
  occur in $F$ and $s \not\in \mathcal{US}(F,\mu_{\sigma},l_{\sigma})$. In addition, assume
  that, if $\mu_{\sigma} < \lambda$, then the processor $\mu_{\sigma}$
  does not produce any label during $F$.  Then either $p_1 = p_2$ or
  the last event of $\sigma$ happens before one of the event $e_1$ or
  $e_2$.
\end{theorem}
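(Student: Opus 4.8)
The plan is to run the classical Paxos quorum-intersection argument, but carried out relative to the fixed first valid entry $\mu_{\sigma}$ and label $l_{\sigma}$ of the $h$-safe epoch, and reinforced with the label-history machinery to neutralize the non-transitivity of $\prec$. First I would dispose of fake messages: the hypotheses $b_i[\mu_{\sigma}].t > h$ and $s \notin \mathcal{US}(F,\mu_{\sigma},l_{\sigma})$ together guarantee, through Theorem~\ref{th:fake_scenario_characteristic} and the bound on $\gamma^*$-tags carrying $l_{\sigma}$ built into the definition of an $h$-safe epoch, that $U_1$, $U_2$ and every intermediate scenario I invoke are \emph{real} relatively to $F$. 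Hence I may reason throughout with the genuine acceptor/proposer mechanics. The proof then proceeds by strong induction on $\tau_2 = b_2[\mu_{\sigma}].(t~id)$, the inductive statement being the conclusion itself: for every real scenario accepting a step-$s$ proposal of value $q$ with trial in $[\tau_1,\tau_2]$, either $q = p_1$ or the last event of $\sigma$ happens before that acceptation. The base case $\tau_2 = \tau_1$ is handled by observing that equality of the $(t~id)$ coordinate at $\mu_{\sigma}$, together with the shared $l_{\sigma}$ and $s$ and the nullity of the canceling field of a valid entry, forces $b_1 \simeq b_2$; the two proposals then bear the same proposer identifier at the same trial, which identifies a single proposal, so $p_1 = p_2$.

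For the inductive step I would use Assumption~\ref{hyp:resilience} to pick an acceptor $\alpha \in Q_1 \cap S_2$ lying in the phase-2 quorum of $U_1$ and the phase-1 quorum of $U_2$, and then split on the happens-before order at $\alpha$ of the two events $e = $ ``$\alpha$ accepts $(b_1,p_1)$'' and $f = $ ``$\alpha$ sends its $p1b$ reply inside $U_2$''. Throughout I rely on the fact that, while the entry $\mu_{\sigma}$ of $\alpha$ keeps the label $l_{\sigma}$, the acceptance rule makes the pair $(s,t,id)$ non-decreasing and the accepted record for $\mu_{\sigma}$ move only to strictly larger tags. In the case $e \leadsto f$, monotonicity pins the step at $s$ (adopting the step-$s$ tag $b_2$ is impossible from a step-$(s{+}1)$ state) and forces the reported record to have step $s$ and trial in $[\tau_1,\tau_2)$; the proposer of $U_2$ selects, among the step-$s$ replies, the one of maximal tag, which is therefore of trial in $[\tau_1,\tau_2)$, and the induction hypothesis applied to the real scenario witnessing that reported acceptation (which happens before $e_2$, so the escape clause propagates) gives it value $p_1$. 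Since all replies at that maximal trial then carry the single value $p_1$, the proposer adopts it and $p_2 = p_1$.

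The delicate case is $f \leadsto e$, and here the non-transitivity of labels must be confronted. After $f$ the acceptor has adopted $b_2$, so its entry $\mu_{\sigma}$ holds $l_{\sigma}$ at step $s$ with trial $\tau_2$; at $e$ it accepts $b_1$, again with $l_{\sigma}$, step $s$, trial $\tau_1 \leq \tau_2$. If the label in $\alpha[\mu_{\sigma}]$ stays equal to $l_{\sigma}$ on the whole stretch from $f$ to $e$, monotonicity of $(s,t,id)$ pins the step at $s$ and keeps the trial non-decreasing, whence $\tau_1 = \tau_2$ and again $p_1 = p_2$. Otherwise the label must leave $l_{\sigma}$ and return, i.e.\ $\alpha[\mu_{\sigma}]$ undergoes a cycle of labels on a subexecution beginning and ending with first valid entry $\mu_{\sigma}$ and label $l_{\sigma}$; invoking Lemma~\ref{lem:epoch_cycle_labels} (whose hypothesis that $\mu_{\sigma}$ produces no label during $F$ when $\mu_{\sigma} < \lambda$ is exactly one of our assumptions) yields $\mu_{\sigma} = \lambda$ together with the fact that the last event of $\sigma$ happens before the last event of that cycle, namely before $e = e_1$. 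This is precisely the escape clause of the statement.

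I expect this last case to be the main obstacle. Everything else is the standard Paxos induction transposed to the $(\mu_{\sigma},l_{\sigma})$-indexed comparison, whereas here one must argue that the \emph{only} legal way for an acceptor to accept a strictly smaller trial is for its label to have cycled, and then channel that cycle through Lemma~\ref{lem:epoch_cycle_labels} (built on the cycle-detection guarantee of Lemma~\ref{lem:cycle_labels}); the bounded label histories forbid such a cycle unless a genuinely new label — which, when $\mu_{\sigma}=\lambda$, $\lambda$ can only produce at the close of an epoch — has intervened, and it is exactly this intervention that certifies that $\sigma$ has ended before the offending acceptation.
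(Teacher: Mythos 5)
Your overall architecture is the paper's: reduce to real scenarios via $s \notin \mathcal{US}(F,\mu_{\sigma},l_{\sigma})$, induct on $\tau_2$, intersect $Q_1$ with $S_2$, order the two relevant events at the common acceptor, and convert any ``impossible'' regression of the $(l,s,t,id)$ tuple into a cycle of labels that Lemma~\ref{lem:epoch_cycle_labels} turns into the escape clause. The inductive step, including the $f \leadsto e$ case, is handled essentially as in the paper.

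The gap is in your base case $\tau_2 = \tau_1$. You conclude $p_1 = p_2$ from the claim that the same proposer identifier at the same trial ``identifies a single proposal.'' That is the classical Paxos uniqueness argument, and it is precisely the step that does not survive in this model: after a cycle of labels in the entry $a_{\rho_1}[\mu_{\sigma}]$, the proposer $\rho_1$ can legitimately re-emit the tuple $(l_{\sigma},s,\tau_1)$ in a later phase~$2$ carrying a different consensus value, since trial values are reused once the label returns. The paper's bootstrapping paragraph is devoted to exactly this: it supposes $p_1 \neq p_2$, observes that $\rho_1$ must then have performed two $p2a$ sendings with the same tag, shows both sendings lie in $F$ (by $h$-safety, otherwise $\gamma^*$ would contain a tag with label $l_{\sigma}$ and trial $> h$), deduces a cycle of labels in $a_{\rho_1}[\mu_{\sigma}]$ between the two sendings, and invokes Lemma~\ref{lem:epoch_cycle_labels} to land in the escape clause. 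Accordingly, the correct conclusion of the base case is the disjunction stated in the theorem, not unconditionally $p_1 = p_2$. You already deploy this exact cycle argument in your $f \leadsto e$ case, so the repair is mechanical --- but as written the base case rests on a uniqueness property the model does not provide.
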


\begin{proof}
  We assume that both events $e_1$ and $e_2$ do not happen after the
  last event of $\sigma$ and we prove that $p_1 = p_2$.  Since $s$ is
  not in $\mathcal{US}(F,\mu_{\sigma},l_{\sigma})$, every scenario in $Sc(e_1)$
  or $Sc(e_2)$ are real acceptation scenarios relatively to $F$.  We
  note $\gamma^*$ the configuration right before the subexecution
  $\Sigma$. We prove the result by induction on the value of $\tau_2$.

  \paragraph{\bf (Bootstrapping)}
  We first assume that $\tau_2 = \tau_1$. In particular, $\rho_1 =
  \tau_1.id = \tau_2.id = \rho_2$.  If $p_1 \neq p_2$, this means that
  $\rho_1$ has sent two $p2a$ messages with different proposals and
  the same tag\footnote{Modulo $\simeq$.}.  Note $e$ and $f$ the
  events that correspond to these two sendings. None of the events $e$
  and $f$ occurs in the execution prefix $A$, otherwise, since $e_1$
  and $e_2$ occur in $F$, the configuration $\gamma^*$ would contain a
  tag $x$ with $x[\mu_{\sigma}].l = l_{\sigma}$ and $x[\mu_{\sigma}].t
  > h$; this is a contradiction since $\sigma$ is $h$-safe.  Hence,
  $e$ and $f$ occur in $F$.  Then, there must be a cycle of labels in the
  entry $a_{\rho_1}[\mu_{\sigma}]$ between the $e$ and $f$.  By
  Lemma~\ref{lem:epoch_cycle_labels}, this implies that the last event
  of $\sigma$ happens before the event $e_1$ or $e_2$; this is a
  contradiction.  Hence, $p_1 = p_2$.

  \paragraph{\bf (Induction)} 
  Now, $\tau_2$ is any value such that $\tau_1 < \tau_2$ and we assume
  the result holds for every value $\tau$ such that $\tau_1 \leq \tau
  < \tau_2$.  Pick some acceptor $\beta$ in $Q_1 \cap S_2$.  From its
  point of view, there are two events $f_1$ and $f_2$ at $\beta$ that
  respectively correspond to the acceptation of the proposal
  $(b_1,p_1)$ in the scenario $U_1$ (reception of a $p2a$ message),
  and the adoption of the tag $b_2$ in the scenario $U_2$ (reception
  of a $p1a$ message). First, the events $f_1$ and $f_2$ do not occur
  in the execution prefix $A$.  Otherwise there would exist a tag
  value $x$ in $\gamma^*$ such that $x[\mu_{\sigma}].l = l_{\sigma}$
  and $x[\mu_{\sigma}].t > h$; this is a contradiction, since $\sigma$
  is $h$-safe. Hence, $f_1$ and $f_2$ occur in the suffix $F$.

  We claim that $f_1$ happens before $f_2$.  Otherwise, since $\tau_2
  > \tau_1$, there must be a cycle of labels in the field
  $a_{\beta}[\mu_{\sigma}].l$. By Lemma~\ref{lem:epoch_cycle_labels}, this
  implies that the last event of $\sigma$ happens before the event
  $f_1$, and thus before the event $e_1$; contradiction. Hence, $f_1$
  happens before $f_2$.  We claim that the $p1b$ message the acceptor
  $\beta$ has sent contains a non-null lastly accepted proposal
  $r_{\beta}[\mu_{\sigma}] = (b,p)$ such that $\chi(b) = \mu_{\sigma}$,
  $b[\mu_{\sigma}].(l~s) = (l_{\sigma}~s)$ and $\tau_1 \leq b[\mu_{\sigma}].(t~id) <
  \tau_2$.  Otherwise, there must be a cycle of labels between $f_1$
  and $f_2$, which implies that $f_2$, and thus $e_2$, happens after
  the end of $\sigma$.

  Now, the proposer $\rho_2$ receives a set of proposals from the
  acceptors of the quorum $S_2$, including at least one non-null
  proposal from $\beta$.  It first checks that every tag received uses
  the entry $\mu_{\sigma}$ and the label $l_{\sigma}$ and that there is no
  two different proposals with two tags that share the same content in
  entry $\mu_{\sigma}$ before continuing to the second phase of Paxos, and
  if it is not the case, it updates its proposer tag and executes
  another phase $1$ of Paxos.  Hence, since $\rho_2$ has moved to the
  second phase of Paxos, it means that no such issue has happened.
  Then, it selects among the proposals whose tags point to the step
  $s$ the proposal $(b_c,p_c)$ with the highest tag. In particular,
  $\chi(b_c) = \mu_{\sigma}$, $b_c[\mu_{\sigma}].(l~s) =
  (l_{\sigma}~s)$.  Since $\rho_2$ has received the proposal $(b,p)$
  from $\beta$, we have $\tau_1 \leq \tau_c < \tau_2$, where $\tau_c =
  \beta_c[\mu_{\sigma}].(t~id)$.  Let $\beta_c$ be the proposer in
  $S_2$ which has sent to $\rho_2$ the proposal $(b_c,p_c)$ in the
  $p1b$ message.  There is an event $f_c$ in $F$ that corresponds to
  $\beta_c$ accepting the proposal $(b_c,p_c)$. Otherwise there would
  exist a tag value $x$ in $\gamma^*$ such that $x[\mu_{\sigma}].l =
  l_{\sigma}$ and $x[\mu_{\sigma}].t > h$; this is a contradiction,
  since $\sigma$ is $h$-safe.  Next, since $s \not\in
  \mathcal{US}(F,\mu_{\sigma},l_{\sigma})$, $\chi(b_c) =
  \mu_{\sigma}$, and $b_c[\mu_{\sigma}].(l~s) = (l_{\sigma}~s)$,
  the set $Sc(e_2)$ does not contain any fake acceptation
  scenario relatively to $F$, thus neither the set $Sc(f_c)$.  We can
  pick a real scenario in $Sc(f_c)$ and apply the induction
  hypothesis, which shows that $p_c = p_1$.  Hence, $p_1 = p_2$, since $p_c$ is
  the consensus value the proposer $\rho_2$ sends during the
  corresponding Paxos phase $2$.\gqed
\end{proof}

\begin{corollary}[Weak Safety]
  \label{corol:weak_practical_safety}
  Consider an execution $E$.  Let $\lambda$ be a processor and let
  $\Sigma$ be a subexecution such that the local execution $\sigma =
  \Sigma(\lambda)$ at $\lambda$ is an $h$-safe epoch. We note $F$ the
  suffix of the execution that starts with $\Sigma$.  Consider a step
  value $s$ and two decision events $e_i = (\alpha_i,b_i,p_i)$, $i =
  1,2$, such that $\chi(b_i)=\mu_{\sigma}$, $b_i[\mu_{\sigma}].(l~s) =
  (l_{\sigma}~s)$ and $b_i[\mu_{\sigma}].t > h$.  Assume that both
  events $e_1$ and $e_2$ occur in $F$ and $s \geq h$.
  In addition, assume that, if $\mu_{\sigma} < \lambda$, then the
  processor $\mu_{\sigma}$ does not produce any label during $F$.
  Then either $p_1 = p_2$ or the last event of $\sigma$ happens before
  one of the event $e_1$ or $e_2$.
\end{corollary}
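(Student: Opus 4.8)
The plan is to derive this corollary as a direct corollary of Theorem~\ref{th:weak_practical_safety}: the corollary is essentially the theorem specialised to the case where both acceptation events are decision events, with the hypothesis $s\notin\mathcal{US}(F,\mu_\sigma,l_\sigma)$ replaced by the more usable hypothesis $s\ge h$. So the work splits into two parts: converting $s\ge h$ into the unsafe-step condition the theorem requires, and exhibiting, for each decision event, a concrete real simple scenario of the right shape to feed into the theorem.

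First I would discharge the unsafe-step condition. Since $\sigma$ is $h$-safe, Theorem~\ref{th:fake_scenario_characteristic} applies and tells us that every step in $\mathcal{US}(F,\mu_\sigma,l_\sigma)$ is strictly less than $h$. As $s\ge h$ by hypothesis, this immediately gives $s\notin\mathcal{US}(F,\mu_\sigma,l_\sigma)$. This is the only place the assumption $s\ge h$ is used, and it produces exactly the side condition Theorem~\ref{th:weak_practical_safety} needs.

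Next I would trace each decision event back to a simple scenario. A decision event $e_i=(\alpha_i,b_i,p_i)$ can only be produced by an acceptor receiving a decision message, which in turn is emitted by a proposer that has completed a Paxos phase~2; hence any acceptation scenario ending at $e_i$ has, as its last simple scenario, one of the second kind $U_i=\rho_i\xrightarrow{p1a}(S_i,b_i)\xrightarrow{p1b}\rho_i\xrightarrow{p2a}(Q_i,b_i,p_i)\xrightarrow{p2b}\rho_i\xrightarrow{dec}(\alpha_i,b_i,p_i)$. Because $s\notin\mathcal{US}(F,\mu_\sigma,l_\sigma)$, the set $Sc(e_i)$ contains no fake acceptation scenario relatively to $F$, so I may pick a real acceptation scenario in $Sc(e_i)$; by the composition definition only its first simple scenario could ever be fake, so the last simple scenario $U_i$ is itself a genuine (non-fake) second-kind simple acceptation scenario. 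Its characteristic is $(\mu_\sigma,l_\sigma,s,p_i)$ by the hypotheses $\chi(b_i)=\mu_\sigma$ and $b_i[\mu_\sigma].(l~s)=(l_\sigma~s)$.

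Finally I would invoke the theorem. The quantities $\tau_i=b_i[\lambda].(t~id)$ are totally ordered (lexicographically on trial then identifier), so after possibly swapping the roles of $e_1$ and $e_2$ I may assume $\tau_1\le\tau_2$; since the desired conclusion is symmetric in the two events, this relabelling is harmless. Taking $U_1$ to be the second-kind scenario for $e_1$ and $U_2=(S_2,b_2)\leadsto(\alpha_2,b_2,p_2)$ the one for $e_2$, all the hypotheses of Theorem~\ref{th:weak_practical_safety} are met: $b_i[\mu_\sigma].t>h$ is given, both events occur in $F$, $s\notin\mathcal{US}(F,\mu_\sigma,l_\sigma)$ was just established, and the no-label-production assumption on $\mu_\sigma$ transfers verbatim. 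The theorem then yields $p_1=p_2$ or that the last event of $\sigma$ happens before $e_1$ or $e_2$, which is exactly the claim. The main obstacle is not any hard argument but the careful bookkeeping of the previous paragraph: verifying that a decision event necessarily forces a real second-kind simple scenario with characteristic $(\mu_\sigma,l_\sigma,s,p_i)$, and that the theorem's asymmetric hypothesis (with $U_1$ of the second kind and $U_2$ arbitrary) can always be satisfied by orienting the pair according to $\tau_1\le\tau_2$.
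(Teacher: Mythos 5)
Your proposal is correct and follows essentially the same route as the paper's own proof: invoke Theorem~\ref{th:fake_scenario_characteristic} to turn $s \geq h$ into $s \notin \mathcal{US}(F,\mu_{\sigma},l_{\sigma})$, observe that a decision event forces a real acceptation scenario ending in a second-kind simple scenario with the stated characteristic, and then apply Theorem~\ref{th:weak_practical_safety} after orienting the pair so that $\tau_1 \leq \tau_2$. Your write-up is in fact somewhat more explicit than the paper's about why the terminal simple scenario is non-fake and why the relabelling is harmless.
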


\begin{proof}
  Since $e_1$ and $e_2$ are decision events, and since $s$ is not in
  $\mathcal{US}(F,\mu_{\sigma},l_{\sigma})$ ($s \geq h$, cf
  Theorem~\ref{th:fake_scenario_characteristic}), there are two real
acceptation scenarios in $Sc(e_1)$ and $Sc(e_2)$ relatively to $F$
respectively that contains simple acceptation scenarios of the second
kind as follows:
  \begin{align}
    U_1 &= \rho_1 \xrightarrow{p1a} (S_1,c_1) \xrightarrow{p1b} \rho_1 \xrightarrow{p2a}
      (Q_1,c_1,p_1) \xrightarrow{p2b} \rho_1 \xrightarrow{dec} (\beta_1,c_1,p_1)\\
    U_2 &= \rho_2 \xrightarrow{p1a} (S_2,c_2) \xrightarrow{p1b} \rho_2 \xrightarrow{p2a}
      (Q_2,c_2,p_2) \xrightarrow{p2b} \rho_2 \xrightarrow{dec} (\beta_2,c_2,p_2)    
  \end{align}
  with characteristics $(\mu_{\sigma},l_{\sigma},s,p_1)$ and
  $(\mu_{\sigma},l_{\sigma},s,p_2)$ respectively and trial values
  $c_i[\mu_{\sigma}].t$ greater than $h$.  We note $\tau_i =
  c_i[\mu_{\sigma}].(t~id)$.  Whether $\tau_1 \leq \tau_2$ or $\tau_2
  \leq \tau_1$, Theorem~\ref{th:weak_practical_safety} yields the
  result.\gqed
\end{proof}

\begin{theorem}[Safety]
  \label{th:practical_safety}
  Consider an execution $E$, a proposer $\lambda$ proposer and a
  subexecution $\Sigma$ such that the local execution $\sigma =
  \Sigma(\lambda)$ at $\lambda$ is a $h$-safe epoch for some bounded
  integer $h$.  We note $F$ the suffix of execution that starts with
  $\Sigma$. Assume that the observed zone $Z(F,\lambda,\sigma)$ is
  defined and that, if $\mu_{\sigma} < \lambda$, then the processor
  $\mu_{\sigma}$ does not produce any label during $F$.  Consider two
  scenarios $U_1$, $U_2$ in $Z(F,\lambda,\sigma)$ with characteristics
  $(\mu_1,l_1,s_1,p_1)$ and $(\mu_2,l_2,s_2,p_2)$ such that
  $\mu_{\sigma} \leq \min(\mu_1,\mu_2)$ and both scenarios contain
  simple acceptation scenarios with tags whose associated trial values
  are greater than $h$.  Then $(\mu_1,l_1) = (\mu_2,l_2) =
  (\mu_{\sigma},l_{\sigma})$, and if $s_1 = s_2 \geq h$ then $p_1 = p_2$.
\end{theorem}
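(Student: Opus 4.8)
The plan is to split the statement into two claims and prove them in order: a \emph{tag-identification} part showing $(\mu_1,l_1)=(\mu_2,l_2)=(\mu_\sigma,l_\sigma)$, and a \emph{value-agreement} part showing $p_1=p_2$ when $s_1=s_2=s\geq h$. Everything rests on the observed-zone structure: each scenario $U_i$ supplies a witness acceptor $\alpha_i\in S_i\cap Q^0\cap Q^f$ whose $p1b$-sending event $e_i$ is sandwiched, $e^0(\alpha_i)\leadsto e_i\leadsto e^f(\alpha_i)$, between two interactions with $\lambda$ that are processed strictly inside the $h$-safe epoch $\sigma$.

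For the identification part I would first argue that at both $e^0(\alpha_i)$ and $e^f(\alpha_i)$ the acceptor $\alpha_i$ has first valid entry $\mu_\sigma$ and label $l_\sigma$. Since $\sigma$ is an epoch, $\lambda$'s first valid entry and label stay $(\mu_\sigma,l_\sigma)$ throughout; the preempting routine only keeps replies meeting $C^+$ or $C^-$, and during $\sigma$ such a reply can neither move $\lambda$'s first valid entry to the left nor change its label, as either would be an interrupt before the last position. A short case check on $C^+$/$C^-$ then forces $\chi(a_{\alpha_i})=\mu_\sigma$ with label $l_\sigma$ at $e^0$ and $e^f$. Next I would apply Lemma~\ref{lem:epoch_cycle_labels} to the window $G(\alpha_i)=[e^0(\alpha_i),e^f(\alpha_i)]$: its endpoints are valid with $(\mu_\sigma,l_\sigma)$, so if the label in entry $\mu_\sigma$ left $l_\sigma$ and returned, we would obtain a cycle, whence $\mu_\sigma=\lambda$ and the last event of $\sigma$ before $e^f(\alpha_i)$, which is impossible since $e^f(\alpha_i)$ is processed inside $\sigma$. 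Hence the label stays $l_\sigma$ across the window, in particular at $e_i$. Because $a_{\alpha_i}\simeq b_i$ at $e_i$ and $\chi(b_i)=\mu_i\geq\mu_\sigma$ by hypothesis, deducing $\mu_i=\mu_\sigma$ and $l_i=l_\sigma$ reduces to checking that entry $\mu_\sigma$ is actually \emph{valid} at $e_i$, so that $\chi(a_{\alpha_i})\leq\mu_\sigma$ forces $\mu_i=\mu_\sigma$.

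For the value-agreement part, once both characteristics are $(\mu_\sigma,l_\sigma,s,\cdot)$ with $s\geq h$, Theorem~\ref{th:fake_scenario_characteristic} gives $s\notin\mathcal{US}(F,\mu_\sigma,l_\sigma)$, so every acceptation scenario tracing these events is real. I would then extract from each $U_i$ a simple acceptation scenario of characteristic $(\mu_\sigma,l_\sigma,s,p_i)$ with trial exceeding $h$, order the trial values $\tau_1\leq\tau_2$ without loss of generality, and invoke the Weak Safety results (Theorem~\ref{th:weak_practical_safety}, specialized via Corollary~\ref{corol:weak_practical_safety} when the governing high-trial scenario carries a quorum acceptance or decision). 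Weak Safety returns the alternative ``$p_1=p_2$ or the last event of $\sigma$ happens before $e_1$ or $e_2$'', and the observed-zone requirement that the final acceptation events do not happen after the end of $\sigma$ discards the second disjunct, leaving $p_1=p_2$.

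The main obstacle is the delicate validity point in the identification part: ruling out that entry $\mu_\sigma$ of $\alpha_i$ is momentarily invalid at $e_i$, which would permit $\mu_i>\mu_\sigma$. I would dispose of this through the two invalidation modes. If the entry were invalid via a canceling field, the canceling label for $l_\sigma$ would sit in $H_{\alpha_i}[\mu_\sigma]$ and, since the label remains $l_\sigma$ so that no fresh labels enter that bounded history, it would still be found when $\alpha_i$ re-validates the entry at $e^f$, contradicting validity there; this is precisely the cancellation mechanism of Lemma~\ref{lem:cycle_labels}. If the entry were invalid because a step or trial reached $2^{\mathfrak{b}}$, then, since adoption only ever copies a strictly greater tag, entry $\mu_\sigma$ could not return to a low valid value under the same label $l_\sigma$ without a label change, again forcing a forbidden cycle. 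Together with the hypothesis that $\mu_\sigma$ produces no label during $F$ when $\mu_\sigma<\lambda$, both modes contradict validity at $e^f(\alpha_i)$, so entry $\mu_\sigma$ is valid at $e_i$ and the identification part closes.
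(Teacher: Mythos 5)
Your proposal is correct and follows essentially the same route as the paper's proof: use the observed-zone witness acceptor $\alpha\in S\cap Q^0\cap Q^f$ whose $p1b$ event is sandwiched between $e^0(\alpha)$ and $e^f(\alpha)$, show any deviation of entry or label from $(\mu_{\sigma},l_{\sigma})$ would force a cycle of labels and hence, by Lemma~\ref{lem:epoch_cycle_labels}, place the end of $\sigma$ before $e^f(\alpha)$ (a contradiction), then conclude $p_1=p_2$ via Theorem~\ref{th:fake_scenario_characteristic} and Corollary~\ref{corol:weak_practical_safety}. Your closing paragraph on the two invalidation modes supplies detail the paper leaves implicit in its one-line claim that invalid-then-valid forces a label cycle, but it does not change the argument.
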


\begin{proof}
  Assume that the scenario $U_1$ is such that $\mu_1 >
  \mu_{\sigma}$. Let $V = (S,b) \leadsto (\beta,b,p)$ be its first
  simple acceptation scenario.  By definition of the observed zone
  $Z(F,\lambda,\sigma)$, there exists an acceptor $\alpha$ in $S \cap
  Q^0 \cap Q^f$ such that we have the happen-before relations
  $e^0(\alpha) \leadsto e \leadsto e^f(\alpha)$, where $e$ is the
  event that corresponds to $\alpha$ sending a $p1b$ message in the
  scenario $V$.  At $e^0(\alpha)$ and $e^f(\alpha)$, messages are sent
  to $\lambda$ and are processed during $\sigma$. Hence, the
  corresponding tag values of the variable $a_{\alpha}$ must use the
  entry $\mu_{\sigma}$ and the label $l_{\sigma}$. Otherwise, the
  message either is not processed or causes an interrupt at processor
  $\lambda$.  Now, at event $e$, the first valid entry of the variable
  $a_{\alpha}$ is $\mu_1 > \mu_{\sigma}$ which implies that the entry
  $\mu_{\sigma}$ is invalid.  Hence, between $e^0(\alpha)$ and
  $e^f(\alpha)$, the entry $a_{\alpha}[\mu_{\sigma}]$ becomes invalid
  and valid again.  There must be a cycle of labels in the label field
  $a_{\alpha}[\lambda].l$.  Lemma~\ref{lem:epoch_cycle_labels} implies
  that the last event of $\sigma$ happens before $e^f(\alpha)$; by the
  definition of $e^f(\alpha)$, this is a contradiction. Therefore
  $\mu_1 = \mu_{\sigma}$. If $l_1 \neq l_{\sigma}$, then there must
  also be a cycle of labels in the entry $a_{\alpha}[\mu_{\sigma}]$
  between $e^0(\alpha)$ and $e^f(\alpha)$, which leads to a
  contradiction again, thanks to the same argument.  Therefore, $l_1 =
  l_{\sigma}$. Of course, the previous demonstration also shows that
  $(\mu_2,l_2) = (\mu_{\sigma},l_{\sigma})$.  If $s_1 = s_2 \geq h$, then
  Corollary~\ref{corol:weak_practical_safety}, the fact that the trial
  values associated to the scenarios $U_1$ and $U_2$ are greater than
  $h$ and the fact that the two acceptation events in scenarios $U_1$
  and $U_2$ do not happen after the end of $\sigma$ imply that $p_1 =
  p_2$.\gqed
\end{proof}

\begin{remark}
  In the case $\mu_{\sigma} < \lambda$ , assuming that $\mu_{\sigma}$
  does not produce any label during $F$ means that the proposer
  $\lambda$ should be the live processor with the lowest identifier.
  To deal with this issue, one can use a failure detector.
\end{remark}

\section{Generalized Self-Stabilizing Paxos}
\label{app:generalized_paxos}

A \emph{command history} is a sequence $c_1 c_2 \dots c_r$ of
state-machine commands of length at most $r < 2^{\mathfrak{b}}$.  Given
sequences $p_1$ and $p_2$, we note$p_1 \sqsubseteq p_2$ when $p_1$ is
a prefix of $p_2$.  Each learner $\alpha$, has a command history
variable $learned_{\alpha}$ that represents the sequence of commands
decided so far. We denote the empty sequence by $\bot$, the
concatenation of two command histories by $p_1 \circ p_2$ and the
length of a command history $p$ by $|p|$. Following
\cite{Lamport05generalizedconsensus}, the Generalized Consensus
specifications are:
\begin{itemize}
\item (Non-triviality) For any learner $\alpha$, $learner_{\alpha}$ is
  always a sequence of proposed commands.
\item (Stability) For any learner $\alpha$, the value of
  $learner_{\alpha}$ at any time is a prefix of its value at any later
  time.
\item (Consistency) For any learners $\alpha$ and $\beta$, it is
  always the case that one of the sequences $learned_{\alpha}$ and
  $learned_{\beta}$ is a prefix of the other.
\item (Liveness) If command $cmd$ is proposed and $\alpha$ is a
  learner, then eventually $learned_{\alpha}$ will contain the command
  $cmd$.
\end{itemize}

\subsection{Algorithm}

Our generalized self-stabilizing Paxos, assigns empty history of
state-machine commands whenever an epoch change takes place.
Thus, when the epoch is not changed for practically infinite execution
the accumulated history of state-machine commands
is extended by practically infinitely many new globally decided upon
state-machine commands, and therefore act
as as a virtual single state machine. Such an execution allows the
replicated state machine to stabilize in the interaction
with the (possibly interactive) algorithms that use the replicated
state machine as their virtual machine.

We now explain how to adapt the self-stabilizing repeated consensus algorithm to obtain a
Generalized Self-Stabilizing Paxos. We choose the type
of consensus value to be a command history. We keep the same variables
as before, and we simply add command history $learned_{\alpha}$ that
is modified only on decisions (Algorithm~\ref{alg:gpaxos_decide}).  The
acceptor algorithm and the preempting routine are not modified.  

We only add minor modifications to the proposer algorithm
(Algorithm~\ref{alg:gpaxos_proposer}) as follows. At the beginning of
the loop (line~\ref{algline:gpaxos_loop_read_proposal}), the proposer,
say $\lambda$, reads a command $cmd$, and initializes its variable
$p_{\lambda}$ to $p^* \circ cmd$ (i.e. the command history with a
single command) where $p^*$ is the value of $p_{\lambda}$ at the
beginning of the loop. The phase $1$ remains the same. At the
beginning of phase $2$, the proposer $\lambda$ has collected replies
(set $\Gamma$, line~\ref{algline:gpaxos_loop_p2_begin}) from a
majority of acceptors.
If the tags in $\Gamma$ satisfies a coherence condition, the proposer
selects the proposals $(a,p)$ such that the tag $a$ is maximal, and
discards those that do not satisfy $a[\chi(a_{\lambda}].s =
|p|$. Noting $\Gamma_0$ the filtered proposals, $\lambda$ selects the
command sequence $p_{max}$ in $\Gamma_0$ that is maximal according to
some lexicographical order (to break ties), and $\lambda$ sets its
command history $p_{\lambda}$ to the concatenation $p_{max} \circ cmd$. Note
that, if there are only null collected replies, or if all of the
replies were incoherent, then $p_{\lambda}$ keeps its value $p^*
\circ cmd$. Next, the proposer $\lambda$ executes the second phase as
in the previous algorithm.

In addition, any time the proposer undergoes a change of first valid
entry, or a change of label, the proposer either cuts its command
sequence $p_{\lambda}$ (via $p^*$ in the pseudo-code) or fill it with
$nop$ operations in order to have a length equal to the step field in
the first valid entry of $a_{\lambda}$ (command \texttt{truncate}).

\subsection{Proofs}

From the four requirements of Generalized Consensus, we only outline
the proofs for stability and consistency requirements.  Indeed, the
non-triviality condition follows from the fact that histories are
extended only by processors, namely by a concatenation of a new
command to existing histories (Algorithm~\ref{alg:gpaxos_proposer},
lines~\ref{algline:gpaxos_loop_append1} and
\ref{algline:gpaxos_loop_append2}).  And the liveness condition relies
on a failure detector, or more precisely, on the possibility for a
proposer to complete the Paxos phases; which is common to the Paxos
algorithm.

Theorem~\ref{th:existence_safe_epoch}, that ensures the existence
of a safe epoch, is still valid in this framework since the tag system
is not related to the type of consensus values and to the way they are
processed. Theorem~\ref{th:practical_safety} can be reformulated as follows.
\begin{theorem}[Generalized Paxos Stability and Consistency]
  \label{th:gpaxos_practical_safety}
  Consider an execution $E$, a proposer $\lambda$ proposer and a
  subexecution $\Sigma$ such that the local execution $\sigma =
  \Sigma(\lambda)$ at $\lambda$ is a $h$-safe epoch for some bounded
  integer $h$.  We note $F$ the suffix of execution that starts with
  $\Sigma$. Assume that the observed zone $Z(F,\lambda,\sigma)$ is
  defined and that, if $\mu_{\sigma} < \lambda$, then the processor
  $\mu_{\sigma}$ does not produce any label during $F$.  Consider two
  scenarios $U_1$, $U_2$ in $Z(F,\lambda,\sigma)$ with characteristics
  $(\mu_1,l_1,s_1,p_1)$ and $(\mu_2,l_2,s_2,p_2)$ such that
  $\mu_{\sigma} \leq \min(\mu_1,\mu_2)$ and both scenarios contain
  simple acceptation scenarios with tags whose associated trial values
  are greater than $h$.  Then $(\mu_1,l_1) = (\mu_2,l_2) =
  (\mu_{\sigma},l_{\sigma})$, and if $h \leq s_1 \leq s_2$ then $p_1 \sqsubseteq p_2$.
\end{theorem}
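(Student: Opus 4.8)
The plan is to follow the architecture of Theorem~\ref{th:practical_safety} and treat its two conclusions separately, since only the second is genuinely new. The first conclusion, $(\mu_1,l_1)=(\mu_2,l_2)=(\mu_{\sigma},l_{\sigma})$, concerns only the tag system and is completely insensitive to the type of consensus value; because the hypotheses on $\Sigma$, $\sigma$, $F$ and the observed zone $Z(F,\lambda,\sigma)$ are literally identical to those of Theorem~\ref{th:practical_safety}, I would reproduce that proof verbatim. That is, for the first simple acceptation scenario $V=(S,b)\leadsto(\beta,b,p)$ of $U_1$ I pick an acceptor $\alpha\in S\cap Q^0\cap Q^f$ with $e^0(\alpha)\leadsto e\leadsto e^f(\alpha)$, and note that a first valid entry $\mu_1>\mu_{\sigma}$, or a label $l_1\neq l_{\sigma}$, at the event $e$ would force the entry $a_{\alpha}[\mu_{\sigma}]$ to become invalid and valid again (respectively to undergo a cycle of labels) between $e^0(\alpha)$ and $e^f(\alpha)$; Lemma~\ref{lem:epoch_cycle_labels} then places the end of $\sigma$ before $e^f(\alpha)$, contradicting the definition of the observed zone. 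The same applies to $U_2$.

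For the consistency conclusion I would first reduce, just as Corollary~\ref{corol:weak_practical_safety} reduces to Theorem~\ref{th:weak_practical_safety}, to the following weak statement: whenever two decision or acceptation events occurring in $F$ and not happening after the end of $\sigma$ carry characteristics $(\mu_{\sigma},l_{\sigma},s_1,p_1)$ and $(\mu_{\sigma},l_{\sigma},s_2,p_2)$ with $h\le s_1\le s_2$ and associated trial values greater than $h$, then $p_1\sqsubseteq p_2$. The structural fact that makes this work is the invariant, maintained by the \texttt{truncate} command of the generalized proposer, that the step field equals the length of the command history (this is exactly the filtering condition $a[\chi(a_{\lambda})].s=|p|$). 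Thus a decision at step $s$ commits a history of length exactly $s$, and a proposer deciding at step $s_2$ has extended, by a single command, a length-$(s_2-1)$ history that was committed or learned for step $s_2-1$.

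I would prove the weak statement by an outer induction on $s_2$ carrying an inner induction on the trial value $\tau_2=b_2[\lambda].(t~id)$. In the base case $s_1=s_2$ the argument of Theorem~\ref{th:weak_practical_safety} transfers almost verbatim: for $\beta\in Q_1\cap S_2$, the cycle-of-labels reasoning again localizes the relevant acceptation and adoption events to $F$ before the end of $\sigma$ and orders them, so $\beta$'s $p1b$ message carries a non-null proposal of step $s_1$ with trial in $[\tau_1,\tau_2)$; the only change is that the generalized proposer selects the accepted proposal with maximal tag (after filtering on $a[\chi(a_{\lambda})].s=|p|$) and breaks ties by a lexicographic order on histories, while the coherence check rejects two distinct proposals sharing the same content in entry $\mu_{\sigma}$. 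Hence the selected history is determined by the maximal trial and the inner induction yields $p_1=p_2$, a fortiori $p_1\sqsubseteq p_2$. In the case $s_1<s_2$ I write $p_2=\hat p\circ cmd$, where $\hat p$ is the history $\rho_2$ selected at the start of its phase~$2$; then $|\hat p|=s_2-1\ge s_1\ge h$ and $\hat p\sqsubseteq p_2$ by construction, and the proposal carrying $\hat p$ was accepted by some acceptor that replied to $\rho_2$ in $S_2$, giving an acceptation event of characteristic $(\mu_{\sigma},l_{\sigma},s_2-1,\hat p)$. Applying the outer induction hypothesis to the pair $(s_1,s_2-1)$, whose larger step is strictly smaller and still at least $h$, gives $p_1\sqsubseteq\hat p$, whence $p_1\sqsubseteq\hat p\sqsubseteq p_2$.

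The main obstacle, exactly as in Theorem~\ref{th:weak_practical_safety}, is to justify that the intermediate history $\hat p$ (and, in the base case, the proposal carried by $\beta$) genuinely comes from an event that lies in $F$, does not happen after the end of $\sigma$, carries a trial greater than $h$, and sits outside the unsafe steps, so that the induction hypotheses truly apply. This is precisely where I would redeploy the machinery already built: the $h$-safe epoch definition excludes events whose tags would otherwise appear in $\gamma^*$ with step or trial exceeding $h$; Lemma~\ref{lem:epoch_cycle_labels} localizes events relative to the end of $\sigma$ through cycles of labels; and Theorem~\ref{th:fake_scenario_characteristic} guarantees that every step $s\ge h$ lies outside $\mathcal{US}(F,\mu_{\sigma},l_{\sigma})$, so that all scenarios in play are real. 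The ``step equals history length'' invariant is what keeps the step of $\hat p$ equal to $s_2-1$, ensuring the outer induction strictly decreases while remaining in the safe range $[h,s_2)$, and it is the only ingredient specific to the generalized setting.
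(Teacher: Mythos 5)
Your proposal is correct in spirit, but it is doing substantially more work than the paper, which in fact contains no proof of Theorem~\ref{th:gpaxos_practical_safety} at all: the authors only remark that Theorem~\ref{th:existence_safe_epoch} survives because the tag system is value-agnostic, declare the theorem a ``reformulation'' of Theorem~\ref{th:practical_safety}, and stop. What you supply --- reusing the observed-zone/cycle-of-labels argument verbatim to get $(\mu_1,l_1)=(\mu_2,l_2)=(\mu_{\sigma},l_{\sigma})$, then replacing the single induction on the trial value in Theorem~\ref{th:weak_practical_safety} by an outer induction on the step carrying the inner induction on $\tau_2$, glued together by the \texttt{truncate}-maintained invariant that the step field equals the history length --- is the natural way to actually discharge the prefix claim, and you correctly identify that this invariant is the only ingredient specific to the generalized setting. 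This is a genuinely more complete route than anything in the paper.

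Two points deserve care, neither of which the paper resolves for you. First, the cross-step descent from $(s_1,s_2)$ to $(s_1,s_2-1)$ needs the intermediate history $\hat p$ to originate from an event that lies in $F$, does not happen after the end of $\sigma$, and carries a trial value greater than $h$; $h$-safety of $\sigma$ rules out a tag with step $s_2-1$ and label $l_{\sigma}$ in $\gamma^*$ only when $s_2-1>h$, so at the boundary $s_1=s_2-1=h$ the history $\hat p$ may come from the prefix with trial at most $h$ and the induction hypothesis does not apply; the statement would need $s_1>h$ (or a separate boundary argument) to close this case. Second, the pseudo-code of Algorithm~\ref{alg:gpaxos_proposer} filters $\Gamma_0$ on $b[\mu].s=a_{\lambda}[\mu].s=|q|$ and then sets $p_{\lambda}\leftarrow p_{max}\circ cmd$, which yields $|p_{\lambda}|=s_2+1$ and breaks the very ``step equals history length'' invariant your descent relies on ($|\hat p|=s_2-1$ holds only on the $\Gamma_0=\emptyset$ branch); your proof implicitly repairs this bookkeeping, which is defensible but should be stated as an assumption on the intended semantics of the algorithm.
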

This theorem ensures the stability and consistency condition of the Generalized Consensus problem.

\twocolumn

\section{Algorithm Pseudo-Code}
\label{app:algorithm_pseudo_code}


\begin{algorithm}
  \footnotesize
  \SetKwBlock{KwFunction}{function}{end function}
  \SetKwFunction{KwAuxFillCl}{aux\_fill\_cl}
  \SetKwFunction{KwFillCl}{fill\_cl}
  \SetKwFunction{KwClean}{clean}
  \KwFunction({\KwClean{$\lambda$ : processor identifier, $a$ : tag}}){
    \ForEach{$\mu \in \Pi$}{
      \lIf{$a[\mu].cl \preccurlyeq a[\mu].l$}{$a[\mu] \leftarrow \bot$}\;
      $a[\mu].id \leftarrow \lambda$\;
    }
  }
  \KwFunction({\KwFillCl{$x,y$ : tags}}){
    $x_c \leftarrow x$, $y_c \leftarrow y$\;
    \ForEach{$\mu \in \Pi$}{
      \lIf{$y_c[\mu].(l~or~cl) \not\preccurlyeq x[\mu].l$}{$x[\mu].cl \leftarrow y_c[\mu].(l~or~cl)$}\;
      \uIf{$y_c[\mu].l = x[\mu].l \land y_c[\mu].(s~or~t) = 2^{\mathfrak{b}}$}{
        $x[\mu].(s~t) \leftarrow (2^{\mathfrak{b}} ~ 2^{\mathfrak{b}})$\;
      }
      idem by exchanging $(x_,x_c)$ and $(y,y_c)$
    }
  }
  \caption{Tags - Procedures}
  \label{alg:tagprocs2}
\end{algorithm}

\begin{algorithm}
  \footnotesize
  \SetKwBlock{KwFunction}{function}{end function}
  \SetKwFunction{KwFillCl}{fill\_cl}
  \SetKwFunction{KwClean}{clean}
  \SetKwFunction{KwCheckEntry}{check\_entry}
  \KwFunction(\KwCheckEntry{$\lambda$ : identifier, $x$ : tag, $L$ : history of labels}){
    \uIf{$x[\lambda]$ is invalid}{
      $L \leftarrow L + x[\lambda].l$\;
      $x[\lambda].(l~s~t~id) \leftarrow (\nu(L)~0~0~\lambda)$\;
      $x[\lambda].cl \leftarrow \bot$\;
    }
  }
  \KwFunction({$\nu^*( \lambda$ : identifier, $x$ : tag, $L$ : label history$)$}){
    $y \leftarrow x$\;
    \KwClean{$\lambda,y$}\;
    \uIf{$\chi(y) \leq \lambda$}{
      (case $\nu^s$) $y[\chi(y)].(s~t) \leftarrow (1+y[\chi(y)].s ~ 0)$\;
      (case $\nu^t$) $y[\chi(y)].t \leftarrow 1+y[\chi(y)].t$\;
    }
    \KwCheckEntry{$\lambda,y,L$}\;
    \Return{$y$}\;
  }
  \caption{Tags - Increment functions}
  \label{alg:tagprocs3}
\end{algorithm}

\begin{algorithm}
  \footnotesize
  \SetKwBlock{KwLoop}{loop}{end loop}
  \SetKwBlock{KwFunction}{function}{end function}
  \SetKwBlock{KwVariables}{variables}{}
  \SetKwBlock{KwRule}{rule}{end~rule}
  \SetKwFunction{KwSend}{send}
  \SetKwFunction{KwReceive}{receive}
  \SetKwFunction{KwBcast}{broadcast}
  \SetKwFunction{KwPropagate}{propagate}
  \SetKwFunction{KwDecide}{decide}
  \SetKwFunction{KwFillCl}{fill\_cl}
  \SetKwFunction{KwInput}{input}
  \SetKwFunction{KwCheckEntry}{check\_entry}
  \Switch{\KwReceive{}}{
    \uCase{$\langle p1a, \lambda, b \rangle$}{
      $a_{old} \leftarrow a_{\alpha}$\;
      \lIf{$b[\alpha].(l~or~cl) \not\preccurlyeq a_{\alpha}[\alpha].l$}
      {$H^{cl}_{\alpha} \leftarrow H^{cl}_{\alpha} + b[\alpha].(l~or~cl)$}\label{algline:acceptor_p1a_update_hcl}\;
      \KwFillCl{$a_{\alpha},b$}, \KwCheckEntry{$\alpha,a_{\alpha},H_{\alpha}^{cl}$}\label{algline:acceptor_p1a_fill_cl}\;
      \If{$a_{\alpha} \prec b$}{
        $a_{\alpha}[\chi(b)] \leftarrow b[\chi(b)]$\label{algline:acceptor_p1a_adopt}\;
        \uIf{$a_{old}[\chi(b)].l \neq a_{\alpha}[\chi(b)].l$}{
          $r_{\alpha}[\chi(b)] \leftarrow \bot$\label{algline:acceptor_p1a_label_change1}\;
          $H_{\alpha}[\chi(b)] \leftarrow H_{\alpha}[\chi(b)] + a_{old}[\chi(b)].l$\;
          \lIf{$\exists l \in H_{\alpha}[\chi(b)],~ l \not\preccurlyeq a_{\alpha}[\chi(b)].l$}
          {$a_{\alpha}[\chi(b)].cl \leftarrow l$}\label{algline:acceptor_p1a_label_change2}\;
        }
      }
      \ForEach{$\mu \in \Pi$}{
        $c \leftarrow r_{\alpha}[\mu].b$\;
        \uIf{$c[\mu].l \neq a_{\alpha}[\mu].l \lor a_{\alpha}[\mu].(l~s~t~id) \prec  c[\mu].(l~s~t~id)$}{
          $r_{\alpha}[\mu] \leftarrow \bot$\;
        }
      }
      \KwSend{$\lambda, \langle p1b, \alpha, a_{\alpha}, r_{\alpha}[\chi(a_{\alpha})] \rangle$}\label{algline:acceptor_p1a_reply}\;
    }
    \uCase{$\langle p2a, \lambda, b, p \rangle$ or $\langle decision, \lambda, b, p \rangle$}{
      $a_{old} \leftarrow a_{\alpha}$\;
      \lIf{$b[\alpha].(l~or~cl) \not\preccurlyeq a_{\alpha}[\alpha].l$}
      {$H^{cl}_{\alpha} \leftarrow H^{cl}_{\alpha} + b[\alpha].(l~or~cl)$}\label{algline:acceptor_p2a_update_hcl}\;
      \KwFillCl{$a_{\alpha},b$}, \KwCheckEntry{$\alpha,a_{\alpha},H_{\alpha}^{cl}$}\label{algline:acceptor_p2a_fill_cl}\;
      \If{$a_{\alpha} \preccurlyeq b$}{
        $a_{\alpha}[\chi(b)] \leftarrow b[\chi(b)]$,
        $r_{\alpha}[\chi(b)] \leftarrow [b, p]$\label{algline:acceptor_p2a_accept}\;
        \lIf{it is a decision message}{\KwDecide{$b,p$}}\label{algline:acceptor_decide}\;
        \uIf{$a_{old}[\chi(b)].l \neq a_{\alpha}[\chi(b)].l$}{
          $H_{\alpha}[\chi(b)] \leftarrow H_{\alpha}[\chi(b)] + a_{old}[\chi(b)].l$\label{algline:acceptor_p2a_label_change1}\;
          \lIf{$\exists l \in H_{\alpha}[\chi(b)],~ l \not\preccurlyeq a_{\alpha}[\chi(b)].l$}
          {$a_{\alpha}[\chi(b)].cl \leftarrow l$}\label{algline:acceptor_p2a_label_change2}\;
        }
      }
      \ForEach{$\mu \in \Pi$}{
        $c \leftarrow r_{\alpha}[\mu].b$\;
        \uIf{$c[\mu].l \neq a_{\alpha}[\mu].l \lor a_{\alpha}[\mu].(l~s~t~id) \prec  c[\mu].(l~s~t~id)$}{
          $r_{\alpha}[\mu] \leftarrow \bot$\;
        }
      }
      \lIf{it is a $p2a$ message}{\KwSend{$\lambda, \langle p2b, \alpha, a_{\alpha}, r_{\alpha} \rangle$}}\label{algline:acceptor_p2a_reply}\;
    }
  }    
  \caption{Acceptor $\alpha$}
  \label{alg:acceptor}
\end{algorithm}

\begin{algorithm}
  \footnotesize
  \SetKwBlock{KwLoop}{loop}{end~loop}
  \SetKwBlock{KwBlock}{}{}
  \SetKwBlock{KwRule}{rule }{end~rule}
  \SetKwBlock{KwFunction}{function}{end~function}
  \SetKwBlock{KwVariables}{variables}{}
  \SetKwFunction{KwSend}{send}
  \SetKwFunction{KwBcast}{broadcast}
  \SetKwFunction{KwFillCl}{fill\_cl}
  \SetKwFunction{KwUpdate}{update}
  \SetKwFunction{KwInput}{input}
  \SetKwFunction{KwExec}{execute}
  \SetKwFunction{KwPR}{PR}

  \KwLoop({As long as $\Theta_{\lambda} = \gconst{true}$}){
    $p^* \leftarrow \KwInput()$\label{algline:loop_read_proposal}\;
    $a_{\lambda} \leftarrow \nu^s(\lambda,a_{\lambda},H_{\lambda}^{cl})$\;
    \;
    [Ph. 1]\;
    $p_{\lambda} \leftarrow p^*$\;
    $\forall \alpha \in \Pi$, \KwSend{$\alpha, \langle p1a, \lambda, a_{\lambda} \rangle$}\label{algline:loop_bcast_p1a}\;
    \lIf{ \KwPR{$1$} returns $nok$}{go to [Ph. 1]}\;
    \;
    [Ph. 2]\;
    let $\mu = \chi(a_{\lambda})$, and $\Gamma$ be the set of non-null proposals $r_{\alpha}[\mu]$\label{algline:loop_p2_begin}
    received at the end of [Ph. 1] in\;
    \uIf{$\Gamma \neq \emptyset$}{
      \uIf{$\forall x,y \in \Gamma, \chi(x.a) = \chi(y.a) = \mu \land x.a[\mu].l = y.a[\mu].l = a_{\lambda}[\mu].l$}{
        $\Gamma_0 \leftarrow \{ (a,p) \in \Gamma ~|~ a = 
        \max\left(b | \exists q, (b,q) \in \Gamma, b[\mu].s = a_{\lambda}[\mu].s\right) \} $\;
        \lIf{$\Gamma_0 = \{(a,p)\}$}{
          $p_{\lambda} \leftarrow p$
        }\;
        \lElse{
          $p_{\lambda} \leftarrow p^*$
        }\;
      }
      \lElse{$p_{\lambda} \leftarrow p^*$}\;
    }\lElse{$p_{\lambda} \leftarrow p^*$\label{algline:loop_p2_process_replies_end}}\;
    $\forall \alpha \in \Pi$, \KwSend{$\alpha, \langle p2a, \lambda, a_{\lambda}, p_{\lambda} \rangle$}\label{algline:loop_bcast_p2a}\;
    \lIf{ \KwPR{$2$} returns $nok$}{go to [Ph. 1]}\;
    $\forall \alpha \in \Pi$, \KwSend{$\alpha, \langle decision, \lambda, a_{\lambda}, p_{\lambda} \rangle$}\label{algline:loop_bcast_dec}\;
  }
  \caption{Proposer $\lambda$ - Main loop}
  \label{alg:proposer}
\end{algorithm}

\begin{algorithm}
  \footnotesize
  \SetKwBlock{KwLoop}{loop}{end loop}
  \SetKwBlock{KwFunction}{function}{end function}
  \SetKwBlock{KwVariables}{variables}{}
  \SetKwFunction{KwReceive}{receive}  
  \SetKwFunction{KwFillCl}{fill\_cl}
  \SetKwFunction{KwUpdate}{update}
  \SetKwFunction{KwInput}{input}
  \SetKwFunction{KwCheckEntry}{check\_entry}
  \SetKwFunction{KwExec}{execute}
  \SetKwFunction{KwPR}{PR}

  \KwFunction({\KwPR{$\phi$ : phase 1 or phase 2}}){
    $N \leftarrow \emptyset$, $M \leftarrow 0$, $a_{sent} \leftarrow a_{\lambda}$\;
    \While{$|N| < \gconst{n} - \gconst{f}$}{
      $b \leftarrow a_{sent}$\label{algline:pr_begin_loop}\;
      $\langle p\phi b, \alpha, a_{\alpha}, q_{\alpha} \rangle \leftarrow$ \KwReceive{$\langle p\phi b, *, *,* \rangle$}\label{algline:pr_receive}\;
      \KwFillCl($a_{\alpha},b$)\label{algline:pr_alpha_b_fill_cl}\;
      $C^+ = (a_{\alpha} \simeq b) 
      \land (\phi = 2 \Rightarrow p_{\lambda} = q_{\alpha}.p))$\label{algline:pr_cplus}\;
      $C^- = (a_{\alpha} \not\preccurlyeq b)$\label{algline:pr_cminus}\;
      \If{$\alpha \not\in N$}{
        \lIf{$C^+ \lor C^-$}{$N \leftarrow N \cup \{\alpha\}$}\;
        \lIf{$C^+$}{$M \leftarrow M + 1$}\;
        \uElse{
          \lIf{$a_{\alpha}[\lambda].(l$ or $cl) \not\preccurlyeq a_{\lambda}[\lambda].l$}
          {$H^{cl}_{\lambda} \leftarrow H^{cl}_{\lambda} + a_{\alpha}[\lambda].(l$ or $cl)$}\label{algline:pr_update_hcl}\;
          \KwFillCl{$a_{\alpha},a_{\lambda}$}\label{algline:pr_alpha_lambda_fill_cl}\;
          \KwCheckEntry{$\lambda,a_{\lambda},H_{\lambda}^{cl}$}\label{algline:pr_alpha_lambda_check_entry}\;
          let $\mu = \chi(a_{\alpha})$ in \label{algline:pr_negreply_1}\;
          \If{$a_{\alpha} \not\preccurlyeq a_{\lambda}$}{
            \uIf{$\mu < \chi(a_{\lambda})$}{
              $H_{\lambda}[\mu] \leftarrow H_{\lambda}[\mu] + a_{\lambda}[\mu].l$\label{algline:pr_mu_less_lambda1}\;
              $a_{\lambda}[\mu] \leftarrow a_{\alpha}[\mu]$\;
              \lIf{$\exists l \in H_{\lambda}[\mu],~ l \not\preccurlyeq_{l} a_{\lambda}[\mu].l$}{$a_{\lambda}[\mu].cl \leftarrow l$}\;
              $a_{\lambda} \leftarrow \nu^t(\lambda, a_{\lambda}, H_{\lambda}^{cl})$\label{algline:pr_mu_less_lambda2}\;
            }
            \Else{
              (we have $\chi(a_{\lambda}) = \mu$ and $a_{\lambda}[\mu].l = a_{\alpha}[\mu].l$)\label{algline:pr_mu_eq_chi}\;
              \uIf{$a_{\alpha}[\mu].s = a_{\lambda}[\mu].s$}{
                $a_{\lambda}[\mu].t \leftarrow a_{\alpha}[\mu].t$\;
                $a_{\lambda} \leftarrow \nu^t(\lambda, a_{\lambda}, H_{\lambda}^{cl})$\label{algline:pr_trial_inc}\;
              }
              \Else{
                $a_{\lambda}[\mu].s \leftarrow a_{\alpha}[\mu].s$\;
                $a_{\lambda} \leftarrow \nu^s(\lambda, a_{\lambda}, H_{\lambda}^{cl})$\label{algline:pr_negreply_2}\;
              }
            }     
          }
        }
      }
    }
    \lIf{$M = \gconst{n}-\gconst{f}$}{\Return{$ok$}}\label{algline:pr_return_nok}\;
    \lElse{\Return{$nok$}}\label{algline:pr_return_ok}\;
  }
  \caption{Proposer $\lambda$ - Preempting Routine}
  \label{alg:preemptingroutine}
\end{algorithm}


\begin{algorithm}
  \footnotesize
  \SetKwBlock{KwLoop}{loop}{end loop}
  \SetKwBlock{KwFunction}{function}{end function}
  \SetKwBlock{KwVariables}{variables}{}
  \SetKwBlock{KwRule}{rule}{end~rule}
  \SetKwFunction{KwSend}{send}
  \SetKwFunction{KwReceive}{receive}
  \SetKwFunction{KwBcast}{broadcast}
  \SetKwFunction{KwPropagate}{propagate}
  \SetKwFunction{KwDecide}{decide}
  \SetKwFunction{KwFillCl}{fill\_cl}
  \SetKwFunction{KwInput}{input}
  \SetKwFunction{KwTruncate}{truncate}
  \SetKwFunction{KwCheckEntry}{check\_entry}
  \KwFunction(\KwDecide{$b$ : tag, $p$ : command sequence}){
      $learned_{\alpha} \leftarrow p$\;
  }
  \KwFunction(\KwTruncate{$a$ : tag, $p$ : command sequence}){
    \lIf{$|p| > a[\chi(a)].s - 1$}{$p \leftarrow$ the suffix of $p$ of length $a[\chi(a)].s - 1$}\;
    \lIf{$|p| < a[\chi(a)].s - 1$}{append $nop$ to $p$ until $|p| = a[\chi(a)].s - 1$}\;
  }
  \caption{Generalized Paxos - Procedure \texttt{decide} and \texttt{truncate}, acceptor $\alpha$}
  \label{alg:gpaxos_decide}
\end{algorithm}

\begin{algorithm}
  \footnotesize
  \SetKwBlock{KwLoop}{loop}{end~loop}
  \SetKwBlock{KwBlock}{}{}
  \SetKwBlock{KwRule}{rule }{end~rule}
  \SetKwBlock{KwFunction}{function}{end~function}
  \SetKwBlock{KwVariables}{variables}{}
  \SetKwFunction{KwSend}{send}
  \SetKwFunction{KwBcast}{broadcast}
  \SetKwFunction{KwFillCl}{fill\_cl}
  \SetKwFunction{KwUpdate}{update}
  \SetKwFunction{KwInput}{input}
  \SetKwFunction{KwTruncate}{truncate}
  \SetKwFunction{KwExec}{execute}
  \SetKwFunction{KwPR}{PR}
  \KwLoop({As long as $\Theta_{\lambda} = \gconst{true}$}){
    $p^* \leftarrow p_{\lambda}$\;
    $cmd \leftarrow \KwInput()$\label{algline:gpaxos_loop_read_proposal}\;
    $a_{\lambda} \leftarrow \nu^s(\lambda,a_{\lambda},H_{\lambda}^{cl})$\;
    \KwTruncate{$a_{\lambda}$, $p^*$}\;
    [Ph. 1]\;
    $p_{\lambda} \leftarrow p^* \circ cmd$\;\label{algline:gpaxos_loop_append1}\;
    $\forall \alpha \in \Pi$, \KwSend{$\alpha, \langle p1a, \lambda, a_{\lambda} \rangle$}\label{algline:gpaxos_loop_bcast_p1a}\;
    \uIf{ \KwPR{$1$} returns $nok$}{
      \uIf{$\chi(a_{\lambda})$ or $a_{\lambda}[\chi(a_{\lambda})].l$ has changed}{ \KwTruncate{$a_{\lambda}$,$p^*$}\;}
      go to [Ph. 1]\;
    }\;
    [Ph. 2]\;
    let $\mu = \chi(a_{\lambda})$, and $\Gamma$ be the set of non-null proposals $r_{\alpha}[\mu]$\label{algline:gpaxos_loop_p2_begin}
    received at the end of [Ph. 1] in\;
    \uIf{$\Gamma \neq \emptyset$}{
      \uIf{$\forall x,y \in \Gamma, \chi(x.a) = \chi(y.a) = \mu \land x.a[\mu].l = y.a[\mu].l = a_{\lambda}[\mu].l$}{
        $\Gamma_0 \leftarrow \{ (a,p) \in \Gamma ~|~ a = 
        \max\left(b | \exists q, (b,q) \in \Gamma, b[\mu].s = a_{\lambda}[\mu].s = |q| \right) \} $\;
        \uIf{$\Gamma_0$ is not empty}{
          let $p_{max}$ be the maximum (lexicographically) command history in $\Gamma_0$\;
          $p_{\lambda} \leftarrow p_{max} \circ cmd$\;\label{algline:gpaxos_loop_append2}\;
        }\lElse{$p_{\lambda} \leftarrow p^* \circ cmd$}\;
      }\lElse{$p_{\lambda} \leftarrow p^* \circ cmd$}\;
    }\lElse{$p_{\lambda} \leftarrow p^* \circ cmd$}\;
    $\forall \alpha \in \Pi$, \KwSend{$\alpha, \langle p2a, \lambda, a_{\lambda}, p_{\lambda} \rangle$}\label{algline:gpaxos_loop_bcast_p2a}\;
    \uIf{ \KwPR{$1$} returns $nok$}{
      \uIf{$\chi(a_{\lambda})$ or $a_{\lambda}[\chi(a_{\lambda})].l$ has changed}{\KwTruncate{$a_{\lambda}$,$p^*$}\;}
      go to [Ph. 1]\;
    }\;
    $\forall \alpha \in \Pi$, \KwSend{$\alpha, \langle decision, \lambda, a_{\lambda}, p_{\lambda} \rangle$}\label{algline:gpaxos_loop_bcast_dec}\;
  }
  \caption{Generalized Paxos -  Main loop, proposer $\lambda$}
  \label{alg:gpaxos_proposer}
\end{algorithm}

\end{document}